\newtheorem{theorem}{Theorem}[section]
\newtheorem{lemma}[theorem]{Lemma}
\newtheorem{definition}[theorem]{Definition}
\newtheorem{corollary}[theorem]{Corollary}
\newenvironment{customthm}[1]
  {\innercustomthm}
  {\endinnercustomthm}
\DeclareMathOperator*{\argmax}{arg\,max}
\DeclareMathOperator*{\argmin}{arg\,min}
\newcommand{\p}{\mathbf{p}}
\newcommand{\alloc}{\mathbf{a}}
\newcommand{\x}{\mathbf{x}}
\newcommand{\eq}{\mathsf{EQ}}
\newcommand{\E}{\mathbb{E}}
\newcommand{\on}{\text{on}}
\newcommand{\off}{\text{off}}
\renewenvironment{abstract}
 {\small
  \begin{center}
  \bfseries \abstractname\vspace{-.5em}\vspace{0pt}
  \end{center}
  \list{}{
    \listparindent 1.5em%
    \setlength{\leftmargin}{8mm}%
    \itemindent    \listparindent
    \setlength{\rightmargin}{\leftmargin}%
  }%
  \item\relax}
 {\endlist}
\author{
  Alon Eden\\
  Hebrew University of Jerusalem\\
  \texttt{alon.eden@mail.huji.ac.il}
  \and
  Gary Qiurui Ma\\
  Harvard Univeristy\\
  \texttt{qiurui\char`_ma@g.harvard.edu}
  \and
  David C. Parkes \\
  Harvard University \\
  \texttt{parkes@eecs.harvard.edu}
}
\title{Platform Equilibrium: Analyzing Social Welfare in Online Market Places\thanks{The work of A. Eden was supported by the Israel Science Foundation (grant No. 533/23).}}
\date{}
\begin{document}
\maketitle

\begin{abstract}
We introduce the theoretical study of a \emph{Platform Equilibrium} in a market with unit-demand buyers and unit-supply sellers. Each seller can join a platform and transact with any buyer or remain off-platform and transact with a subset of buyers whom she knows. Given the constraints on trade, prices form a competitive equilibrium and clears the market. The  
platform charges a \emph{transaction fee} to all on-platform sellers, in the form of a fraction of on-platform sellers' price. The platform chooses the fraction to maximize revenue. A {\em Platform Equilibrium} is a Nash equilibrium of the game where each seller decides whether or not to join the platform, balancing the effect of a larger pool of buyers to trade with, against the imposition of a transaction fee.

Our main insights are:(i) In homogeneous-goods markets, pure equilibria always exist and can be found by a polynomial-time algorithm; (ii) When the platform is unregulated, the resulting Platform Equilibrium guarantees a tight $\Theta(\log (\min\{m,n\}))$-approximation of the optimal welfare in homogeneous-goods markets, where $n$ and $m$ are the number of buyers and sellers respectively; (iii) Even light regulation helps: when the platform's fee is capped at $\alpha\in[0,1)$, the price of anarchy is $(2-\alpha)/(1-\alpha)$ for general markets. For example, if the platform takes 30\% of the seller's revenue, a rather high fee, our analysis implies the welfare in a Platform Equilibrium is still a $0.412$-fraction of the optimal welfare. Our main results extend to markets with multiple platforms, beyond unit-demand buyers, as well as to sellers with production costs.

\end{abstract}

\section{Introduction}
\label{sec:intro}

During the COVID-19 crisis, entire populations were put under stay-at-home mandates and many businesses closed due to government restrictions. Nonetheless, some businesses weathered the situation well, even reporting a positive impact~\citep{bloom2021impact}. One of the factors that has been credited with alleviating the issue is internet economy. Specifically, online platforms such as Amazon and DoorDash were able to bridge the gap created by consumers' inability to directly transact with firms~\citep{raj2020covid}. At the same time, platforms used their increased market power to set high fees for merchants, leaving some  with very low or even negative profit margins~\citep{mckinsey2021}. As a response, states like New York and California, imposed caps on commissions, prompting some platforms to shift fees towards customers.\footnote{\url{https://www.protocol.com/delivery-commission-caps-uber-eats-grubhub}} We see that platforms have an increasing role in facilitating trade, but are also seen to act strategically in setting fees in order to maximize their gain. In this paper,  we introduce a new  model that  gives a  clean theoretical framework to  explain the interplay between the two, and provide insights into how platform regulation can play a role.

We consider a market with a set of \emph {unit-demand} buyers and \emph{unit-supply} sellers. Each buyer has a set of sellers they are able to directly transact with off-platform. Each seller can join a trading platform to transact with any buyer. Given the transaction constraints dependent on sellers' decisions, prices and transactions are induced by a competitive (Walrasian) equilibrium. For a seller, joining the platform weakly increases her price as a result of improved access to buyers, but she must also pay a {\em transaction fee} to the platform, which takes the form of a fraction of her revenue. The fraction is between 0 and 1, and chosen by the platform. Given a transaction fee fraction, a set of sellers each buyer can trade with off-platform, and buyers' valuations, a platform game is defined where each seller {simultaneously}
chooses whether or not to join the platform (or the probability with which they join). We name the Nash equilibrium that is formed in this game a {\em Platform Equilibrium}. We define the \emph{social welfare} as the sum of transacting buyers' valuations, and \emph{optimal welfare} as the social welfare when all sellers join the platform. 

We first study the existence of pure Nash equilibria in this game, and how to compute an equilibrium efficiently. We then study ratio of social welfare to optimal welfare, when the platform sets the transaction fee fraction to maximize its own revenue in the induced Platform Equilibrium. We continue to analyze the effect of regulation on social welfare, modelled by a cap on the transaction fee fraction. Finally, we extend our analysis to markets with multiple platforms, and beyond unit-demand buyers, and also consider sellers with production costs.

\subsection{Results and Techniques}
\label{sec:results}

\paragraph{Existence of pure Nash equilibrium and computation.}
In Section~\ref{sec:existence_of_pure}, we explore the existence of pure Platform Equilibria, in the platform game under a fixed transaction fees and market conditions. Throughout the paper, we assume the market clears at the maximum competitive prices. {This reflects a model in which sellers have  market power, so that it is sellers and not buyers that are, in effect, setting prices for trades (subject to the standard requirements of competitive equilibrium, which require balance of supply and demand).} 
As shown in Appendix~\ref{app:prices-min}, some of our results cease to hold with other choices for
competitive equilibrium prices, such as the minimum competitive equilibrium prices. We show when a buyer values all sellers' items equally, also known as the {\em homogeneous-goods} case, pure Platform Equilibrium always exist, and can be found by a simple procedure. The existence of a pure equilibrium is no longer guaranteed in general markets.

\begin{customthm}{1}[Informal Version of Proposition~\ref{prop:no_pure} and Theorems~\ref{thm:PE_pure}]\label{thm:informal_PE_pure}
    In any homogeneous-goods market, a pure Platform Equilibrium exists for any transaction fee fraction and can be found in polynomial time by Algorithm~\ref{alg:PE_pure}. In contrast, there exists general markets where pure Equilibrium does not exist for some values of transaction fee fraction.
\end{customthm}

The proof of this result is involved and draws upon the unique structure for homogeneous-goods market. \citet{kranton2000competition} defines the notion \emph{opportunity path}, which alternates between buyers and sellers to capture competitive prices for sellers' items. We use this combinatorial structure to decompose social welfare into optimal welfare achieved by sellers off-platform, and the remaining on-platform component. This in turn offers a simple expression for sellers' on-platform prices. Finally, we reason combinatorially to show as the transaction fee fraction lowers from 1 to 0, Algorithm~\ref{alg:PE_pure} can find pure equilibrium with any number of on-platform sellers.

The above theorem implies different ways to study social welfare in homogeneous-goods and general markets. In the former, we can use pure equilibria to bound platform revenue and social welfare. In the latter we need to reason about mixed equilibria when studying the effect of a platform on the market. 

\vspace{0.1in}

After studying the existence and computational aspects of Platform Equilibria, in Sections~\ref{sec:poa_rev_max} and \ref{sec:poa-regulated}, we analyze how the platform affects the social welfare in the market. We consider a platform setting the transaction fee fraction strategically to maximize its own revenue, and the sellers respond by playing a Platform Equilibrium. Since there can be multiple Platform Equilibria, we assume the platform can use its market power 
to choose a favorable equilibrium, given its choice of transaction fee fraction. We stress our results for Section~\ref{sec:poa_rev_max} and \ref{sec:poa-regulated} are proved for mixed as well as pure equilibrium.
\vspace{0.1in}

\noindent\textbf{Social welfare with an unregulated platform.}
As a first study, in Section~\ref{sec:poa_rev_max}, we consider the case where the platform is unregulated in setting fees. Let $n$ and $m$ be number of buyers and sellers, respectively. In a homogeneous-goods market, the obtained welfare can be as low as $O(1/\log (\min\{n,m\}))$-fraction of the optimal welfare, and this is tight. We show tightness by showing the platform is able to extract $\Omega(1/\log (\min\{n,m\}))$-fraction of optimal welfare as revenue. This is done by analyzing different possible market structures, and showing how the  platform extracts the desired revenue target in each one. As platform revenue provides a lower bound on the welfare, this yields the desired bound. In the proof, we utilize the pure equilibria finding algorithm to show how to choose a fee and a corresponding equilibrium to extract enough revenue (recall that we assume the platform can select a desirable equilibrium). The platform might choose a different fee and possibly mixed equilibrium to extract even more revenue, which implies the revenue guarantee of the platform, and as a result, the welfare guarantee of the buyers and sellers.

\begin{customthm}{2}[Informal Version of Theorem~\ref{thm:poa_upper_bound_homo} and Theorems~\ref{thm:poa_lower_bound_homo}]\label{thm:informal_poa_upper_bound_homo}
    When the  platform is unregulated, there exists a homogeneous-goods market in which the Platform Equilibrium is $O(1/\log(\min\{n,m\}))$-fraction of the optimal welfare. For every homogeneous-goods market, {the optimal fee for} a revenue-maximizing platform results in a Platform Equilibrium that guarantees at least $\Omega(1/\log(\min\{n,m\}))$-fraction of the optimal welfare.
\end{customthm}

The above result shows even though the platform is maximizing its own revenue, we still get welfare guarantees as a byproduct. However, this guarantee is modest, and only for a very restrictive market setting. For general markets, the prospect is worse: there exists a market where the Platform Equilibrium is only $O(1/\min\{n,m\})$-fraction of the optimal welfare. This motivates us to study the effect of imposing regulations on the fees that a platform can charge, something that is typical in practice \citep{FeeCapNews}. As we show, even light regulation can provide much stronger welfare guarantees, and even for general markets.

\paragraph{Social welfare with fee cap.} In Section~\ref{sec:poa-regulated}, we present the main positive result of the paper. Let $\alpha$ be the transaction fee fraction the platform posts and Price of Anarchy (PoA) of Platform Equilibrium be the ratio of optimal welfare to social welfare across all markets. We study PoA when regulators cap the highest possible $\alpha$ in general unit-demand unit-supply markets. Since social welfare is larger for lower transaction fees, we bound the PoA assuming the platform posts fee $\alpha$ exactly at this cap. If the platform posts a lower fee for higher revenue, this bound still holds. This analysis does not assume the platform uses market power to select a particular Platform Equilibrium, and thus work for all equilibria. Concretely, we show the following.

\begin{customthm}{3}[Informal Version of Theorem~\ref{thm:pure_poa}, \ref{thm:mixed_poa} and \ref{thm:poa_tight}]\label{thm:informal_poa}
    For any fixed fee $\alpha\in [0,1)$, the Price of Anarchy of Platform Equilibrium is at most $\frac{2-\alpha}{1-\alpha}$. Moreover, for every $\alpha$, there is a market for which the ratio of optimal welfare to social welfare is indeed $\frac{2-\alpha}{1-\alpha}$, and sellers use pure strategies.
\end{customthm}

The theorem implies if the platform's fee is capped at $30\%$, a quantity higher than most real world delivery platform fees as shown in Table~\ref{table:platforms and their coommission rate}, the resulting social welfare in every Platform Equilibrium is at least $41\%$ of the optimal welfare. Without the platform, the obtained welfare can be arbitrarily low.

As discussed above, a pure Platform Equilibrium might not exist for general markets. Thus, for this result we also analyze PoA when sellers employ mixed strategies.\footnote{As the strategy space is compact and the game is finite, a mixed Platform Equilibrium is guaranteed to exist.} Our proof takes a mixed Platform Equilibrium $\mathbf{x}=(x_1,\ldots,x_m)$, where $x_i$ is the probability seller $i$ joins the platform, and builds a
corresponding \emph{Bayesian game} for theoretical analysis. In this Bayesian game, each seller can always trade with its off-platform buyers in the original game, but with probability $x_i$ it can trade with all buyers. Each seller needs to decide whether to join the platform, and pay an $\alpha$ fee to the platform, or stay off the platform, \emph{before} it knows whether it can trade with all buyers. We show that if $\mathbf{x}$ is an equilibrium in the original game, then no sellers joining is an equilibrium of the new Bayesian game, and the expected welfare is at least $\frac{1-\alpha}{2-\alpha}$-fraction of the optimal welfare.

\begin{table}[t]
\centering
\begin{tabular}{|c|c c c c|} 
 \hline
 Platforms & Amazon & UberEats & DoorDash & Grubhub\\ [0.5ex] 
 \hline
 Commission Rate & 8\%-20\% & 15\%-30\% & 15\%-30\% & 15\%-25\%  \\  
 \hline
\end{tabular}
\caption[Platforms and their commission rate in the US from 2021-2022.]{Platforms and their commission rate in the US from 2021-2022.\protect\footnotemark}
\label{table:platforms and their coommission rate}
\end{table}
\footnotetext{This data is gathered from platforms' US websites in 2023. Amazon referral fees are charged according to item categories. 8\%-20\% for all categories except `Amazon device accessories'.
\url{https://sell.amazon.com/pricing}. For UberEats, DoorDash and Grubhub see \url{https://merchants.ubereats.com/us/en/pricing/},\url{https://get.doordash.com/en-us} 
and \url{https://get.grubhub.com/products/marketplace/}}

\paragraph{Generalizations.} Finally, we extend our model and results beyond the single platform unit-demand and zero cost setting. Section~\ref{sec:multi_platform} proves the Price of Anarchy results under regulation still hold in a market where multiple platforms compete for buyers and sellers. In Section~\ref{sec:seller_with_cost}, we show when sellers have production costs, the Price of Anarchy results extend naturally, with the results factoring out the cost. In Section~\ref{sec:beyond-ud}, we extend our results beyond unit-demand and consider additive-over-partition matroids valuation. That is, for each buyer, sellers are partitioned into a set of categories, and the buyer values at most a {\em capacity} number of sellers from each category. This generalizes both unit-demand and additive valuations, as well as $k$-demand valuations where a buyer is additive over their top $k$ items (e.g., in~\cite{BergerEF20,ZhangC20}).

\subsection{Related Work}
There are three works that use models similar to ours to examine how an online platform facilitates trade between buyers and sellers for revenue. \citet{birge2021optimal} study the optimal commission and subscription fee structure in a bipartite market. Also in a bipartite market, \citet{banerjee2017segmenting} consider a platform matching buyers to sellers for revenue. We share with these two works the idea that prices and trades are shaped by competitive forces endogeneous to the market, rather than being directly set by the platform. The main difference is that in these two works platform is the only venue where buyers and sellers interact, while we emphasize on the off-platform trading opportunities. This distinction allows us to model sellers' decision processes to join the platform. In fact, if off-platform options are not present in our model, sellers will always join the platform, resulting in arbitrarily high platform fees, while social welfare remains equal to optimal welfare. These two works are more general in other ways; for instance each node on the bipartite graph represents a continuum of buyers or sellers with varying valuations or costs. The third related work is \citep{d2024disrupting}, which models off-platform options. However, their focus is on platform matching buyers to sellers for revenue, whereas ours centers on platform setting fees.

Besides the three closest papers, our work build upon previous results on how network structure influence competitive equilibrium prices. \citet{kranton2000competition} study a homogeneous-goods buyer-seller market and relate the highest and lowest competitive prices to \emph{ opportunity paths} of trading agents. We make use of this structural result in the analysis in the present paper. \citet{elliott2015inefficiencies} generalizes the opportunity path argument to markets with general unit-demand valuations. \citet{kakade2004economic} further study the way in which equilibrium prices depend on the statistical structure of the buyer-seller bipartite graph. These works focus on price and network structure, and do not consider an intermediary as we do.   

Another strand of work that is related to ours is the study on network formation games with competitive equilibrium models of trade, subject to edge-constrained trading relationships. \citet{kranton2001theory} model buyers purchasing links to sellers in a bipartite economy before their valuations are realized, and show the Nash equilibrium of the network-formation game is fully efficient. Under a similar setting, \citet{even2007network} characterize all possible, equilibrium  bipartite graph topologies, based on the edge cost and the number of sellers and buyers, but do not discuss welfare. Similar to the two works, we also build on top of foundational studies on competitive equilibria in bipartite economies \citep{kelso1982job, gul1999walrasian}, and analyze the effect of market participants' incentives on network formation. But the platform-based emphasis leads our model to depart from theirs in various ways:
1) in our work, there is a revenue-optimizing platform that mediates trade and chooses transaction fees. In their work, sellers and buyers interact without a selfish mediator;  
2) as discussed in the previous paragraph, buyers and sellers in our work start with existing trading opportunities;
3) in our work, sellers acquire edges to all buyers simultaneously by joining the platform, and pay a fraction of the final competitive price for joining platform. In theirs, each edge has to be purchased individually with a fixed price per edge. 

Our work also relates to the broader literature that studies platform economics, which tends to ask questions about how to build network effects through subsidy under various forms of platform competition~\citep{caillaud2003chicken,armstrong2006competition,rochet2003platform}. Although our methods differ, we draw inspiration from this realm that a platform may subsidize one side of the market, while charging a fee to the other side (in our case, subsidizing buyers and charging sellers). For example, \citet{caillaud2003chicken}
study the  role  of platforms in matchmaking, with a platform that charges registration fees for entry and commission fees per transaction. Similarly, \citet{WangMELTZP23}, with registration and transaction fees, use simulation and reinforcement learning to investigate different regulatory approaches for platform prices to foster a more resilient economic system.

\section{Preliminaries}
\label{sec:prelims}

We adopt the buyer-seller network model of \citet{kranton2000competition}. There is a set of $n$ buyers,
 $B = \{1,\ldots, n\}$, and $m$ differentiated 
sellers, $S=\{1,\ldots, m\}$. Each seller has a single product to sell, and 
each buyer $i$ has a {\em unit-demand valuation}, $v_i$. Buyer $i$'s value for seller $j$'s product is $v_{ij}\geq 0$. A bipartite graph $G = \{g_{ij}\}_{\substack{i\in B\\ j\in S}}$ models which buyers and sellers can directly transact, without the use of the platform, where  
\begin{eqnarray}
    g_{ij} = \begin{cases} 1 \quad &\mbox{Buyer } i \mbox{ can directly transact with seller } j\\
                           0 \quad &\mbox{Otherwise}
            \end{cases}.
\end{eqnarray}

For buyer $i$,  $N(i) = \{j \ : \ g_{ij}=1\}$ denotes the list of sellers linked to the buyer in $G$. We define a competitive (Walrasian) equilibrium in the buyer-seller network model.
\begin{definition}[Competitive Equilibrium]
    A {\em competitive equilibrium} for the buyer-seller network $G$ is a tuple $(\p,\alloc)$. $\p=(p_1,\ldots, p_m)$ are none-negative item prices, $\alloc = \{a_{ij}\}_{\substack{i\in B\\ j\in S}}\in \{0,1\}^{n\times m}$ is an allocation of the goods to the buyers, and:
    \begin{itemize}
    \item Transactions must respect links: $a_{ij}\leq g_{ij}\quad \forall i\in B\ j\in S$.
    \item Buyers are allocated at most one good: $\sum_{j} a_{ij} \leq 1 \quad \forall i\in B$.
    \item Goods are sold at most once: $\sum_{i} a_{ij} \leq 1\quad \forall j\in S$.
    \item Buyers gets their most preferred outcome:  $u_i(\p,\alloc) \geq v_{ij} - p_j \quad \forall i\in B\ j\in S$. where 
    \begin{eqnarray}
    u_i(\p,\alloc) = \sum_{j}a_{ij}(v_{ij}-p_j),    \label{eq:buyer_util}
    \end{eqnarray}
    is buyer $i$'s utility for the allocation.
    \item Buyers have non-negative utility: $u_i(\p,\alloc)\geq 0\quad \forall i\in B$.
    \item Unassigned goods have price $0$.
    \end{itemize} 
\end{definition} \label{def:comp_eq}

A competitive equilibrium is a canonical model of the steady state in a market, capturing 
the notion of prices that are set such that supply meets demand. It follows from standard existence results~\citep{kelso1982job} that a competitive equilibrium always exists in a unit-demand buyer-seller network (a missing edge can be represented as $v_{ij}=0$). Moreover, competitive equilibria have a number of desirable properties. 
\begin{theorem}[First Welfare Theorem]
    In a competitive equilibrium, the social welfare is maximized with respect to 
the set of allocations that respect 
the transaction constraints posed by $G$.\label{thm:first_welfare} 
\end{theorem}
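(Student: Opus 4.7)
The plan is to show the competitive equilibrium allocation $\alloc$ maximizes social welfare by comparing it against an arbitrary feasible alternative allocation $\alloc'$ that respects the graph $G$ and the unit-demand/unit-supply constraints. The comparison will leverage the buyer-optimality condition in the definition of competitive equilibrium, together with the pricing condition that unassigned goods carry price zero.

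First, I would invoke the buyer-optimality condition. For each buyer $i$, the definition gives $u_i(\p,\alloc) \geq v_{ij} - p_j$ for every $j$, and also $u_i(\p,\alloc) \geq 0$. Since the alternative allocation $\alloc'$ assigns buyer $i$ at most one good, a pointwise comparison yields $u_i(\p,\alloc) \geq \sum_j a'_{ij}(v_{ij} - p_j)$. Summing over all buyers gives
\[
\sum_i u_i(\p,\alloc) \;\geq\; \sum_{i,j} a'_{ij}\,v_{ij} \;-\; \sum_j p_j \sum_i a'_{ij}.
\]

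Second, I would rewrite the left-hand side as welfare minus seller revenue under $\alloc$, using \eqref{eq:buyer_util}: $\sum_i u_i(\p,\alloc) = \sum_{i,j} a_{ij} v_{ij} - \sum_j p_j \sum_i a_{ij}$. The most delicate step (and what I expect to be the crux of the proof) is to apply the condition that unassigned goods have price zero: this guarantees $\sum_j p_j \sum_i a_{ij} = \sum_j p_j$, i.e., total revenue under $\alloc$ equals the sum of \emph{all} item prices, including those of unsold items (trivially). By contrast, for $\alloc'$ we only have $\sum_i a'_{ij} \leq 1$, so $\sum_j p_j \sum_i a'_{ij} \leq \sum_j p_j$. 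Substituting both facts into the inequality above, the $\sum_j p_j$ terms cancel, leaving $\sum_{i,j} a_{ij} v_{ij} \geq \sum_{i,j} a'_{ij} v_{ij}$.

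Since $\alloc'$ was an arbitrary feasible allocation respecting $G$, this establishes welfare maximality. The main subtlety lies in handling the price bookkeeping on the two sides: the unassigned-goods-have-price-zero condition is exactly what converts the inequality $\sum_j p_j \sum_i a'_{ij} \leq \sum_j p_j$ into a tight equality for $\alloc$, which is what lets the comparison collapse into a clean welfare inequality. Without that condition (or some substitute), the bookkeeping of prices on the sold versus unsold items would fail to match, and the argument would not go through.
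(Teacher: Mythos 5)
Your argument is correct and is the standard proof of the First Welfare Theorem: you sum the buyer-optimality inequalities against an arbitrary feasible allocation $\alloc'$, then cancel the price terms using non-negativity of prices (so $\sum_j p_j \sum_i a'_{ij} \leq \sum_j p_j$) together with the condition that unassigned goods have price zero (so $\sum_j p_j \sum_i a_{ij} = \sum_j p_j$). The paper states this theorem without proof, treating it as a classical fact, so there is no in-paper argument to compare against; your identification of the zero-price-for-unsold-goods condition as the crux of the bookkeeping is exactly right.
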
 

\begin{theorem}[Second Welfare Theorem \citep{gul1999walrasian}]
    Let $(\p, \alloc)$ and $(\p',\alloc')$ be two competitive equilibria of a buyer-seller network $G$, then $(\p,\alloc')$ is also a competitive equilibrium (and so is $(\p',\alloc)$).
\end{theorem}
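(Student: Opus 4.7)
The plan is to verify each competitive-equilibrium condition for $(\p,\alloc')$ by piggy-backing on the First Welfare Theorem and a simple summing identity that relates total welfare, buyer utilities, and revenue. The feasibility conditions -- links respected, at most one good per buyer, each good sold at most once -- hold immediately, since $\alloc'$ is the allocation of a CE. So the substantive content is showing (i) every buyer receives a utility-maximizing bundle at prices $\p$, and (ii) any good left unsold under $\alloc'$ is priced at zero.

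For the demand condition (i), I would write, for any prices $\p$, $u_i^*(\p) := \max\{0,\max_{j\in N(i)}(v_{ij}-p_j)\}$, the maximum utility buyer $i$ can extract at those prices, and note that by definition $u_i(\p,\alloc') \le u_i^*(\p)$ (a buyer cannot exceed her demand). Let $W^*$ denote the optimal welfare on $G$; by Theorem~\ref{thm:first_welfare} both $\alloc$ and $\alloc'$ achieve $W^*$. Applying the CE identity ``welfare $=$ buyer utilities $+$ revenue'' twice gives
\[
W^* \;=\; \sum_i u_i^*(\p) + \sum_{j\in S_{\alloc}} p_j \;=\; \sum_i u_i(\p,\alloc') + \sum_{j\in S_{\alloc'}} p_j,
\]
where $S_{\alloc},S_{\alloc'}$ are the sets of goods sold in the respective allocations, and I have used that $u_i(\p,\alloc)=u_i^*(\p)$ because $(\p,\alloc)$ is a CE. Rearranging yields $\sum_{j\in S_{\alloc'}} p_j - \sum_{j\in S_{\alloc}} p_j = \sum_i\bigl(u_i^*(\p) - u_i(\p,\alloc')\bigr) \ge 0$.

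Next I would use the fact that any $j\in S_{\alloc'}\setminus S_{\alloc}$ is unsold in $\alloc$, so by the CE property of $(\p,\alloc)$ has $p_j=0$. Hence $\sum_{j\in S_{\alloc'}} p_j = \sum_{j\in S_{\alloc}\cap S_{\alloc'}} p_j$, and the inequality above collapses to $\sum_{j\in S_{\alloc}\setminus S_{\alloc'}} p_j \le 0$, forcing $p_j=0$ for every $j$ sold in $\alloc$ but not in $\alloc'$. This settles condition (ii): an unsold good under $\alloc'$ is either unsold in $\alloc$ (and already has $p_j=0$) or lies in $S_{\alloc}\setminus S_{\alloc'}$ (just shown to have $p_j=0$). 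Moreover, the chain of inequalities is now tight, so $u_i(\p,\alloc')=u_i^*(\p)\ge 0$ for every $i$, giving (i) and the non-negativity of utility. The case $(\p',\alloc)$ is symmetric.

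I expect the main subtlety to be precisely the step converting an aggregate price inequality into a per-good zero-price statement: $S_{\alloc}$ and $S_{\alloc'}$ need not coincide (for example, two indistinguishable goods can be swapped between allocations), so one cannot naively match them good-by-good. The zero-price property of unsold items in the reference CE $(\p,\alloc)$ is exactly what bridges this gap, and it is the only nontrivial economic input beyond the First Welfare Theorem. No substitutes or matroid structure is needed; the argument relies purely on unit-demand preferences and the basic CE definition.
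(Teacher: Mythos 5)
Your proof is correct. Note that the paper does not prove this statement at all --- it imports the Second Welfare Theorem from \citet{gul1999walrasian} as a black box --- so there is no in-paper argument to compare against; what you have written is a valid, self-contained, elementary derivation of the cited result. Your route is the classical accounting-identity argument: in any feasible allocation, realized welfare splits as buyer utilities plus seller revenue; the First Welfare Theorem pins both realized welfares to the common optimum $W^\star$; and the CE property of $(\p,\alloc)$ turns the resulting aggregate inequality $\sum_i\bigl(u_i^*(\p)-u_i(\p,\alloc')\bigr)=\sum_{j\in S_{\alloc'}}p_j-\sum_{j\in S_{\alloc}}p_j$ into a chain that must be tight term by term. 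You correctly identified and handled the one genuinely delicate step, namely that $S_{\alloc}$ and $S_{\alloc'}$ need not coincide: goods sold in $\alloc'$ but not in $\alloc$ have price zero by the zero-price-if-unsold clause of the reference equilibrium, which collapses the price difference to $-\sum_{j\in S_{\alloc}\setminus S_{\alloc'}}p_j\le 0$ and, by non-negativity of prices, forces those prices to zero as well; tightness then delivers both the demand condition and non-negative utilities for $(\p,\alloc')$. The only cosmetic caveat is that the paper's demand condition is written as $u_i(\p,\alloc)\ge v_{ij}-p_j$ for \emph{all} $j\in S$ rather than only $j\in N(i)$; under the paper's stated convention that a missing edge is treated as $v_{ij}=0$, your restriction of the max to $N(i)$ together with $u_i\ge 0$ and $p_j\ge 0$ covers this, but it is worth one sentence to say so explicitly.
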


The Second Welfare Theorem implies that prices have the property of either forming a competitive equilibrium with any social-welfare optimal allocation, or forming a competitive equilibrium with none of them. We refer to prices $\p$ that are part of a competitive equilibrium as \emph{competitive prices}. It is also well known that 
 competitive prices   have a lattice structure.
\begin{theorem}[Lattice structure for competitive prices \citep{gul1999walrasian}]
    Let $\p_1$ and $\p_2$ be competitive prices, then $\p_1 \vee \p_2$ and $\p_1 \wedge \p_2$ are also  competitive prices, where $\vee$ is the coordinate-wise maximum and $\wedge$ is the coordinate-wise minimum.\label{thm: price_lattice}
\end{theorem}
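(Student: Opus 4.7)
The plan is to fix a single allocation compatible with both $\p_1$ and $\p_2$, and then verify the competitive-equilibrium conditions for $\p^{\vee}:=\p_1\vee\p_2$ and $\p^{\wedge}:=\p_1\wedge\p_2$ coordinate-wise against that allocation. First I would invoke the Second Welfare Theorem to pick a single welfare-optimal allocation $\alloc^*$ such that both $(\p_1,\alloc^*)$ and $(\p_2,\alloc^*)$ are competitive equilibria. Verifying $(\p^{\vee},\alloc^*)$ and $(\p^{\wedge},\alloc^*)$ then reduces to checking the price-side axioms, since the three allocation-side axioms carry over trivially.

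The easy price checks come next. Any item $j$ unassigned by $\alloc^*$ has $p_{1,j}=p_{2,j}=0$, hence $p^{\vee}_j=p^{\wedge}_j=0$. For a buyer $i$ assigned item $j$, $v_{ij}-p^{\vee}_j=\min(v_{ij}-p_{1,j},\,v_{ij}-p_{2,j})\geq 0$ and $v_{ij}-p^{\wedge}_j\geq v_{ij}-p_{1,j}\geq 0$. For an unassigned buyer $i$, the two starting equilibria force $v_{ik}\leq\min(p_{1,k},p_{2,k})\leq p^{\vee}_k,p^{\wedge}_k$ for every $k$, so zero utility remains optimal. The substantive check is the ``most preferred outcome'' condition for an assigned buyer $i$ with item $j$ against any alternative $k$: the two starting equilibria yield $p_{l,j}-p_{l,k}\leq v_{ij}-v_{ik}$ for $l\in\{1,2\}$, and hence $\max(p_{1,j}-p_{1,k},\,p_{2,j}-p_{2,k})\leq v_{ij}-v_{ik}$. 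I would then combine this with the two elementary inequalities
\[
\max(a_1,a_2)-\max(b_1,b_2)\leq\max(a_1-b_1,a_2-b_2), \quad \min(a_1,a_2)-\min(b_1,b_2)\leq\max(a_1-b_1,a_2-b_2),
\]
applied with $a_l=p_{l,j}$ and $b_l=p_{l,k}$, to conclude $p^{\vee}_j-p^{\vee}_k\leq v_{ij}-v_{ik}$ and $p^{\wedge}_j-p^{\wedge}_k\leq v_{ij}-v_{ik}$, exactly the required optimality conditions at the two new price vectors.

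The main (minor) obstacle is verifying the two algebraic inequalities above, which I would dispatch by a case analysis on the relative orderings of $a_1$ vs.\ $a_2$ and of $b_1$ vs.\ $b_2$ (four cases, each a one-line check). The conceptually important step is the initial appeal to the Second Welfare Theorem; without first aligning the allocations backing $\p_1$ and $\p_2$, the ``unassigned goods have price zero'' axiom could fail to be checkable coordinate-wise when $\alloc_1\neq\alloc_2$.
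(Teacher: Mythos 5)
The paper does not prove this theorem at all: it is imported as a known result of \citet{gul1999walrasian}, so there is no in-paper proof to compare against. Your argument is a correct, self-contained derivation. The two load-bearing steps are both sound: (i) invoking the Second Welfare Theorem to support $\p_1$ and $\p_2$ by a single optimal allocation $\alloc^*$, which is exactly what makes the ``unassigned goods have price zero'' axiom checkable (an item unassigned by $\alloc^*$ has price zero under \emph{both} original price vectors, hence under the join and the meet); and (ii) the reduction of the ``most preferred outcome'' condition to the inequalities $\max(a_1,a_2)-\max(b_1,b_2)\leq\max(a_1-b_1,a_2-b_2)$ and $\min(a_1,a_2)-\min(b_1,b_2)\leq\max(a_1-b_1,a_2-b_2)$, both of which follow by taking the index achieving the outer $\max$ (resp.\ the outer $\min$ of the $b$'s) and comparing against that same coordinate. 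The remaining checks (non-negativity of utility for matched buyers, and that unmatched buyers still demand nothing since $v_{ik}\leq\min(p_{1,k},p_{2,k})$) are handled correctly. This is essentially the standard direct verification of the lattice property for unit-demand Walrasian prices, and nothing in it conflicts with the paper's definitions.
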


As a result, there are {\em minimum} and {\em maximum} competitive prices, denoted $\underline{\p}$ and $\overline{\p}$ respectively,
with item-wise minimum and item-wide maximum prices
 denoted $\underline{p}_j$ and $\overline{p}_j$. 
For  $S$ sellers, $B$  buyers, network $G$,
 and buyer values $\mathbf{v}$, we use $W(S,B,\mathbf{v},G)$ to denote the 
{\em optimal welfare} from all feasible transactions between  sellers $S$ and  buyers $B$. When clear from context, we omit $\mathbf{v}$ from the notation. We will make use 
of the following characterization result.
\begin{theorem}[Characterization of competitive prices~\cite{gul1999walrasian}]
    The minimum and maximum competitive prices for an item $j$ have the following form:
\begin{align}
    \underline{p}_j &= W(S \cup \{j\}, B, G)-W(S, B, G), \label{eq:max_price}\\
    \overline{p}_j &= W(S, B, G) - W(S\setminus \{j\},B, G). \label{eq:min_price}
\end{align}

Here, $S \cup \{j\}$ denotes adding another copy of seller $j$ with all its edges to the market, and $S\setminus \{j\}$ is removing seller $j$ and all its connected edges from the market. The resulting graph changes correspondingly when adding or removing $j$ with all its edges, but for notation convenience we still use $G$ to denote the graph.
\end{theorem}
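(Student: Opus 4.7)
The plan is to view welfare maximization as a bipartite assignment LP and to recognize competitive equilibria as its primal-dual optimal pairs; the two formulas then fall out of standard LP sensitivity analysis. The primal is $\max \sum_{i,j} v_{ij} x_{ij}$ subject to $\sum_j x_{ij} \le 1$ for every buyer, $\sum_i x_{ij} \le 1$ for every seller, and $x \ge 0$, with missing edges encoded by $v_{ij} = 0$. The bipartite matching polytope is integral, so the primal optimum equals $W(S,B,G)$. The dual is $\min \sum_i u_i + \sum_j p_j$ subject to $u_i + p_j \ge v_{ij}$ and $u_i, p_j \ge 0$. Complementary slackness translates the optimality conditions into the competitive-equilibrium axioms: $x_{ij}=1$ forces $u_i + p_j = v_{ij}$, matching the envelope identity $u_i = \max_{j'}(v_{ij'} - p_{j'})_+$; $p_j > 0$ forces $\sum_i x_{ij} = 1$, matching the rule that unsold items have price zero. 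Therefore the set of competitive prices equals the $\p$-projection of the dual optimal face, which by Theorem~\ref{thm: price_lattice} is a lattice with extremes $\underline{\p}$ and $\overline{\p}$.

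Next I invoke LP sensitivity. Fix seller $j$ and treat the right-hand side $c_j$ of the constraint $\sum_i x_{ij} \le c_j$ as a parameter. The primal value $W(c_j)$ is concave and piecewise-linear in $c_j$, and the range of optimal $p_j$ in the dual is exactly the subdifferential of $W$ at $c_j = 1$. Since $W(0) = W(S \setminus \{j\}, B, G)$, $W(1) = W(S, B, G)$, and $W(2) = W(S \cup \{j\}, B, G)$, concavity yields left derivative $W(1) - W(0) = W(S,B,G) - W(S\setminus\{j\},B,G)$ and right derivative $W(2) - W(1) = W(S \cup \{j\}, B, G) - W(S,B,G)$; these are, respectively, the maximum and minimum of $p_j$ over all dual optima, i.e. $\overline{p}_j$ and $\underline{p}_j$.

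As a self-contained cross-check for the upper bound $\overline{p}_j \le W(S,B,G) - W(S\setminus\{j\},B,G)$, I would reason combinatorially. Pick a max-price CE $(\overline{\p}, \alloc)$ and a welfare-optimal allocation $\alloc^*$ for $S \setminus \{j\}$. For every buyer $i$ assigned to some $j' \neq j$ in $\alloc^*$, utility maximization gives $u_i(\overline{\p}, \alloc) \ge v_{ij'} - \overline{p}_{j'}$. Summing and using that items unsold in $\alloc$ carry zero price yields $\sum_i u_i \ge W(S\setminus\{j\},B,G) - \sum_{j' \neq j,\, j' \text{ sold}} \overline{p}_{j'}$. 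Combining with the CE accounting identity $\sum_i u_i + \sum_{j' \text{ sold}} \overline{p}_{j'} = W(S,B,G)$ and cancelling the shared sum produces the inequality. The main obstacle is the matching lower bound, namely exhibiting a CE that actually attains $\overline{p}_j$: I would either read this off directly from the LP dual construction above, or construct the seller-optimal CE via a Kelso-Crawford-style ascending auction whose terminal prices coincide with the VCG payments to the sellers; the lattice theorem then guarantees the supremum is realized. The formula for $\underline{p}_j$ follows by the symmetric perturbation, adding a copy of $j$ rather than removing it.
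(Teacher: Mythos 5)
The paper offers no proof of this theorem to compare against: it is imported verbatim from \citet{gul1999walrasian} as a known characterization. Your route --- Shapley--Shubik assignment-LP duality plus sensitivity analysis --- is the classical way to establish it, the identification of competitive equilibria with primal--dual optimal pairs via complementary slackness is correct, and your combinatorial cross-check for $\overline{p}_j \le W(S,B,G)-W(S\setminus\{j\},B,G)$ is complete and sound. The gap is in the sensitivity step. For a concave value function $W(c_j)$, the facts $\overline{p}_j = \partial^- W(1)$ and $\underline{p}_j = \partial^+ W(1)$ together with concavity yield only the one-sided chain
\begin{equation*}
W(2)-W(1) \;\le\; \underline{p}_j \;\le\; \overline{p}_j \;\le\; W(1)-W(0),
\end{equation*}
because the left derivative at $c_j=1$ is bounded above by the chord slope over $[0,1]$ and the right derivative is bounded below by the chord slope over $[1,2]$. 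Your sentence ``concavity yields left derivative $W(1)-W(0)$'' asserts equality, which requires $W(c_j)$ to be \emph{linear} on $[0,1]$ and on $[1,2]$ (no interior breakpoints); that does not follow from concavity and is precisely where the work lies. It is true here, but for a reason you must supply: a vertex of the capacitated assignment polytope with a single fractional capacity $t\in(0,1)$ has entries in $\{0,1,t,1-t\}$ by total unimodularity, hence decomposes as $(1-t)x^0+t\,x^1$ with $x^0,x^1$ feasible at capacities $0$ and $1$, giving $W(t)\le(1-t)W(0)+tW(1)$; concavity gives the reverse.

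Relatedly, you yourself identify the attainment direction --- exhibiting a competitive equilibrium whose price for $j$ equals $W(S,B,G)-W(S\setminus\{j\},B,G)$ --- as ``the main obstacle,'' and then only gesture at two resolutions: reading it off the LP construction (which circles back to the linearity gap above) or a Kelso--Crawford ascending auction terminating at the seller-side VCG payments (not carried out, and itself a nontrivial theorem). Since this is exactly the hard half of the statement --- the half your cross-check does \emph{not} cover --- the proposal as written is an outline rather than a proof. To close it, either complete the total-unimodularity/linearity argument sketched above, or verify directly that the vector $p_j = W(S,B,G)-W(S\setminus\{j\},B,G)$ (respectively $p_j = W(S\cup\{j\},B,G)-W(S,B,G)$) supports the optimal allocation as a competitive equilibrium in the sense of Definition~\ref{def:comp_eq}.
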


We next present our framework for analyzing the effect of the presence of a 
platform on a buyer-seller network.

\subsection{Platform Equilibrium} 
\label{sec:platform_eq}

Consider a buyer-seller network defined on graph $G$. Without a platform, a competitive equilibrium $(\p, \alloc)$ is formed subject to $G$ (see Definition~\ref{def:comp_eq}), where $\p$ are  competitive prices that support a welfare-optimal allocation, $\alloc$, subject to $G$. We do not assume anything regarding $\p$ except that it is competitive price vector. 

The utility of a buyer is given by Eq.~\eqref{eq:buyer_util}, while the utility of a seller $j$ is the revenue, $p_j$ (which is necessarily zero if the seller does not trade). By the First Welfare Theorem (Theorem~\ref{thm:first_welfare}), the obtained welfare in the market that forms in the absence of a platform is $W(S,B,G)$.  

Consider now a platform that declares a {\em transaction fee fraction}, $\alpha \in [0,1]$. Given that a set of sellers $P\subseteq S$ joins the platform, a new network $\hat{G}=G(P)= \{\hat{g}_{ij}\}_{\substack{i\in B\\ j\in S}}$ is formed with the following connections,
\begin{eqnarray}
    \hat{g}_{ij} = \begin{cases} 1 \quad & \mbox{Seller } j\in P\\
                           g_{ij} \quad & \mbox{Otherwise}
            \end{cases}.
\end{eqnarray}

That is, the constraints faced by a seller that does not join the platform are unchanged, 
whereas a seller that joins the platform can transact with any buyer.

Given the modified constraints on transactions, $\hat{G}$, a 
modified competitive equilibrium $(\hat{\p},\hat{\alloc})$ is formed, where $\hat{\alloc}$ is the optimal allocation constrained to $\hat{G}$, and $\hat{\p}$ are the supporting competitive prices. The utility of buyer $i$ in this competitive equilibrium 
 is $u_i(\hat{\p},\hat{\alloc})\geq 0$. The utility of a seller $j$ who does not join the platform is $\hat{p}_j$, while the utility of a seller $j$ who joins the platform is $(1-\alpha)\hat{p}_j$.\footnote{In reality, a seller joining the a  platform such as a food-delivery platform can still transact with a buyer off platform. 
But in the unit-supply unit-demand market, a seller has weakly no incentive to join the platform if it transacts directly with the buyer,  through an existing link. Therefore, we assume all sellers on platform pay transaction fee.} The platform's utility is $\alpha\cdot \sum_{j\in P}\hat{p}_j$. We assume the market clears with the maximum competitive prices, as it reflects a model where sellers have market power and in effect set prices for trades. That is, $\hat{p}_j=\overline{p}_j$. As we show in appendix~\ref{app:prices-min}, max price is meaningful to discuss welfare loss due to a selfish platform. 

For a seller $j$, and holding other sellers' strategy fixed, let $\p^{\on}$ and $\p^{\off}$ be the competitive prices when seller $j$ joins and not joins the platform, 
respectively. $j$ joins the platform if the utility from joining is more than that from not joining. That is, if $(1-\alpha)p^{\on}_j > p^{\off}_j$. If there is a tie in utility, the seller  may join or not. If the utility for joining is strictly smaller than not joining, the seller does not join. We can now define a Platform Equilibrium for
 pure strategies.
\begin{definition}[Pure Platform Equilibrium]
 Let $\p(P)$ denote the competitive prices formed when a set of sellers, $P$, join the platform, where for sellers who join, their prices are the maximum competitive price for the obtained network structure. The  set $P$ corresponds to  a \textit{Platform Equilibrium} if and only if,
    \begin{itemize}
        \item For each seller $j\in P$,  then $(1-\alpha)p^{\on}_j \geq p^\off_j$, for $\p^{\on} = \p(P)$ and $\p^{\off} = \p(P\setminus\{j\})$, and
        \item For each seller $j\notin P$, then $p^\off_j \geq (1-\alpha)p^\on_j$, for $\p^\off = \p(P)$ and $\p^\on = \p(P \cup \{j\})$.
    \end{itemize}
\end{definition}

In this way, a pure Platform Equilibrium is a Pure nash equilibrium of the game where the sellers can choose whether or not to join the platform. Similarly, we define the larger set of mixed Platform Equilibria.
\begin{definition}[Mixed Platform Equilibrium]
   Let $x_j\in [0,1]$ denote the probability
with    which seller $j$ joins the platform,
and assume that for sellers who join, their prices are the maximum 
competitive prices for the obtained network structure. The vector $\x=(x_1,\ldots, x_m)$ is a {\em Mixed Platform Equilibrium} if and only if, each seller $j$ maximizes their utility by choosing $x_j$ given $\x_{-j}$,
\begin{eqnarray}
    \E_{P\sim \x} \left[u_j(P,\alpha)\right] \geq \max\left(\E_{P\sim \x_{-j}} \left[u_j(P,\alpha)\right], \E_{P\sim \x_{-j}} \left[u_j(P\cup\{j\},\alpha)\right]\right),
\end{eqnarray}
where $u_j(P,\alpha)$ is the utility of seller $j$ when a set of sellers $P$ joins the platform, the platform charges transaction-fee $\alpha$, and the competitive prices are $\p(P)$.
\end{definition}

\subsection{Modeling the Effect of a Selfish Platform on social welfare} \label{sec:prelims-platform-effect}

We now present our model to analyze the effect of a selfish platform on social welfare. A Selfish platform sets a transaction fee $\alpha$ to maximize its revenue, assuming as before that the market clears with the maximum competitive prices. As there can be a plethora of Nash equilibria, let $\eq(S,B,\mathbf{v},G,\alpha)$ denote the set of equilibria defined by parameters $S,B,\mathbf{v}, G$, at transaction-fee $\alpha$. Further we assume the platform can use its market power to induce the equilibrium with highest revenue in $\eq(S,B,\mathbf{v},G,\alpha)$. The platform chooses $\alpha^{\star}$ such that
\begin{eqnarray}
    \alpha^{\star} \in \argmax_{\alpha}\max_{\x\, \in\, \eq(S,B,\mathbf{v},G,\alpha)}\E_{P\sim \x}\left[\alpha\cdot\sum_{j\in P} \p_j(P)\right]. \label{eq:platform_opt}
\end{eqnarray}

This is the transaction fee that maximizes the platform's revenue for the best possible Nash equilibrium in terms of revenue to the platform. If the platform is regulated in setting fees, and cannot set a fee higher than $\overline{\alpha}$, we simply add the constraint of $\alpha\le \overline{\alpha}$ to the $\argmax$ expression in the platform's optimization problem in Equation~\eqref{eq:platform_opt}.
In this way, the platform acts as a leader that optimizes for an optimistic Stackelberg Equilibrium in a Stackelberg game where both the platform and the sellers act strategically. We denote the set of Platform Equilibria that maximizes the platform's revenue as
\begin{eqnarray}
    \mathbf{X}^{\star} = \argmax_{\x\,\in\, \eq(S,B,\mathbf{v},G,\alpha^{\star})} \E_{P\sim\x}\left[\alpha^{\star}\cdot \sum_{j\in P}\p_j(P)\right].
\end{eqnarray}

We aim to quantify the efficiency of the resulting Platform Equilibrium. Let $k_{B,S}$ be the 
{\em complete bipartite graph between buyers and sellers},
 and let $W^{\star}(S,B,\mathbf{v})=W(S,B,\mathbf{v}, k_{B,S})$ be the \textit{optimal welfare}; i.e., the optimal social welfare in the case that there are no logistical constraints
and each buyer can  transact with any  seller. When clear from context, we denote the optimal welfare $W^{\star}(S,B,\mathbf{v})$ by $W^\star$. 
We use the  {\em price of anarchy}  to quantify the worst-case social welfare in a Platform Equilibrium. 
\begin{definition}[Price of Anarchy]\label{def:poa}
    The {\em price of anarchy} (PoA) of a Platform Equilibrium
is the worst case ratio between the optimal welfare to the lowest-welfare platform equilibrium that maximizes the platform's revenue for any market. Here the worst case is over all possible buyer-seller networks, and buyers' valuations.
\begin{eqnarray}
    PoA  = \max_{S,B,G,\mathbf{v}, \x\in \mathbf{X}^{\star}}\frac{W^\star(S,B,\mathbf{v},G)}{\E_{P\sim \x}\left[W(S,B,\mathbf{v}, G(P)\right]}.
\end{eqnarray}
\end{definition}
Though PoA is defined for all equilibria, in later sections we will slightly abuse notation and use "PoA of pure (mixed) Platform Equilibrium" to denote the the worst case ration between the ideal welfare to the lowest-welfare \textit{pure} (or \textit{mixed}) Platform Equilibrium respectively.

\section{Existence of a Pure Platform Equilibrium}\label{sec:existence_of_pure}
In this section, we show first show that a pure Platform Equilibrium does not always exist in general buyer-seller markets, and then give a polynomial-time algorithm that find a pure Platform Equilibrium in homogeneous-goods market regardless of transaction fee $\alpha$. Most proofs are deferred to Appendix~\ref{app:existence_of_pure}.

\subsection{Non-Existence of Pure Equilibria}

We first show that for some markets, pure Platform equilibria need not exist. 

\begin{restatable}{proposition}{propNoPure}
\label{prop:no_pure}
    There exists a general buyer-seller network and a transaction fee, $\alpha\in [0,1]$, for which there is no pure Platform Equilibrium. 
\end{restatable}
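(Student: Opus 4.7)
The plan is to exhibit a concrete non-homogeneous buyer-seller market $(G, \mathbf{v})$ on a small number of sellers (I would aim for two or three, together with a handful of buyers) and a specific transaction fee $\alpha \in [0,1]$, and then verify directly that every pure strategy profile $P \subseteq S$ admits a profitable unilateral deviation. Since there are only $2^{|S|}$ profiles, this reduces to a finite case check.

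The workflow is as follows. First, for each candidate profile $P$, I would compute the maximum competitive prices using the characterization
\[
\bar{p}_j(P) \;=\; W(S, B, \mathbf{v}, G(P)) \;-\; W(S\setminus\{j\}, B, \mathbf{v}, G(P)),
\]
by evaluating welfare-optimal matchings in the modified bipartite graph $G(P)$ (in which sellers in $P$ are linked to every buyer, and sellers outside $P$ keep only their off-platform links). Then, for each $P$, I would check the deviation conditions from the definition of Platform Equilibrium: some $j \in P$ with $\bar{p}_j(P\setminus\{j\}) > (1-\alpha)\,\bar{p}_j(P)$, or some $j \notin P$ with $(1-\alpha)\,\bar{p}_j(P\cup\{j\}) > \bar{p}_j(P)$. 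The goal is to choose valuations and $\alpha$ so that at least one such strict inequality holds at every $P$.

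Conceptually, the example should exploit the non-monotonicity of maximum competitive prices in $P$. When a seller joins the platform alone, a high-valuation platform-only buyer can sharply boost her marginal contribution, making joining attractive even after the fee. Once multiple sellers are on the platform, however, they become substitutes for that platform-only buyer, and each one's marginal contribution collapses, so some on-platform seller prefers to leave. With two asymmetric sellers this can be arranged to produce a best-response cycle of the Matching-Pennies type, e.g.\ $\emptyset \to \{1\} \to \{1,2\} \to \{2\} \to \emptyset$: seller~$1$ joins from $\emptyset$ for the platform boost, seller~$2$ joins from $\{1\}$ since her marginal contribution is still positive, seller~$1$ then leaves from $\{1,2\}$ because competition for the platform buyer wipes out her gain, and seller~$2$ leaves from $\{2\}$ once seller~$1$ is off because the deviation cycle forces her incentives to flip.

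The main obstacle is the joint feasibility of all four strict deviation inequalities under a single choice of $\alpha$: the fee must be small enough that one seller's lone-joining gain beats her off-platform position, yet large enough that $(1-\alpha)$ times the post-competition on-platform price falls strictly below the relevant off-platform price. Achieving all four inequalities simultaneously rules out fully symmetric two-seller markets (as symmetry forces two of the inequalities to contradict each other), so the construction must carefully break symmetry—typically by giving the two sellers off-platform neighborhoods of different ``strength'' and by using a platform-only buyer whose valuations over the two sellers are tuned so that marginal contributions at $\{1\}$, $\{2\}$, and $\{1,2\}$ move in opposing directions. Once such parameters are found, the verification itself is a routine finite computation.
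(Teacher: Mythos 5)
Your overall strategy---exhibit a small concrete market and a fee $\alpha$, compute the maximum competitive prices $\bar{p}_j(P)=W(S,B,G(P))-W(S\setminus\{j\},B,G(P))$ for each of the finitely many profiles $P$, and verify a strict profitable deviation at every $P$---is exactly the approach the paper takes (it uses a three-seller, four-buyer market at $\alpha=\tfrac{1}{2}$ and walks through all eight profiles). The driving mechanism you identify, that a seller's maximum competitive price is her externality on the market and is therefore non-monotone in the set of on-platform sellers, is also the right one.

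As written, however, this is a plan rather than a proof. The proposition is an existence claim, so the entire mathematical content is the witness instance together with the finite verification, and you never produce either. You yourself flag the crux---the simultaneous feasibility of all the strict deviation inequalities under a single $\alpha$---as the ``main obstacle'' and then defer it; that is precisely the step that cannot be waved away. Two further cautions. First, your preferred minimal instantiation (two sellers, each with one private off-platform buyer, plus one shared platform-only buyer) is provably infeasible: for the cycle $\emptyset\to\{1\}\to\{1,2\}\to\{2\}\to\emptyset$, the ``seller 2 joins at $\{1\}$'' condition forces the platform-only buyer's value for seller 2 to be so large that it contradicts the ``seller 1 joins at $\emptyset$'' and ``seller 2 leaves at $\{2\}$'' conditions, so richer structure is needed. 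Second, your justification for the last leg of the cycle (``the deviation cycle forces her incentives to flip'') is circular---it restates the requirement rather than deriving it. The paper closes its cycle with an ingredient your sketch omits: a \emph{displacement cascade}, in which an on-platform seller poaches another seller's only off-platform buyer, driving that seller's off-platform price to zero and forcing her onto the platform; only once all three sellers are on does competition for the high-value buyer push the first seller's fee-adjusted price below her off-platform price, unravelling the configuration. To complete the proof you must write down explicit valuations (e.g., the paper's instance with platform-only values $3.05$, $1.15$, $1.1$, $0.05$ at $\alpha=\tfrac{1}{2}$) and carry out the case check.
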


The proof follows the example in Figure~\ref{fig:no_pure_equilibrium}, and 
is given in Appendix~\ref{app:app_no_pure_eq}.
At a high level, a seller's competitive price is related to the externality they impose on the whole market, and is influenced by other sellers' connections. In the proof, we show that sellers' best-response dynamics can exhibit a cyclic behavior: some sellers joining the platform can cause other sellers already on platform to leave, or cause other sellers to join the platform. 

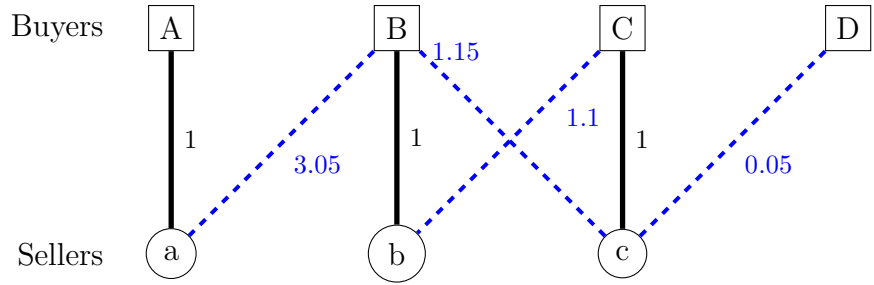
\begin{figure} 
    \centering
    \begin{tikzpicture}[scale=1.5]
        % Square vertices
        \foreach \i/\label in {1/A, 3/B, 5/C, 7/D}
            \node[draw, shape=rectangle, minimum size=0.6cm] (\label) at (\i, 2) {\label};
        
        % Circle vertices
        \foreach \i/\label in {1/a, 3/b, 5/c}
            \node[draw, shape=circle, minimum size=0.6cm] (\label) at (\i, 0) {\label};
        
        % Solid edges with weights
        \foreach \x/\y/\w in {A/a/1, B/b/1, C/c/1}
            \draw[line width=2pt] (\x) -- node[midway, right, font=\footnotesize] {\w} (\y);
        
        % Dashed blue edges with weights
        \foreach \x/\y/\w/\pos/\loc in {B/a/3.05/midway/below right, B/c/1.15/at start/right, C/b/1.1/near start/below right, D/c/0.05/midway/below right}
            \draw[line width=1.5pt, blue, dashed] (\x) -- node[\pos, \loc, font=\footnotesize] {\w} (\y);

        % Buyers caption
        \node[left] at (0.5, 2)  {Buyers};

        \node[left] at (0.5, 0)  {Sellers};
    \end{tikzpicture}
    \caption{An example with no Platform Equilibrium in pure strategies at $\alpha=\frac{1}{2}$. Black solid lines are direct links, as captured by $N(i)$ for buyer $i$, and blue dotted lines indicate missing links. Buyer values are  annotated adjacent to each edge, and all any value that is omitted is zero.} \label{fig:no_pure_equilibrium}
\end{figure} 

\subsection{Computing a Pure Nash Equilibrium in a Homogeneous-Good Market}

We now study homogeneous-good markets, where sellers have identical items, and buyer valuation is captured by a single scalar $v_i\in \mathbb{R}^+$. For this case, a pure equilibrium can always be found in polynomial time. To show this, in Lemma~\ref{lem:optimal_welfare} we first decompose social welfare into optimal welfare achieved by sellers off-platform, and the remaining on platform component. Then in Lemma~\ref{lem:same_on_price} we show all on platform sellers have the same price, and is related to the number of sellers on platform. Finally, we give the algorithm that finds the pure equilibrium. As we show in Theorems~\ref{thm:PE_pure} and~\ref{thm:equilibrium_set_enlarges_1}, in homogeneous goods markets, there is an order of adding sellers to the platform with the following desirable property: when adding a new seller to the platform, (i) 
no seller that is previously added would want to leave, and (ii) a seller that had negative gain from joining before, would rather still not join. All lemmas and theorems in this subsection hold \textit{only} for homogeneous-goods markets. Missing proofs are deferred to Appendix~\ref{app:existence}.

\citet{kranton2000competition} proves when a new seller is added to a market, at most one new buyer transacts (Lemma~\ref{lem:add_one_link}). Building on this we state two useful lemmas on how a set of sellers joining the platform affects the welfare and what is their price. 

\begin{restatable}{lemma}{OptimalWelfare}
\label{lem:optimal_welfare}
    Consider a buyer-seller network $(S,B,G)$. Let $P\subseteq S$ be the sellers joining the platform. Let $B^G$  be buyers transacting in $W(S\setminus P, B,G)$ and $\bar{B}^{G}=B\setminus B^{G}$. Let $m'=|P|$ and let $\bar{v}(m')$ be the sum of the $m'$ largest buyers in $\bar{B}^G$, or the sum of all buyers in $\bar{B}^G$ if there are less then $m'$ buyers in $\bar{B}^G$. Then 
   \begin{eqnarray}\label{eq:optimal_welfare}
        W(S,B,G(P)) = W(S\setminus P, B^G, G) + \bar{v}(m').
    \end{eqnarray}
\end{restatable}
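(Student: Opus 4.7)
The plan is to sandwich $W(S,B,G(P))$ between the claimed quantity $W(S\setminus P,B^G,G)+\bar v(m')$ from both sides. As a preliminary observation, by the First Welfare Theorem and the definition of $B^G$, the optimal allocation on $(S\setminus P,B,G)$ transacts with exactly $B^G$, hence $W(S\setminus P,B^G,G)=W(S\setminus P,B,G)=\sum_{i\in B^G}v_i$; in the homogeneous-goods setting the welfare is simply the sum of the transacting buyers' scalar values, and I will use this fact freely.

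For the lower bound I would exhibit an explicit feasible allocation on $G(P)$: the sellers in $S\setminus P$ trade with $B^G$ exactly as in the optimal allocation for $(S\setminus P,B,G)$, and the $m'$ sellers in $P$ trade, via their platform edges, with the $\min(m',|\bar B^G|)$ highest-value buyers in $\bar B^G$. The two match-sets are disjoint, each seller matches at most one buyer, and the platform edges make the second half feasible, so this allocation achieves welfare exactly $W(S\setminus P,B^G,G)+\bar v(m')$, giving $W(S,B,G(P))\ge W(S\setminus P,B^G,G)+\bar v(m')$.

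For the upper bound I would view $(S,B,G(P))$ as obtained from $(S\setminus P,B,G)$ by inserting the $m'$ sellers of $P$ one at a time, each entering with its full set of platform edges (i.e.\ edges to every buyer). By iterated application of Lemma~\ref{lem:add_one_link}, each such insertion enlarges the set of transacting buyers by at most one while keeping every previously transacting buyer, so after $m'$ insertions there is an optimal allocation on $(S,B,G(P))$ whose transacting set is $B^G\cup B^+$ with $B^+\subseteq\bar B^G$ and $|B^+|\le m'$. Summing values, $W(S,B,G(P))=\sum_{i\in B^G}v_i+\sum_{i\in B^+}v_i\le W(S\setminus P,B^G,G)+\bar v(m')$, since $\bar v(m')$ is by construction the maximum possible sum of at most $m'$ values drawn from $\bar B^G$. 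Combining with the lower bound gives equality.

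The main obstacle is the monotonicity half of the upper bound: I need that each newly inserted seller not only adds at most one new transacting buyer, but also does not kick a previously transacting buyer out of the set. This should follow from Lemma~\ref{lem:add_one_link} combined with the fact that in a homogeneous-goods market the previous matching remains feasible after adding a seller and welfare is monotone in the transacting set; if needed, one can argue directly by inspecting the symmetric difference of the two optimal matchings, which decomposes into augmenting paths and cycles, and showing that an augmenting path introduced by the new seller can only add a buyer, not drop one. Once this monotonicity is in hand, the remaining step is purely combinatorial.
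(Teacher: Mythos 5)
Your proposal is correct and follows essentially the same route as the paper: the paper's proof also views $(S,B,G(P))$ as adding the sellers of $P$ one at a time with all their edges and invokes Corollary~\ref{cor:adding_sellers_buyers_set} (the iterated form of Lemma~\ref{lem:add_one_link}) to conclude that $B^G$ keeps transacting while each new seller picks up the largest remaining unmatched buyer. Your two-sided sandwich just makes explicit the lower bound (exhibit the allocation) and upper bound (at most one new transacting buyer per added seller) that the paper compresses into two sentences, and the monotonicity concern you flag is exactly what that corollary supplies.
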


\begin{restatable}{lemma}{SameOnPrice}
\label{lem:same_on_price}
    All sellers joining the platform have the same price of $\bar{v}(m')-\bar{v}(m'-1).$
\end{restatable}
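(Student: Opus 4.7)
My plan is to compute $\overline{p}_j$ for each $j\in P$ directly from the characterization of the maximum competitive price given by Equation~\eqref{eq:min_price}, applied to the post-platform network $G(P)$, and then observe that the resulting expression is independent of which $j\in P$ we picked. The identity to start from is
$$\overline{p}_j \;=\; W(S, B, G(P)) \;-\; W(S\setminus \{j\}, B, G(P)).$$
Both terms are ``post-platform'' welfares, so Lemma~\ref{lem:optimal_welfare} is exactly the tool for evaluating them.

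The first step is a direct substitution: Lemma~\ref{lem:optimal_welfare} immediately yields $W(S, B, G(P)) = W(S\setminus P, B^G, G) + \bar{v}(m')$. The second step is to re-apply Lemma~\ref{lem:optimal_welfare} to the network obtained by deleting seller $j$ from the seller side, now with on-platform set $P\setminus\{j\}$ of size $m'-1$. The crucial observation --- and the one point that needs real care --- is that because $j\in P$, deleting $j$ does not disturb the off-platform subnetwork: the set identity $(S\setminus\{j\})\setminus(P\setminus\{j\}) = S\setminus P$ shows that the off-platform sellers, their edges, the off-platform welfare $W(S\setminus P, B, G)$, the transacting set $B^G$, and therefore also the residual set $\bar{B}^G$ and the function $\bar{v}(\cdot)$ are all \emph{identical} to those in the first application. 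Only the on-platform count changes, from $m'$ to $m'-1$. This gives $W(S\setminus\{j\}, B, G(P)) = W(S\setminus P, B^G, G) + \bar{v}(m'-1)$.

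Subtracting the two expressions cancels the off-platform welfare term and produces $\overline{p}_j = \bar{v}(m') - \bar{v}(m'-1)$, which depends only on $|P|$ and on the residual buyers $\bar{B}^G$, and not at all on the identity of $j\in P$. Hence every on-platform seller receives the stated common price. The only genuine obstacle is the bookkeeping verification that $B^G$ (and thus $\bar{v}$) really is preserved when $j$ is removed; everything else is mechanical substitution into Lemma~\ref{lem:optimal_welfare}, so I would not anticipate additional difficulty.
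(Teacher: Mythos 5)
Your proof is correct and follows essentially the same route as the paper, which simply invokes Lemma~\ref{lem:optimal_welfare} together with the characterization $\overline{p}_j = W(S,B,G(P)) - W(S\setminus\{j\},B,G(P))$ and notes the difference is $\bar{v}(m')-\bar{v}(m'-1)$. Your additional bookkeeping observation --- that removing $j\in P$ leaves the off-platform seller set $S\setminus P$, hence $B^G$ and $\bar{v}(\cdot)$, unchanged, so the two applications of Lemma~\ref{lem:optimal_welfare} share the same base term --- is exactly the detail the paper's one-line proof leaves implicit, and it is verified correctly.
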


We now present the main result of this section. For any transaction fee $\alpha$ in a homogeneous-goods market, Algorithm~\ref{alg:PE_pure} finds a pure Platform Equilibrium in polynomial time. The algorithm adds sellers to $P$ iteratively, which eventually contains all sellers joining the platform in the pure Platform Equilibirum computed. At every round, the algorithm computes for all sellers who haven't joined $P$ yet, what is their utility gain in joining the platform, given the current $P$ (captured by $\phi_j^P$). The set of sellers with the largest utility gain is labeled as $\Phi_{\max}^P$ in step~\ref{alg:step}. If $|\Phi_{\max}^P|=1$, the algorithm adds the seller with the largest gain to $P$. If $|\Phi_{\max}^P|>1$, it chooses a seller with the lowest off-platform price among all in $\Phi_{\max}^P$. The following lemma shows the seller added to $P$ each round has the lowest off platform price out of all agents in $S\setminus P$.

\begin{algorithm}[h!]
    \begin{enumerate}
        \item Initialize $P\leftarrow \emptyset$.
        \item Let $G(P)$ denote the graph $G$ after connecting all sellers in $P$ to all buyers.  
        \item For each $j\in S$, let $p_j^\on(P)$ be $j$'s price on platform, $p_j^\off(P)$ be $j$'s price off platform given $P\setminus\{j\}$ is on platform.%Notice that for $j\notin P$, $P\setminus\{j\}=P$.
        \item For each $j\in S$, let $\phi_j^P = (1-\alpha) p_j^\on(P) - p_j^\off(P)$. %If $j$ is matched to a buyer through world edges on platform, set $\phi_j^P$ to $-\infty$
        \item Let $\Phi_{\max}^P=\argmax_{j\in S\setminus P}\{\phi_j^P\}$, 
        %be the set of sellers with max gain. Let 
        $\hat{j}\in\argmin_{j\in \Phi_{\max}^P}\{p_{j}^{\off}(P)\}$.  \label{alg:step}
        \item While $(P\neq S) \ \wedge \ (\phi_{\hat{j}}^P \geq 0)$:
         \begin{itemize}
            \item $P\gets P\cup \{\hat{j}\}$.
            \item Update $G(P)$.%, $\bar{B}^P$.
            \item Recompute seller $\hat{j}$ according to the new prices $p_j^\on(P)$ and $p_j^\off(P)$ and the new $\phi_j^P$.
        \end{itemize}
        \item output $P$
    \end{enumerate}
\caption{Finding a pure Platform Equilibrium at fixed $\alpha$.}
\label{alg:PE_pure}
\end{algorithm}

\begin{restatable}{lemma}{AlgoSelect}
\label{lem:algo_select}
    At an iteration, if Algorithm~\ref{alg:PE_pure} adds seller $\hat j$ to platform, then for every seller $j \notin P$, $p_{\hat{j}}^{\off}(P)\leq p_{{j}}^{\off}(P)$. 
\end{restatable}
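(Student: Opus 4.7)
The plan is to show that every seller $j^\star \in \argmin_{j \in S \setminus P} p_j^\off(P)$ lies in $\Phi_{\max}^P$; since the algorithm picks $\hat j \in \argmin_{j \in \Phi_{\max}^P} p_j^\off(P)$, this guarantees $p_{\hat j}^\off(P)$ equals the global minimum. The core step is a monotonicity claim: for all $j_1,j_2 \in S\setminus P$, $p_{j_1}^\off(P) \leq p_{j_2}^\off(P)$ implies $\phi_{j_1}^P \geq \phi_{j_2}^P$.

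To establish monotonicity I first derive closed-form expressions for $p_j^\off(P)$ and $p_j^\on(P)$, using the characterization $p_j^\off(P)=W(S\setminus P,B,G)-W(S\setminus(P\cup\{j\}),B,G)$ together with the add-one-link fact (Lemma~\ref{lem:add_one_link}). Either (Case~1) removing $j$ from the off-platform sellers preserves the transacting set, in which case $p_j^\off(P)=0$ and $\bar B^G(P\cup\{j\})=\bar B^G(P)=:\bar B$; or (Case~2) removing $j$ drops exactly one transacting buyer $b_j$, in which case $p_j^\off(P)=v_{b_j}$ and $\bar B^G(P\cup\{j\})=\bar B\cup\{b_j\}$. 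Writing $\bar v_k$ for the $k$-th largest valuation in $\bar B$ (with $\bar v_0:=+\infty$ and $\bar v_k:=0$ for $k>|\bar B|$), Lemma~\ref{lem:same_on_price} applied with $m'=|P|+1$ gives $p_j^\on(P)=\bar v_{|P|+1}$ in Case~1, and in Case~2 three subcases: $p_j^\on(P)=\bar v_{|P|}$ if $v_{b_j}\geq \bar v_{|P|}$, $p_j^\on(P)=v_{b_j}$ if $\bar v_{|P|+1}\leq v_{b_j}\leq \bar v_{|P|}$, and $p_j^\on(P)=\bar v_{|P|+1}$ if $v_{b_j}\leq \bar v_{|P|+1}$.

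Rewriting $\phi_{j_1}^P\geq \phi_{j_2}^P$ as $(1-\alpha)(p_{j_1}^\on-p_{j_2}^\on) + (p_{j_2}^\off-p_{j_1}^\off)\geq 0$ and substituting, each combination of Cases~1/2 and threshold subcases (restricted to pairs consistent with $p_{j_1}^\off\leq p_{j_2}^\off$) reduces the inequality to an elementary algebraic statement that follows from $\alpha\in[0,1]$, from $\bar v_{|P|+1}\leq \bar v_{|P|}$, and from the imposed ordering of $v_{b_{j_i}}$ against the thresholds. For instance, when both sellers are in Case~2 with $v_{b_{j_i}}\in[\bar v_{|P|+1},\bar v_{|P|}]$, we have $p_{j_i}^\on=v_{b_{j_i}}=p_{j_i}^\off$, so the expression collapses to $\alpha(v_{b_{j_2}}-v_{b_{j_1}})\geq 0$.

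Applying the monotonicity claim with $j_1=j^\star$ and arbitrary $j_2\in S\setminus P$ gives $j^\star \in \Phi_{\max}^P$, hence $p_{\hat j}^\off(P)\leq p_{j^\star}^\off(P)=\min_{j\in S\setminus P}p_j^\off(P)$, proving the lemma. I expect the main obstacle to be the exhaustive subcase split in the monotonicity step; care is needed at the boundaries where $|\bar B|<|P|+1$ (some on-platform sellers may not trade) and where $v_{b_j}$ ties a threshold $\bar v_{|P|}$ or $\bar v_{|P|+1}$, but both are absorbed uniformly by the conventions on $\bar v_k$ above.
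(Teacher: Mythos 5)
Your overall architecture matches the paper's: show that a global minimizer of $p_j^{\off}(P)$ over $S\setminus P$ lies in $\Phi_{\max}^P$, then invoke the tie-breaking rule in Step~\ref{alg:step}. The paper reaches the same monotonicity via Lemma~\ref{lem:on_platform_gain}, comparing sellers through the value of the marginal buyer each would introduce. However, there is a concrete error in your starting point: the characterization $p_j^{\off}(P)=W(S\setminus P,B,G)-W(S\setminus(P\cup\{j\}),B,G)$ is not the off-platform price the algorithm uses. By definition, $p_j^{\off}(P)$ is $j$'s maximum competitive price in the market $(S,B,G(P))$, i.e., $W(S,B,G(P))-W(S\setminus\{j\},B,G(P))$; your formula deletes the on-platform sellers instead of keeping them connected to all buyers, and those sellers depress $j$'s price. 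Consequently your Case~2 closed form $p_j^{\off}(P)=v_{b_j}$ is false whenever $v_{b_j}>\bar v_{|P|}$: carrying out the Lemma~\ref{lem:optimal_welfare} decomposition on both terms of the correct expression gives $p_j^{\off}(P)=\min\{v_{b_j},\bar v_{|P|}\}$, because removing $j$ frees $b_j$ to be absorbed into the top-$|P|$ buyers served by the platform sellers. A minimal example: buyers $b$ (value $10$, linked off-platform only to $j$) and $c$ (value $2$, no off-platform links), one on-platform seller, so $\bar v_1=2$; then $p_j^{\off}(P)=12-10=2$, not $10$.

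This is not merely cosmetic: your entire case analysis is built on these closed forms, and in the regime $v_{b_j}>\bar v_{|P|}$ you would be ordering sellers by $v_{b_j}$, which is not the quantity the algorithm's $\argmin$ tie-break compares (sellers with different $v_{b_j}$ above the threshold all have the same true off-platform price $\bar v_{|P|}$). The gap is repairable --- with the corrected formula that regime collapses to $p_j^{\off}(P)=p_j^{\on}(P)=\bar v_{|P|}$ and the monotonicity claim $p_{j_1}^{\off}(P)\le p_{j_2}^{\off}(P)\Rightarrow \phi_{j_1}^P\ge\phi_{j_2}^P$ still checks out case by case --- but as written the derivation rests on a wrong price identity and must be redone with the on-platform sellers retained in the market, as in Eq.~\eqref{eq:max_price} applied to $G(P)$.
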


Using the above lemmas, we now provide a sketch of the proof of the main theorem of the section, in order to demonstrate the main ideas used to proof it. The full proof is quite technical and appears in Appendix~\ref{app:existence}. 

\begin{restatable}{theorem}{PEpure}
\label{thm:PE_pure}
    For any $\alpha\in [0,1]$, Algorithm~\ref{alg:PE_pure} outputs a pure strategy Platform Equilibrium.
\end{restatable}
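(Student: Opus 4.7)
The plan is to verify, for the set $P$ at termination, both equilibrium conditions in the definition of a pure Platform Equilibrium. The off-platform direction comes essentially for free from the loop exit rule: the loop stops with $\phi_{\hat j}^P<0$, and since step~\ref{alg:step} picks $\hat j$ as the argmax of $\phi_j^P$ over $S\setminus P$, every other $j\notin P$ also has $\phi_j^P<0$, which is exactly $p_j^\off(P)\geq (1-\alpha)p_j^\on(P\cup\{j\})$. (The case $P=S$ is vacuous.) The real work is the on-platform direction: each $j\in P$ should satisfy $(1-\alpha)p_j^\on(P)\geq p_j^\off(P)$.

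I would prove this by maintaining an invariant across iterations: after every round, each member of the current $P$ has $\phi_j^P\geq 0$. The base case $P=\emptyset$ is vacuous; for the seller $\hat j$ just added, the invariant holds by the loop guard $\phi_{\hat j}^P\geq 0$. The content of the inductive step is to show the invariant still holds for the incumbents of $P$ after $\hat j$ is added, i.e.\ that bringing $\hat j$ onto the platform does not make any incumbent want to leave.

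Here the structural lemmas do the heavy lifting. Lemma~\ref{lem:same_on_price} says all on-platform sellers share the common price $\bar{v}(m')-\bar{v}(m'-1)$, so I only need to track how this marginal difference and the set $\bar B^G$ change when $|P|$ grows by one. Lemma~\ref{lem:add_one_link} implies that removing $\hat j$ from the off-platform market adds at most one new buyer to $\bar B^G$, which splits the analysis into two cases: (a) $\hat j$ was not transacting off-platform, so $\bar B^G$ is unchanged and the common on-platform price weakly decreases while incumbent off-platform prices likewise do not increase; and (b) $\hat j$ was transacting with some buyer $b$, so $\bar B^G$ gains $b$ and the price formula partly compensates the extra on-platform seller by the newly available top buyer. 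Lemma~\ref{lem:algo_select}, which guarantees $\hat j$ had the minimum off-platform price in $S\setminus P$, is the tool that controls how $p_j^\off(P\cup\{\hat j\})$ compares to $p_j^\off(P)$ for each incumbent $j\in P$, enabling the inductive hypothesis to transfer.

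The main obstacle, and the reason the full proof is technical, lies in case (b): both $|P|$ and $|\bar B^G|$ grow by one and the two effects interact through the ranking of buyer values in $\bar B^G$. The crux is to use the tie-breaking rule in step~\ref{alg:step}, which picks the seller with smallest off-platform price among the $\phi$-maximizers, to pin down the position of the newly freed buyer $b$ within this ranking, so that the updated marginal difference $\bar{v}(|P|+1)-\bar{v}(|P|)$ compares favorably to each incumbent's updated off-platform price. Once this monotone coupling between the added seller's off-platform price and the freed buyer's value is verified, the invariant propagates to termination and the on-platform condition follows, completing the proof.
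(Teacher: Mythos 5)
Your skeleton matches the paper's: induct over the iterations of Algorithm~\ref{alg:PE_pure}, maintain the invariant that incumbents of $P$ retain non-negative on-platform gain, use Lemma~\ref{lem:same_on_price} to control on-platform prices and Lemma~\ref{lem:algo_select} for the selection order; the off-platform direction via the argmax and the loop-exit condition is correct and is exactly how the paper disposes of it. But the inductive step has a genuine gap, in two respects. First, the comparison you set up is the wrong one: you propose to transfer $\phi_j^{P}\geq 0$ to $\phi_j^{P\cup\{\hat j\}}\geq 0$ by tracking how the common on-platform price and $j$'s off-platform price each move, but an incumbent's gain typically \emph{decreases} as more sellers join (this monotonicity is precisely what drives Theorem~\ref{thm:equilibrium_set_enlarges_1}), so comparing $j$'s new gain to $j$'s own old gain cannot close the induction. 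The paper instead compares $\phi_j^{P\cup\{\hat j\}}$ to the \emph{newcomer's} gain $\phi_{\hat j}^{P}\geq 0$: since $p_j^{\on}(P\cup\{\hat j\})$ and $p_{\hat j}^{\on}(P)$ are both the common on-platform price of the market with $P\cup\{\hat j\}$ on platform, the on-platform terms cancel exactly, and the entire theorem reduces to the single inequality $p_{\hat j}^{\off}(P)\geq p_j^{\off}(P\cup\{\hat j\})$.

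Second, that inequality is the technical heart of the proof, and your sketch does not supply it. Lemma~\ref{lem:algo_select} only compares off-platform prices of \emph{different} sellers at the \emph{same} platform set; what is needed is a comparison across different platform sets, and an incumbent's off-platform outside option can \emph{rise} after it joins, because sellers who join later vacate their off-platform buyers and thereby create new opportunity paths that improve $j$'s outside option. (This also undercuts your case~(a) claim that incumbent off-platform prices "do not increase.") The paper rules out the bad event $p_j^{\off}(P\cup\{\hat j\})>p_{\hat j}^{\off}(P)$ by contradiction: it identifies an intermediate seller $j'$, added between $j$ and $\hat j$, whose joining frees the high-value buyer responsible for $j$'s elevated outside option, and then uses the submodularity of max-weight matching (Lemma~\ref{lem:submodularity}) together with the opportunity-path characterization to show $j'$ had a strictly higher off-platform price than $\hat j$ at the moment it was selected, contradicting Lemma~\ref{lem:algo_select}. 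Your "monotone coupling between the added seller's off-platform price and the freed buyer's value" is exactly the statement that needs proof, and asserting that it can be "verified" via the tie-breaking rule does not constitute the argument.
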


\begin{proof}[Proof Sketch.]
    The gist of the proof is showing whenever a new seller $\hat{j}$ is added to $P$, the sellers $j$ which are already in  $P$ do not want to leave $P$ (that is, their utility is still higher on-platform). In particular, we want to prove $\forall j\in P$, their benefit of staying on platform is larger than $\hat{j}$'s benefit of joining. That is, given $P$ and $\hat{j}\in \argmax_{j\in S\setminus P}\{\phi_j^P\}$,
    \begin{eqnarray}
        \forall j\in P\quad \phi_j^{P\cup\{\hat{j}\}}\geq \phi_{\hat{j}}^{P}\geq 0.\label{eq:pure_eq_invariant_main_section}
    \end{eqnarray}
    Since $\hat{j}$ is the next seller joining, $\phi_{\hat{j}}^{P}\geq 0$. Expanding Eq.~\eqref{eq:pure_eq_invariant_main_section}, we need to prove the following,
    \begin{eqnarray}
        \phi_j^{P\cup\{\hat{j}\}} - \phi_{\hat{j}}^{P} &= (1-\alpha)[p_j^{\on}(P\cup\{\hat{j}\})-p_{\hat{j}}^{\on}(P)]+[p_{\hat{j}}^{\off}(P)-p_{j}^{\off}(P\cup\{\hat{j}\})]\ge 0.\label{eq:pure_eq_invariant_2_main_section}
    \end{eqnarray}
    Since $j\in P$, both $p_j^{\on}(P\cup\{\hat{j}\})$ and $p_{\hat{j}}^{\on}(P)$ consider the price of $j$ and $\hat{j}$ when the set of sellers on platform is $P\cup \{\hat{j}\}$. Therefore, according to Lemma~\ref{lem:same_on_price}, their price is the same, and the first term in Lemma~\ref{lem:same_on_price} is equal to 0.
    Thus, what's left to prove is 
    \begin{eqnarray}
        p_{\hat{j}}^{\off}(P) \geq p_{j}^{\off}(P\cup\{\hat{j}\}).\label{eq:second-term}    
    \end{eqnarray}
     
    The intuition for this is the following. By Lemma~\ref{lem:algo_select} $j$ has a lower off-platform price than $\hat{j}$ when it is selected by the algorithm. If, as opposed to Eq.~\eqref{eq:second-term}, we have  $p_{j}^{\off}(P\cup\{\hat{j}\}) >p_{\hat{j}}^{\off}(P)$, we show there must be a seller $j'$ joining after $j$, such that its joining the platform frees up a high value buyer for $j$. This implies $j'$ should have high off platform price when it joins, and should be added after $\hat{j}$, forming a contradiction. The full proof also uses a submodularity condition on max weight matching, and is deferred to the Appendix~\ref{app:existence}.
    
\end{proof}

\subsubsection{Useful Properties of Algorithm~\ref{alg:PE_pure}}

In this section, we state other useful properties of pure Equlibria in homogeneous goods markets that arise from Algorithm~\ref{alg:PE_pure}. In Section~\ref{sec:poa_homo_n_ub}, these properties will be useful in lower bounding revenue, and in turn welfare for markets with selfish platforms. 

Algorithm~\ref{alg:PE_pure} can be used to find a pure equilibrium at any $\alpha$. The following two lemmas allows one to continuously decrease transaction fee from $\alpha=1$ with algorithm~\ref{alg:PE_pure}, and gradually enlarge the set of on-platform sellers in equilibrium.

\begin{restatable}{lemma}{PEpureAlpha}\label{lem:pe_pure_alpha1}
    In a homogeneous-goods market $(S,B,G)$, let $S^G$ be the set of transacting sellers, $\bar{S}^G$ be the set of none-transacting sellers. Then any $P\subseteq \bar{S}^G$ joining is a pure equilibrium at $\alpha=1$.
\end{restatable}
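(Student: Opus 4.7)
The plan is to reduce the two equilibrium conditions at $\alpha=1$ to a single statement about off-platform prices vanishing, and then verify that statement using the max-price characterization in \eqref{eq:min_price} together with the welfare decomposition in Lemma~\ref{lem:optimal_welfare}. First, I would observe that at $\alpha=1$ the inequality $(1-\alpha)p_j^{\on}(P)\geq p_j^{\off}(P)$ for $j\in P$ becomes $0\geq p_j^{\off}(P)$, which by nonnegativity of competitive prices is equivalent to $p_j^{\off}(P)=0$; and the condition $p_j^{\off}(P)\geq (1-\alpha)p_j^{\on}(P) = 0$ for $j\notin P$ is automatic. So the whole lemma reduces to showing $p_j^{\off}(P)=0$ for each $j\in P$.

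Applying the max-price characterization \eqref{eq:min_price} to seller $j$ in the network $G(P\setminus\{j\})$ gives
\begin{align*}
    p_j^{\off}(P) \;=\; W\bigl(S, B, G(P\setminus\{j\})\bigr) - W\bigl(S\setminus\{j\}, B, G(P\setminus\{j\})\bigr),
\end{align*}
so it suffices to show that removing $j$ from the market leaves the optimal welfare unchanged when $P\setminus\{j\}$ is on the platform. Intuitively this should hold because $j\in\bar{S}^G$ is not used by any welfare-optimal matching in the original graph, and the sellers in $P\setminus\{j\}$ that moved onto the platform can each grab a top unmatched buyer in $\bar{B}^G$ without ever needing $j$'s off-platform edges.

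To formalize this, I would invoke Lemma~\ref{lem:optimal_welfare} twice, both times with platform set $P\setminus\{j\}$. Let $S' = S\setminus(P\setminus\{j\})$ denote the off-platform sellers. Because $P\subseteq\bar{S}^G$ we have $S^G\subseteq S'$ and, since $j\in\bar{S}^G$ as well, $S^G\subseteq S'\setminus\{j\}$. Hence the welfare-optimal matching in $(S,B,G)$, which matches $S^G$ to $B^G$, remains feasible in both restricted markets, so $B^G$ can be chosen consistently as the transacting-buyer set in both applications of the lemma, giving
\begin{align*}
    W(S, B, G(P\setminus\{j\})) &= W(S', B^G, G) + \bar{v}(|P|-1),\\
    W(S\setminus\{j\}, B, G(P\setminus\{j\})) &= W(S'\setminus\{j\}, B^G, G) + \bar{v}(|P|-1),
\end{align*}
with the \emph{same} $\bar{v}(|P|-1)$ term in each, since $\bar{B}^G$ does not depend on whether $j$ is present off-platform. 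Both restricted welfares equal $\sum_{i\in B^G} v_i$ because $S^G$ already saturates $B^G$, so their difference is $0$, as needed.

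The main subtlety is the bookkeeping in this last step: one has to argue that the buyer set $B^G$ produced by Lemma~\ref{lem:optimal_welfare} really can be taken to be the \emph{same} in both markets $(S',B,G)$ and $(S'\setminus\{j\},B,G)$, so that the two $\bar{v}(|P|-1)$ terms cancel. This cancellation depends crucially on the hypothesis $P\subseteq\bar{S}^G$; had $P$ contained any transacting seller, removing it from the off-platform pool would actively reshape the optimal matching and thereby change which buyers land in $\bar{B}^G$, and the clean decomposition would no longer collapse.
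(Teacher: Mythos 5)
Your proof is correct, but it takes a different route from the paper's. The paper's argument is purely combinatorial and very short: at $\alpha=1$ the condition for sellers outside $P$ is automatic, and for $j\in P$ it invokes Lemma~\ref{lem:add_one_link} (via Corollary~\ref{cor:adding_sellers_buyers_set}) to observe that connecting the sellers of $P\setminus\{j\}$ to all buyers never causes a previously non-transacting off-platform seller to start transacting; since $j\in\bar{S}^G$ therefore remains unmatched, the competitive-equilibrium requirement that unassigned goods have price zero gives $p_j^{\off}(P)=0$ directly. You instead compute $p_j^{\off}(P)$ algebraically as the marginal welfare contribution $W(S,B,G(P\setminus\{j\}))-W(S\setminus\{j\},B,G(P\setminus\{j\}))$ and show it vanishes by applying the decomposition of Lemma~\ref{lem:optimal_welfare} twice and cancelling the two $\bar{v}(|P|-1)$ terms. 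Your reduction of both equilibrium conditions to $p_j^{\off}(P)=0$ is the same as the paper's implicit first step, and your handling of the tie-breaking subtlety (that the transacting buyer set $B^G$ can be chosen identically in both applications of the lemma, because $S^G\subseteq S'\setminus\{j\}$ saturates $B^G$ in both markets) is sound, since the left-hand side of Eq.~\eqref{eq:optimal_welfare} is independent of which optimal matching is selected. The trade-off: the paper's proof is two lines but leans on the reader supplying the link between ``non-transacting'' and ``price zero,'' while yours is longer but fully explicit and avoids any appeal to the opportunity-path machinery, using only the marginal-contribution formula for maximum competitive prices.
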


\begin{restatable}{lemma}{PEpureContinuous}\label{lem:pe_pure_continuous}
    Let $P_1$ be the set of sellers that forms an equilibrium by Algorithm~\ref{alg:PE_pure} for $\alpha_1$. Then for any $P_2\supsetneq P_1$ there exists an $\alpha_2<\alpha_1$ for which Algorithm~\ref{alg:PE_pure} outputs $P_2$.    
\end{restatable}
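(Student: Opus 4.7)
The plan is to argue that decreasing $\alpha$ below $\alpha_1$ allows the algorithm to continue from $P_1$ and add exactly the sellers in $P_2\setminus P_1$ via its tie-breaking freedom in step~\ref{alg:step}, while choosing $\alpha_2$ narrowly enough that the algorithm halts precisely at $P_2$.

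First, I would observe that $\phi_j^P=(1-\alpha)p_j^{\on}(P)-p_j^{\off}(P)$ is strictly decreasing in $\alpha$ whenever $p_j^{\on}(P)>0$. Since $P_1$ is output by Algorithm~\ref{alg:PE_pure} at $\alpha_1$, the first $|P_1|$ iterations at any $\alpha_2<\alpha_1$ can follow the same execution: every seller chosen at $\alpha_1$ still satisfies $\phi\ge 0$ at $\alpha_2$, and by Lemma~\ref{lem:algo_select} remains a valid tie-breaking choice because in homogeneous-goods markets the algorithm's selection order is governed by off-platform prices, which do not depend on $\alpha$. Thus at $\alpha_2$ the algorithm reaches $P_1$ along the same trajectory.

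Second, I would argue inductively on $|P_2\setminus P_1|$ that, at each intermediate $P$ with $P_1\subseteq P\subsetneq P_2$, the algorithm can be directed to a seller in $P_2\setminus P$. By Lemma~\ref{lem:same_on_price}, in the homogeneous-goods case the on-platform price of any candidate $j\in S\setminus P$ depends only on $|P|+1$ and the residual structure after removing $P\cup\{j\}$, so $\argmax_j\phi_j^P$ effectively coincides with $\argmin_j p_j^{\off}(P)$ and Lemma~\ref{lem:algo_select} applies. I would then exhibit, at each step, a seller in $P_2\setminus P$ lying in this argmin set, so that the non-deterministic tie-breaking in step~\ref{alg:step} can select it.

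Third, I would choose $\alpha_2$ small enough that each seller in $P_2\setminus P_1$ has $\phi^P\ge 0$ at the moment it is selected, yet large enough that once $P=P_2$ every $j\in S\setminus P_2$ has $\phi_j^{P_2}<0$, so the algorithm terminates with exactly $P_2$. Existence of such an $\alpha_2$ follows by ordering the finitely many critical thresholds derived from the market's max-price structure. The main obstacle is the tie-breaking step: when the unique minimum $p^{\off}$ at some intermediate $P$ appears to lie outside $P_2$, the algorithm would naively be forced to add a seller we do not want. I expect this to be the hardest part of the proof, and would handle it by a combinatorial exchange argument on the opportunity-path structure of \citet{kranton2000competition}, showing that in homogeneous markets the relevant off-platform prices can be equalized among the needed sellers, so that tie-breaking always admits a choice inside $P_2\setminus P$.
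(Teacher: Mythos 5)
Your first step --- replaying the $P_1$-prefix of the execution at any $\alpha_2<\alpha_1$ because lowering $\alpha$ only increases on-platform gains, preserves their relative order, and leaves off-platform prices untouched --- is essentially the paper's \emph{entire} proof of this lemma. The gap is in your second step. You assert that at every intermediate $P$ with $P_1\subseteq P\subsetneq P_2$ some seller of $P_2\setminus P$ can be made to lie in $\Phi^P_{\max}$ (or in the subsequent $\argmin$ over off-platform prices), and you defer the hard case to an unspecified ``combinatorial exchange argument'' that would ``equalize'' the relevant off-platform prices among the needed sellers. That is not something you get to arrange: the quantities $p_j^{\off}(P)$ and the identity of $\argmax_{j}\phi_j^P$ are determined by the market, not by the prover. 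Worse, the obstruction cannot be removed by tuning $\alpha_2$: lowering $\alpha$ rescales the $(1-\alpha)p_j^{\on}(P)$ term uniformly across all candidates, and Lemma~\ref{lem:on_platform_gain} shows that the induced selection order is governed by the values $v_{i_P(j)}$ of the buyers each candidate would newly match --- quantities independent of $\alpha$. If at some intermediate $P$ the forced next selection is a unique seller outside $P_2$, no choice of $\alpha_2$ steers the algorithm into $P_2\setminus P$, so your construction fails for a genuinely arbitrary superset $P_2$.

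What the paper actually argues (and all that Theorem~\ref{thm:equilibrium_set_enlarges_1} uses downstream) is weaker: at $\alpha_2$ the algorithm can first reproduce $P_1$, and thereafter the algorithm itself dictates which sellers are appended as $\alpha$ continues to decrease; the sets $P_2$ obtained are those along the algorithm's own trajectory, grown one seller at a time, and only their cardinalities matter later. If you insist on the stronger literal reading of the statement, you must actually supply the deferred argument --- that every seller of $P_2\setminus P$ becomes a legitimate selection before any seller of $S\setminus P_2$ is forced --- and Lemma~\ref{lem:on_platform_gain} indicates this is false in general. As written, the proposal proves the prefix claim the paper proves and leaves the genuinely new part unestablished.
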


Algorithm~\ref{alg:PE_pure}, Lemmata~\ref{lem:pe_pure_alpha1} and \ref{lem:pe_pure_continuous} offer a way to find sets of pure equilibrium at different transaction fees without cold starting the algorithm at $P=\emptyset$ for each $\alpha$. We can continuously decrease the transaction fee from $\alpha=1$ to $0$, adding a seller $j$ to platform when the on-platform gain $\phi_j(P)$ reaches zero. The following theorem further says as $\alpha$ decreases, the set of on-platform sellers in pure equilibrium enlarges by one seller each time. With this procedure, we can check and compare revenue each time a new seller joins. 

\begin{restatable}{theorem}{EnlargeEq}
\label{thm:equilibrium_set_enlarges_1}
    For any number $m_p= 1,...,m$, there exists a pure Platform Equilibrium $P$ such that $|P|=m_p$. Furthermore, all such equilibria can be found by lowering $\alpha$ from $1$ to $0$ with Algorithm~\ref{alg:PE_pure}.
\end{restatable}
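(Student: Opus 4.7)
The plan is to prove the statement by induction on $m_p$, constructing a strictly increasing chain of equilibria $P_0 \subsetneq P_1 \subsetneq \cdots \subsetneq P_m$ together with a strictly decreasing sequence of transaction fees $1 \ge \alpha_0 > \alpha_1 > \cdots > \alpha_m \ge 0$, such that $P_k$ is exactly the output of Algorithm~\ref{alg:PE_pure} at fee $\alpha_k$ and has size $k$. Existence for every target size $m_p \in \{1,\ldots,m\}$ then reads off directly from this chain, and the ``furthermore'' clause is immediate from the construction.

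For the base case, I would use Lemma~\ref{lem:pe_pure_alpha1} at $\alpha_0 = 1$: since $\emptyset \subseteq \bar{S}^G$, the empty set is a pure Platform Equilibrium, and I would verify that it is (a possible) output of Algorithm~\ref{alg:PE_pure} at $\alpha = 1$. For the inductive step, suppose $P_k$ is an algorithm output at $\alpha_k$ of size $k < m$. Choose any seller $j^\star \in S \setminus P_k$ and set $P_{k+1} = P_k \cup \{j^\star\}$. Since $P_{k+1} \supsetneq P_k$, Lemma~\ref{lem:pe_pure_continuous} supplies some $\alpha_{k+1} < \alpha_k$ for which Algorithm~\ref{alg:PE_pure} outputs $P_{k+1}$; by Theorem~\ref{thm:PE_pure}, $P_{k+1}$ is a pure Platform Equilibrium, and it has the desired size $k+1$. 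Iterating from $k=0$ to $k=m-1$ yields the full chain. To see that $\alpha_k \ge 0$ for every $k$, I would use the fact that at $\alpha = 0$ each seller has $\phi_j^P = p_j^{\on}(P) - p_j^{\off}(P) \ge 0$ (joining can only expand the set of buyers, so the competitive price weakly increases), so the algorithm at $\alpha = 0$ must output the full set $S$; since we obtain strict inclusions $P_k \subsetneq S$ for $k < m$, the corresponding $\alpha_k$ must be strictly positive, and $\alpha_m = 0$ is consistent with $P_m = S$.

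The main obstacle, I expect, is handling the boundary behavior of the algorithm at $\alpha = 1$: sellers $j$ with $p_j^{\off}(\emptyset) = 0$ (i.e., non-transacting sellers in the original network) satisfy $\phi_j^\emptyset = 0$ at $\alpha = 1$ and may be added by the $\argmax/\argmin$ tie-breaking step of Algorithm~\ref{alg:PE_pure}. I would address this either by starting the induction at $\alpha_0 = 1 - \varepsilon$ for a sufficiently small $\varepsilon > 0$ (so that the algorithm cleanly outputs $\emptyset$ and Lemma~\ref{lem:pe_pure_continuous} applies unambiguously), or, more naturally, by observing that Lemma~\ref{lem:pe_pure_alpha1} already tells us every subset of $\bar{S}^G$ is an equilibrium at $\alpha = 1$, so the tie-breaking choice is without loss of generality and the induction can absorb it. A second, smaller subtlety is confirming that Lemma~\ref{lem:pe_pure_continuous} allows us to add exactly one seller at a time rather than being forced to add several; this is already built into its statement (``for any $P_2 \supsetneq P_1$''), so selecting $P_2 = P_1 \cup \{j^\star\}$ directly furnishes the size-$(k+1)$ equilibrium needed to close the induction.
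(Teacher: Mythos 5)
There is a genuine gap, and it sits exactly at the point you flag as a ``smaller subtlety'' and then dismiss. Your induction offloads all of the work onto Lemma~\ref{lem:pe_pure_continuous}, read literally as: for \emph{any} $P_2\supsetneq P_1$ there is an $\alpha_2<\alpha_1$ at which Algorithm~\ref{alg:PE_pure} outputs exactly $P_2$. That literal reading cannot be right: the algorithm's selection rule is deterministic up to tie-breaking (argmax of $\phi_j^P$, then argmin of $p_j^{\off}(P)$), so at any given $\alpha_2$ it outputs one specific set, and if some seller $j_1$ strictly dominates $j_2$ in on-platform gain at every fee, the set $P_1\cup\{j_2\}$ is never an output. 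What the lemma's one-line proof actually establishes is only that at a lower fee the algorithm first re-adds all of $P_1$ before anything else, i.e., the outputs are monotone in $\alpha$; it says nothing about the algorithm \emph{stopping} at a prescribed superset. Taking the lemma at face value makes Theorem~\ref{thm:equilibrium_set_enlarges_1} an immediate corollary, which is a sign you have assumed the conclusion rather than proved it.

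The substantive content of the theorem --- and the entirety of the paper's proof --- is ruling out a cascade: when $\alpha$ is lowered until some seller $j$'s gain $\phi_j^P$ reaches zero and $j$ is added, one must show that no other off-platform seller $\hat{j}$ suddenly acquires \emph{positive} gain $\phi_{\hat{j}}^{P\cup\{j\}}>0$ (which would force several sellers onto the platform at the same fee and skip the size $|P|+1$). This is precisely the dynamic that destroys pure equilibria in the general-valuation example of Proposition~\ref{prop:no_pure}, so it genuinely needs the homogeneous-goods structure. The paper proves that for every $\hat{j}\notin P\cup\{j\}$ either $\phi_{\hat{j}}(P\cup\{j\})\le\phi_{\hat{j}}(P)\le 0$ or $\phi_{\hat{j}}(P\cup\{j\})<0$: the on-platform price weakly decreases by Lemma~\ref{lem:same_on_price}, and the off-platform price weakly increases for sellers on $j$'s opportunity path (Lemma~\ref{lem:add_one_link}); for sellers whose off-platform price drops, their current buyer is shown to be more valuable than the buyer they would get on-platform, so their gain is strictly negative. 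None of this appears in your argument. Your base case at $\alpha=1$ and the observation that the algorithm outputs $S$ at $\alpha=0$ are fine, but the inductive step needs the no-cascade analysis, not a citation of Lemma~\ref{lem:pe_pure_continuous}.
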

\begin{proof}[Proof Sketch.]
    As $\alpha$ decreases continuously from $1$ to $0$, $\phi_j^P$ continuously increases for all $j$ when the equilibrium set $P$ stays the same. When some sellers $\Phi_{\max}^P$ reaches zero on-platform gain, Algorithm~\ref{alg:PE_pure} adds a seller $j\in \Phi_{\max}^P$ to $P$. It suffices to show after $j$ is added to $P$, $\forall \hat{j}\notin P$, either
    \begin{eqnarray*}
    \phi_{\hat{j}}(P\cup \{j\}) & \leq & \phi_{\hat{j}}(P) \text{\; or \;} \phi_{\hat{j}}(P\cup\{j\})<0.
    \end{eqnarray*}
    Therefore, the on platform gain is non-positive for sellers not on platform. Thus, $P\cup \{j\}$ is indeed a pure equilibrium, larger than $P$ by 1.
\end{proof}

Recall that in the example in Proposition~\ref{prop:no_pure}, one seller joining the platform can incentize off-platform sellers to join the platform, or on-platform sellers to leave the platform. This is not true for homogeneous-goods markets. Algorithm~\ref{alg:PE_pure} provides a way to build the set of pure equilibrium. Theorem~\ref{thm:PE_pure} shows that a seller joining would not incentivize on-platform sellers to leave, while Theorem~\ref{thm:equilibrium_set_enlarges_1} proves a seller joining would not incentivize off-platform sellers to join. In Section~\ref{sec:poa_homo_n_ub}, we will use these properties to prove the first results on the positive effect of a selfish platform on the social welfare of the market.

\section{Market Efficiency with an Unregulated Platform}\label{sec:poa_rev_max}

In this section, we study the social welfare with an unregulated platform. Recall our definition of the platform's optimization problem and the definition of the price of anarchy in measuring market efficiency, as in Section~\ref{sec:prelims-platform-effect}. In Section~\ref{sec:poa_homo_n_ub}, we present the first result in the positive effect of a selfish platform on market efficiency. We show that in any homogeneous goods market, the obtained Platform Equilibrium, be it pure or mixed, guarantees at least $1/\log(\min\{m,n\})$-fraction of the ideal social welfare. In Section~\ref{sec:poa_homo_n_lb}, we show that this is tight, up to constant factors. In Appendix~\ref{app:poa_mn_general} we show that in markets with general unit-demand valuations, the state of affairs is worse--the price of anarchy can be as bad as $\min\{n,m\}$. This motivates us in showing that even light regulation of the platform can lead to substantial welfare gain, as presented in Section~\ref{sec:poa-regulated}.

\subsection{$O(\log(\min(n,m)))$ PoA for Homogeneous-Goods Markets}\label{sec:poa_homo_n_ub}

We now show the upper bound of the price of anarchy. 
\begin{theorem}\label{thm:poa_upper_bound_homo}
    The PoA is $O(\log (\min\{n,m\}))$.
\end{theorem}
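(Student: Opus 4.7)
Proof Proposal.

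The plan is to exhibit, for every homogeneous-goods market, a single fee $\alpha$ and a pure Platform Equilibrium $P$ at that fee whose platform revenue is $\Omega(W^\star/\log k)$, where $k=\min(n,m)$. This suffices because in every Platform Equilibrium the social welfare weakly exceeds the platform's revenue (every on-platform trade $(i,j)$ contributes $v_{ij}\geq\hat p_j\geq\alpha\hat p_j$ to welfare), and because the platform optimizes revenue over both $\alpha$ and equilibrium selection in $\mathbf X^\star$, so the revenue --- and hence the welfare --- at the platform's actual optimum is at least that at my specific witness. Moreover, by the First Welfare Theorem $W(S,B,G(P))\geq W(S,B,G)$ at every $P$, which immediately settles the easy regime $W(S,B,G)\geq W^\star/\log k$.

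In the hard regime $W(S,B,G)<W^\star/\log k$, the bound $\bar v_{\emptyset}(k)\geq W^\star-W(S,B,G)$ makes $\bar v_{\emptyset}(k)$ nearly $W^\star$. I would use Theorem~\ref{thm:equilibrium_set_enlarges_1} to parameterize pure equilibria by size: as $\alpha$ decreases from $1$ to $0$, Algorithm~\ref{alg:PE_pure} admits sellers one by one at thresholds $\alpha_1>\cdots>\alpha_m$, yielding pure equilibria $P_{m_p}$ for every $m_p$. By Lemma~\ref{lem:same_on_price}, each on-platform seller in $P_{m_p}$ is paid the common price $p_{m_p}=\bar v_{P_{m_p}}(m_p)-\bar v_{P_{m_p}}(m_p-1)$, so the revenue at size $m_p$ equals $\alpha_{m_p}\cdot m_p\cdot p_{m_p}$. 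A dyadic bucketing of the top-$k$ buyer values into $O(\log k)$ geometric classes $B_\ell=\{i:v_{(i)}\in(v_{(1)}/2^{\ell+1},v_{(1)}/2^\ell]\}$ identifies some class $T$ of size $m^\star$ at scale $v^\star$ contributing $\Omega(W^\star/\log k)$ of $W^\star$; I target $m_p=m^\star/2$.

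The revenue lower bound then splits into two claims: (a) $p_{m^\star/2}\geq\Omega(v^\star)$ and (b) $\alpha_{m^\star/2}=\Omega(1)$. For (a), note that adding an element weakly raises every order statistic, so $p_{m_p}\geq\tau(m_p)$, where $\tau(m_p)$ is the $m_p$-th largest value in the baseline free set $\bar B^{G}(\emptyset)$; a counting argument gives $|B^{G}(\emptyset)\cap T|\cdot v^\star/2\leq W(S,B,G)<W^\star/\log k$, so at most an $O(1/\log k)$-fraction of $T$ is matched in the baseline, leaving at least $m^\star/2$ top-class buyers in $\bar B^{G}(\emptyset)$ (for all sufficiently large $k$), whence $\tau(m^\star/2)\geq v^\star/2$. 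For (b), use the identity $\alpha_{m_p}=1-p^{\off}_{j_{m_p}}/p_{m_p}$ at the marginal seller $j_{m_p}$, together with Lemma~\ref{lem:algo_select}, which selects $j_{m_p}$ to have the smallest available off-platform price; combined with the smallness of $W(S,B,G)$ in the hard regime, this yields $p^{\off}_{j_{m^\star/2}}\leq p_{m^\star/2}/2$, i.e., $\alpha_{m^\star/2}\geq 1/2$.

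The main obstacle is the quantitative control in the hard-regime counting: Algorithm~\ref{alg:PE_pure} picks sellers by gain $\phi_j^P$ rather than by which buyers they unlock, so some care is required to ensure that the baseline free set $\bar B^{G}(\emptyset)$ actually contains enough top-class buyers, and to uniformly bound the off-platform price of the marginal seller. These rely on Lemma~\ref{lem:add_one_link} (each addition changes $\bar B^{G}(\cdot)$ by at most one buyer) and on the tight interaction between $W(S,B,G)$ and the bucket scale $v^\star$. Combining (a) and (b), the revenue at the witness equilibrium is $\alpha_{m^\star/2}\cdot(m^\star/2)\cdot p_{m^\star/2}=\Omega(m^\star v^\star)=\Omega(W^\star/\log k)$, yielding the claimed $O(\log(\min\{n,m\}))$ bound on the price of anarchy.
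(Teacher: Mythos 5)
Your high-level strategy coincides with the paper's: exhibit one fee and one pure equilibrium produced by Algorithm~\ref{alg:PE_pure} whose revenue is $\Omega(W^\star/\log k)$, and use revenue $\le$ welfare. The structural tools you invoke --- Lemma~\ref{lem:same_on_price} for the common on-platform price, Lemma~\ref{lem:algo_select} for the marginal seller, Theorem~\ref{thm:equilibrium_set_enlarges_1} for the thresholds $\alpha_{m_p}$, and monotonicity of the free set's order statistics --- are exactly the ones the paper uses in Lemmas~\ref{lem:sppoa_n_leq_m} and~\ref{lem:sspoa_n_geq_m}. The one genuine difference is bookkeeping: the paper sums $Rev^\star\ge \alpha_k\cdot k\cdot v_k$ over all $k$ up to a cutoff and pays a harmonic factor $H_\ell$, whereas you dyadically bucket the top-$k$ values and target the single size $m^\star/2$ at the dominant scale. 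These are interchangeable devices; your version would localize the argument somewhat, if the two quantitative claims closed.

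They do not close as written. For claim (a), your counting gives $|B^G(\emptyset)\cap T|\le 2W(S,B,G)/v^\star < 2W^\star/(v^\star\log k)$, while the bucket guarantee only gives $m^\star \ge cW^\star/(v^\star\log k)$ for a modest constant $c$; so the matched part of $T$ is bounded by a constant \emph{multiple} of $m^\star$, not an ``$O(1/\log k)$-fraction of $T$,'' and you cannot conclude that half of $T$ is free. This is repairable by defining the hard regime as $W(S,B,G)<W^\star/(C\log k)$ for a sufficiently large constant $C$, but you must say so. The serious gap is claim (b): the inequality $p^{\off}_{j_{m^\star/2}}\le p_{m^\star/2}/2$ is the crux of the whole theorem and you assert it rather than derive it. Lemma~\ref{lem:algo_select} only says the marginal seller minimizes the off-platform price among remaining sellers; turning that into a quantitative bound requires (i) the case where many non-transacting sellers remain, where Lemma~\ref{lem:pe_pure_alpha1} gives $\alpha=1$ outright; (ii) otherwise, a pigeonhole over the roughly $m^G-m^\star/2$ remaining baseline-transacting sellers showing one is matched to a buyer of value at most $\hat v_{m^G-m^\star/2+1}$, together with an argument that $m^G$ is large relative to $m^\star$ (if it is not, you are back in case (i)); and (iii) the observation that each on-platform seller unmatches at most one buyer of $B^G$, so that the marginal seller's off-platform price at the \emph{current} set $P$ (not at $\emptyset$) is controlled by these order statistics. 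This is precisely the content of Cases 1--3 in the paper's Lemma~\ref{lem:sspoa_n_geq_m}, and ``the smallness of $W(S,B,G)$'' alone does not substitute for it. Until (b) is carried out, the proposal is an outline of the right shape rather than a proof.
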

The proof follows immediately from Lemma~\ref{lem:sppoa_n_leq_m} and~\ref{lem:sspoa_n_geq_m}, which prove this claim for the range $n\leq m$ and for the range $n \geq m$. Interestingly, the proofs give a lower bound on the welfare by giving a lower bound on the platform's optimal revenue, which we denote by $Rev^\star$ in terms of the optimal welfare. As the platform's revenue is not more than the sellers' competitive prices which are no more than buyers values, this immediately gives a lower bound on the social welfare. As we lower bound welfare through revenue, the established bound works for mixed equilibrium.

For the proof of the two lemmas, we use the following notation. Let $B^G$ and $S^G$ be the set of buyers and sellers that transact when there's no platform $(S,B,G)$. Let $\bar{B}^G=B\setminus B^G$ and $\bar{S}^G = S\setminus S^G$. We normalize the ideal welfare to be $W^\star=1$. Denote the $k$-th harmonic number by $H_k=1+\frac{1}{2}+\frac{1}{3}+\ldots+\frac{1}{k}$.

\begin{lemma}
    When there are more sellers than buyers $n\leq m$, the PoA is $O(\log n)$. \label{lem:sppoa_n_leq_m}
\end{lemma}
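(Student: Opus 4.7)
The plan is to lower-bound the welfare of the platform-chosen equilibrium by $\Omega(W^{\star}/\log n)$, which yields the claimed bound. Normalize so $W^{\star} = 1$, and let $B^G, S^G$ denote the buyers and sellers transacting in the off-platform market on $G$ (i.e.\ with $P = \emptyset$), with $\bar{B}^G = B \setminus B^G$ and $\bar{S}^G = S \setminus S^G$. A key structural observation is that $|S^G| = |B^G|$ (matched pairs in a homogeneous-goods matching), so $|\bar{S}^G| - |\bar{B}^G| = m - n \geq 0$ under the hypothesis $n \leq m$: there are always at least as many idle sellers as idle buyers off-platform. This count is what makes the ``have idle sellers join'' strategy below feasible.

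I would split the analysis into two cases based on how much of $W^{\star}$ is already realized off-platform. In Case~A, if $\sum_{i \in B^G} v_i \geq 1/2$, then for any (mixed) Platform Equilibrium $\mathbf{x}$, edge-monotonicity of the maximum-weight matching gives $W(S,B,G(P)) \geq W(S,B,G) = \sum_{i \in B^G} v_i \geq 1/2$ for every $P$ in the support of $\mathbf{x}$, so the expected welfare is at least $W^{\star}/2$ and the PoA is at most $2$.

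In Case~B, $\sum_{i \in \bar{B}^G} v_i > 1/2$. Sort the values in $\bar{B}^G$ as $u_1 \geq u_2 \geq \cdots \geq u_{|\bar{B}^G|}$. The standard harmonic-sum trick---if $k\, u_k < c$ for every $k$ then $\sum_k u_k < c\, H_{|\bar{B}^G|}$---produces some $k^{\star} \leq |\bar{B}^G|$ with $k^{\star} u_{k^{\star}} \geq \frac{1}{2 H_{|\bar{B}^G|}} \geq \frac{1}{2 H_n}$. The platform now sets $\alpha = 1$ and, using its tie-breaking power, selects the equilibrium in which some fixed $k^{\star}$ sellers from $\bar{S}^G$ join. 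This is a valid pure equilibrium by Lemma~\ref{lem:pe_pure_alpha1}, and is feasible because $|\bar{S}^G| \geq |\bar{B}^G| \geq k^{\star}$. By Lemma~\ref{lem:same_on_price}, each on-platform seller's price equals $\bar{v}(k^{\star}) - \bar{v}(k^{\star}-1) = u_{k^{\star}}$, so the platform's revenue is $\alpha \cdot k^{\star} u_{k^{\star}} = k^{\star} u_{k^{\star}} \geq \frac{1}{2 H_n}$. Since in any competitive equilibrium the buyer's value dominates the seller's price, which in turn dominates the platform's fee, welfare pointwise upper-bounds revenue, so any revenue-maximizing equilibrium has welfare at least $\frac{1}{2 H_n}$.

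Combining the two cases yields welfare at least $W^{\star}/(2 H_n) = \Omega(W^{\star}/\log n)$, hence PoA is $O(\log n)$. The main subtlety I expect to have to defend is that the corner choice $\alpha = 1$ really admits the specific equilibrium the platform wants---this is precisely the content of Lemma~\ref{lem:pe_pure_alpha1} combined with the count $|\bar{S}^G| \geq |\bar{B}^G|$ afforded by $n \leq m$---and that the lower bound propagates to the quantity in Definition~\ref{def:poa}, which follows because the revenue of our concrete strategy lower-bounds the optimal platform revenue, and every equilibrium in $\mathbf{X}^{\star}$ has revenue equal to that optimum while welfare dominates revenue realization-by-realization.
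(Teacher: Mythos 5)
Your proposal is correct and follows essentially the same route as the paper's proof: split on whether the off-platform welfare already captures a constant fraction of $W^\star$, and otherwise post $\alpha=1$, invoke Lemma~\ref{lem:pe_pure_alpha1} to have $k$ idle sellers join (feasible since $|\bar{S}^G|\geq|\bar{B}^G|$), use Lemma~\ref{lem:same_on_price} to get revenue $k\cdot v_k$ for every $k$, and close with the harmonic-sum bound and the fact that welfare dominates platform revenue. The only cosmetic difference is your case threshold of $1/2$ versus the paper's $1/\log n$; both yield the same $\Omega(1/H_n)$ guarantee.
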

\begin{proof}
    If $W(S^G,B^G,G) \ge 1/\log n$ we are done. Therefore, assume $W(S^G,B^G,G) < 1/\log n\leq 1/2$, thus $W^\star(S,\bar{B}^G,G) > 1/2$. Let $m^G=|S^G|=|B^G|$, $\bar{m}^G=|\bar{S}^G|$ and $\bar{n}^G= |\bar{B}^G|$ (thus $m=m^G+\bar{m}^G$). Notice that $\bar{m}^G \geq \bar{n}^G$. Sort and denote the values of the buyers in $\bar{B}^G$ by $v_1\geq v_2\geq ... \geq v_{\bar{n}^{G}}$. 

    By Lemma~\ref{lem:pe_pure_alpha1}, the platform can always post $\alpha=1$ and select any set of sellers in $\bar{S}^G$ to join the platform. By Lemma~\ref{lem:optimal_welfare}, if the platform chooses $k=1,\ldots ,\bar{n}^G$ sellers from $\bar{S}^G$ to join, then matching them to buyers $v_1,v_2,...,v_k$ maximizes the welfare. Lemma~\ref{lem:same_on_price} then says all on platform sellers have price $v_k$. Therefore $Rev^{\star}\geq k\cdot v_k$ for any $k\leq \bar{n}^{G}$. As $\bar{m}^G\geq \bar{n}^G, W^{\star}(\bar{S}^G,\bar{B}^G,G) =  W^{\star}(S,\bar{B}^G,G)>\frac{1}{2}$. We get that
    $$ \frac{1}{2}<W^{\star}(\bar{S}^G,\bar{B}^G,G)=\sum_{i=1}^{\bar{n}^G}v_i \leq \sum_{i=1}^{\bar{n}^G}Rev^{\star}/i \leq H_n \cdot Rev^{\star}$$
    which implies $Rev^{\star}=\Omega(1/\log n)$. As the platform's revenue is always smaller than the market's welfare, we get that the welfare is also $\Omega(1/\log n)$.
\end{proof}

\begin{lemma}
    When there are more buyers than sellers $n> m$, the PoA is $O(\log m)$. \label{lem:sspoa_n_geq_m}
\end{lemma}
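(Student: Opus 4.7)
Plan: Mirror the structure of Lemma~\ref{lem:sppoa_n_leq_m}. If $W(S^G,B^G,G) \ge 1/(4\log m)$, every pure Platform Equilibrium inherits at least this welfare and we are done. Otherwise assume $W(S^G,B^G,G) < 1/(4\log m)$. The previous proof exploited $\bar m^G \ge \bar n^G$ to post $\alpha=1$ and pick any $k\le\bar n^G$ sellers from $\bar S^G$; for $n>m$ we may have $\bar m^G < \bar n^G$, so the $\alpha=1$ strategy alone need not suffice. Instead I would sweep $\alpha$ from $1$ down to $0$ and aggregate over the entire family of pure equilibria $P_1,\ldots,P_m$ of sizes $1,2,\ldots,m$ guaranteed by Theorem~\ref{thm:equilibrium_set_enlarges_1} via Algorithm~\ref{alg:PE_pure}.

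Write $j_k$ for the seller added at step $k$, $\alpha_k$ for the threshold fee at which $P_k$ is supported, $p^{\on}_k$ for the common on-platform price at $P_k$ (Lemma~\ref{lem:same_on_price}), and $p^{\off}_{j_k}(P_{k-1})$ for the marginal seller's outside option when joining. Indifference at the moment of joining gives $R_k = \alpha_k k p^{\on}_k = k\bigl(p^{\on}_k - p^{\off}_{j_k}(P_{k-1})\bigr)$. Combining the welfare decomposition of Lemma~\ref{lem:optimal_welfare} with the property (Lemma~\ref{lem:add_one_link}) that removing one seller changes the off-platform matched set by at most one buyer $b_k^\star$, together with the max-price characterization from the preliminaries, I would establish the welfare-increment identity
\[
W_k - W_{k-1} \;=\; p^{\on}_k + \delta_k - p^{\off}_{j_k}(P_{k-1}), \qquad 0 \le \delta_k \le p^{\off}_{j_k}(P_{k-1}),
\]
together with $p^{\off}_{j_k}(P_{k-1}) = W(S\setminus P_{k-1},B,G) - W(S\setminus P_k,B,G)$; here $\delta_k = \bar v_{P_k}(k-1) - \bar v_{P_{k-1}}(k-1)$ measures how the freed buyer $b_k^\star$ shifts the top $k-1$ values of $\bar B^G_{P_k}$.

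Telescoping both identities from $k=1$ to $m$ gives $\sum_k p^{\off}_{j_k}(P_{k-1}) = W(S^G,B^G,G)$ and $W^\star - W(S^G,B^G,G) = \sum_k p^{\on}_k + \sum_k \delta_k - \sum_k p^{\off}_{j_k}(P_{k-1})$. Using the pointwise bound $\delta_k\le p^{\off}_{j_k}(P_{k-1})$ yields $\sum_k p^{\on}_k \ge W^\star - W(S^G,B^G,G)$, and hence
\[
\sum_{k=1}^{m} R_k/k \;=\; \sum_k p^{\on}_k - \sum_k p^{\off}_{j_k}(P_{k-1}) \;\ge\; W^\star - 2W(S^G,B^G,G) \;\ge\; \tfrac{1}{2}.
\]
The standard inequality $\sum_k R_k/k \le (\max_k R_k)\,H_m$ then yields $\max_k R_k = \Omega(1/\log m)$. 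Since the platform can select this equilibrium, $Rev^\star = \Omega(1/\log m)$, and because every element of $\mathbf{X}^\star$ has social welfare at least $Rev^\star$, we conclude $PoA = O(\log m)$.

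The main obstacle I expect is the welfare-increment identity---specifically, justifying that the marginal seller's off-platform price $p^{\off}_{j_k}(P_{k-1})$ in the full Platform Equilibrium coincides with its max competitive price in the isolated off-platform sub-market $(S\setminus P_{k-1}, B, G)$, and the monotone-shrinkage $B^G_{P_k}\subseteq B^G_{P_{k-1}}$ along the algorithm's trajectory that makes the telescoping of $\sum_k \delta_k$ clean. Both rely on the opportunity-path description of competitive prices of \citet{kranton2000competition} and the fact that, unlike in the general unit-demand setting, removing a seller from a homogeneous-goods matching cannot induce cascading re-matchings beyond one displaced buyer; spelling this out rigorously is the technical heart of the argument.
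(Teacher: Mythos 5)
Your route is genuinely different from the paper's. The paper splits into cases ($m^G < \bar m^G$; $m^G \ge \bar m^G$ with $v_1 \ge 1/12$; $m^G \ge \bar m^G$ with $v_1 < 1/12$), and in the hard case applies the harmonic-sum trick only to a prefix $\ell$ of the non-transacting buyers, which then forces a separate argument that the threshold fee $\alpha_\ell$ is at least $1/2$. Your idea of telescoping the revenues $R_k$ of the entire nested family $P_1\subset\cdots\subset P_m$ from Theorem~\ref{thm:equilibrium_set_enlarges_1} avoids the case analysis entirely, and, once repaired, it proves the exact identity $\sum_{k=1}^m R_k/k = W^\star - W(S,B,G)$, which subsumes Lemma~\ref{lem:sppoa_n_leq_m} as well. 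So the strategy is sound and arguably cleaner than the paper's.

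However, two of the intermediate claims you state are false, and the obstacle you flagged at the end is real. First, $p^{\off}_{j_k}(P_{k-1})$ is \emph{not} equal to $W(S\setminus P_{k-1},B,G)-W(S\setminus P_k,B,G)$: the off-platform price is the externality of $j_k$ in the full market $(S,B,G(P_{k-1}))$, and the on-platform sellers shift $j_k$'s opportunity paths. Applying Lemma~\ref{lem:optimal_welfare} to both $W(S,B,G(P_{k-1}))$ and $W(S\setminus\{j_k\},B,G(P_{k-1}))$ gives $p^{\off}_{j_k}(P_{k-1}) = \bigl[W(S\setminus P_{k-1},B,G)-W(S\setminus P_k,B,G)\bigr] - \delta_k$, i.e.\ your $q_k$ \emph{minus} $\delta_k$, so your telescoping $\sum_k p^{\off}_{j_k}(P_{k-1}) = W(S^G,B^G,G)$ fails. (Example: two sellers each linked off-platform only to a buyer of value $10$, plus a second buyer of value $0$; then $q_2=10$ while $p^{\off}_{j_2}(P_1)=0$.) Second, the pointwise bound $\delta_k \le p^{\off}_{j_k}(P_{k-1})$ is also false: the same example has $\delta_2 = 10 > 0 = p^{\off}_{j_2}(P_1)$. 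The repair is that you need neither claim. Since $W(S\setminus\{j_k\},B,G(P_k)) = W(S\setminus\{j_k\},B,G(P_{k-1}))$, the max-price characterization gives directly $p^{\on}_k - p^{\off}_{j_k}(P_{k-1}) = W(S,B,G(P_k)) - W(S,B,G(P_{k-1}))$, so the indifference condition yields $R_k/k = W_k - W_{k-1}$ exactly and the $\delta_k$'s cancel out of the argument entirely. With that substitution, your telescoping, the harmonic bound $\sum_k R_k/k \le (\max_k R_k)H_m$, and the conclusion $Rev^\star=\Omega(1/\log m)$ all go through.
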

\begin{proof}
    We follow the same notations as in the proof of Lemma~\ref{lem:sppoa_n_leq_m}. If $W(S^G,B^G,G) \ge 1/\log m$ we are done. Therefore, assume $W(S^G,B^G,G) < 1/\log m$, and thus $W^\star(S,\bar{B}^G,G) > 1-1/\log m$. Denote the values of the buyers in $B^G$ by $\hat{v}_1\geq \hat{v}_2\geq \ldots \geq \hat{v}_{m^G}$ and the values of buyers in $\bar{B}^G$ by $v_1\geq v_2 \geq \ldots \geq v_{n-m^G}$. We consider the following cases:

    \vspace{0.1cm}

    \noindent\textbf{Case 1:} $m^G < \bar{m}^G.$  The reasoning is same with proof in Lemma~\ref{lem:sppoa_n_leq_m}. The platform can always post $\alpha=1$ and select any set of sellers in $\bar{S}^G$ to join the platform. If the platform chooses $k=1,\ldots ,\bar{n}^G$ sellers from $\bar{S}^G$ to join, then on platform sellers have price $v_k$. Therefore $Rev^\star\geq k\cdot v_k$ for any $k\leq \bar{m}^G$. As $m^G < \bar{m}^G$, $W^\star(\bar{S}^G,\bar{B}^G,G)\geq  W^\star(S,\bar{B}^G,G)/2 > (1-1/\log m)/2\geq 1/3$. We get that $$1/3 < W^{\star}(\bar{S}^G,\bar{B}^G,G) =  \sum_{i=1}^{\bar{m}^G} v_i \leq \sum_{i=1}^{\bar{m}^G}Rev^\star/i \leq H_m\cdot Rev^\star,$$ which implies $Rev^\star=\Omega(1/\log m)$ and the welfare is also  $\Omega(1/\log m)$.

    \vspace{0.1cm}

    \noindent\textbf{Case 2:} $m^G \geq \bar{m}^G$ and $v_1\geq 1/12.$ If $\bar{m}^G>0$, then posting $\alpha=1$ guarantees a seller from $\bar{S}^G$ joining and paying $v_1\geq 1/12$ to the platform, which implies that the welfare is also at least $1/12$. Otherwise, by Theorem~\ref{thm:equilibrium_set_enlarges_1} decrease $\alpha$ until one seller $j\in S^G$ joins in pure equilibrium.
    
    As $j$'s off platform price is at most the value of a buyer in $B^G, p^{\off}_j\leq 1/\log m$ before joining and at least $p^{\on}_j\geq v_1-1/\log m\geq 1/24$ after joining. To have $j$ join, $\alpha$ is the highest value for $j$ to break even: $(1-\alpha)p^{on}_j = p^{\off}_j\leq  1/\log m\ \Rightarrow\ \alpha\geq 1-24/\log m \geq 1/2.$ Thus, $Rev^\star \geq \alpha/24 \geq 1/48$, which again implies the welfare at the equilibrium is a constant-fraction of the optimal welfare. 
    
    \vspace{0.1cm}

    \noindent\textbf{Case 3:} $m^G \geq \bar{m}^G$ and $v_1< 1/12.$ Since there are many sellers transacting off-platform, in optimal matching $S^G$ and $\bar{B}^G$ create a lot of welfare.
    \begin{eqnarray*}
        W^\star(S^G,\bar{B}^G,G) &>& W^\star(S^G,B,G)-W^\star(S^G,B^G,G)\\
        &>& \frac{1}{2}W^\star(S,B,G)-W(S^G,B^G,G) > \frac{1}{2}-1/\log m>\frac{1}{3}
    \end{eqnarray*}
    Now let $\ell = \argmax_{\hat{\ell}} \{\sum_{i=1}^{\hat{\ell}} v_i \mbox{ s.t. } \sum_{i=1}^{\hat{\ell}} v_i < 1/6.\}$ that is, the largest index of buyer in $\bar{B}^G$ such that the sum of the $\ell$ highest buyers values in $\bar{B}^G$ is smaller than $1/6$. As $v_1\leq 1/12$, we have $\sum_{i=1}^{\ell} v_i \geq 1/12$. Further since $l$ largest buyers have smaller than $1/6$ welfare but $\bar{B}^G$ provides at least $1/3$ welfare, there are more than $2\ell$ buyers in $\bar{B}^G$. This in turn requires
    \begin{eqnarray}
        \ell\leq m^G/2 \label{eq:ell_ub}
    \end{eqnarray} because otherwise $2\ell>m^G$, sellers in $S^G$ match to no more than $2\ell$ largest buyers in $\bar{B}^G$, creating welfare smaller than $\sum_{i=1}^{2\ell} v_i< 1/6 *2=1/3$. This contradicts with $W^{\star}(S^G,\bar{B}^{G},G)> 1/3$.
    
    As a next step, we lower bound platform's optimal revenue. Theorem~\ref{thm:equilibrium_set_enlarges_1} says as we lower $\alpha$ with Algorithm~\ref{alg:PE_pure}, for every $k\in \{1,2,...,\ell\}$, there is a pure equilibrium with $k$ sellers on platform. Denote the highest transaction fee when $k$ sellers join by $\alpha_k$. By Lemma~\ref{lem:same_on_price}, if $k$ sellers are on platform, they each has a price of the $k$-th largest buyer not transact off-platform, which is weakly larger than $v_k$. Then platform's optimal revenue satisfies $Rev^\star \ge \alpha_k \cdot k \cdot v_k \geq \alpha_l \cdot k \cdot v_k$ for every $k\in\{1,2,...,\ell\}$. We next show that $\alpha_l \geq 1/2$, which implies $$1/12\leq \sum_{i=1}^\ell v_i\leq \frac{1}{\alpha_l}Rev^\star \sum_{k=1}^\ell\frac{1}{k}\leq 2H_\ell Rev^\star \leq 2H_m Rev^\star,$$ and $Rev^\star = \Omega(1/\log m)$, meaning that we get our welfare guarantee.

    We now show $\alpha_{\ell}\geq 1/2$. By Theorem~\ref{thm:equilibrium_set_enlarges_1}, consider when the Algorithm decrease $\alpha$ to $\alpha_{\ell}$ and finally the $l$-th seller $j_{\ell}$ joins the platform in equilibrium. At $\alpha_{\ell}$, the on-platform gain of $j_{\ell}$ just reaches zero: $(1-\alpha_{\ell})p_{j_{\ell}}^{\on}(P) = p_{j_{\ell}}^{\off}(P)$. We lower bound $p_{j_{\ell}}^{\on}(P)$ and upper bound $p_{j_{\ell}}^{\off}(P)$. By Lemma~\ref{lem:same_on_price}, on platform price equals to the valuation of the $\ell$-th largest buyer not transacting off-platform, which is weakly larger than $v_\ell$. If $\ell\leq \bar{m}^G$, by Lemma~\ref{lem:pe_pure_alpha1} all $\ell$ on platform sellers belong to $\bar{m}^G$ and $\alpha_{\ell}=1$. When $j_{\ell}$ considers joining, there are in total $m^G-(\ell-1)$ sellers not joining, and one of them is matched to a buyer with valuation no larger than $\hat{v}_{m^G-\ell+1}$, as each seller from $S^G$ can unmatch at most one buyer in $B^G$. Since the algorithm adds seller $j_\ell$, by Lemma~\ref{lem:algo_select} $j_\ell$ has the lowest off platform price and is at most $p^{\off}_{j_{\ell}}\leq \hat{v}_{m^G-\ell+1}$. Then
    \begin{eqnarray*}
        (1-\alpha_k)v_\ell \leq (1-\alpha_k)p_{j_{\ell}}^\on(P) = p_{j_k}^\off(P) \leq \hat{v}_{m^G-\ell+1}. 
    \end{eqnarray*}
    Assume towards a contradiction that $\alpha_k< 1/2$, we get that 
    \begin{eqnarray}
        \hat{v}_{m^G/2}\geq \hat{v}_{m^G-\ell} \geq \hat{v}_{m^G-\ell+1}> v_\ell/2,
    \end{eqnarray}
    where the first inequality follows Eq.~\eqref{eq:ell_ub}. Thus, we have that
    \begin{eqnarray}
        W(S^G,B^G,G)> \frac{m^G}{2}\cdot\frac{v_\ell}{2} \geq \frac{m\cdot v_\ell}{8},  \label{eq:lb_off_platform_welfare}    
    \end{eqnarray}
    where the second inequality follows $m^G\geq m/2$. On the other hand, we know that 
    \begin{eqnarray}
        m\cdot v_\ell \geq \sum_{i=\ell+1}^{m^G} v_i = \sum_{1}^{m^G} v_i - \sum_{i=1}^{\ell} v_i = W^\star(S^G,\bar{B}^G,G)-\sum_{i=1}^{\ell} v_i \geq 1/3-1/6 = 1/6. \label{eq:up_on_platform_welfare}
    \end{eqnarray}

    Combining Equations~\eqref{eq:lb_off_platform_welfare} and~\eqref{eq:up_on_platform_welfare} yields that  $$W(S^G,B^G,G)\geq 1/48,$$ contradicting $W(S^G,B^G,G)<1/\log m$. Thus, $\alpha_k\geq 1/2$ which concludes the proof.    
\end{proof}

\subsection{$\Omega(\log(\min(n,m)))$ PoA for Homogeneous-Goods Markets}\label{sec:poa_homo_n_lb}

In this section, we give an example market to lower bound PoA $\Omega(\log(m,n))$, showing the tightness of our analysis in Section~\ref{sec:poa_homo_n_ub}.

\begin{theorem}\label{thm:poa_lower_bound_homo}
    There exists a market for which the price of anarchy is $H_{\min\{n,m\}}=\Omega(\log(\min\{n,m\}))$.
\end{theorem}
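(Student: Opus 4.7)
The plan is to construct a simple homogeneous-goods market in which the optimal welfare is $H_k$ but a revenue-maximizing Platform Equilibrium yields welfare only $1$, where $k=\min\{n,m\}$, thereby giving $\mathrm{PoA}\ge H_k$.

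\textbf{Construction.} I would take $n$ buyers with values $v_i=1/i$ for $i=1,\ldots,n$, and $m$ sellers whose off-platform neighborhoods are empty (so the graph $G$ has no edges at all). The optimal welfare is $W^\star=\sum_{i=1}^{k}1/i=H_k$, achieved by pairing one seller to each of the top-$k$ buyers.

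\textbf{Computing prices and equilibria at $\alpha=1$.} For any $|P|=j$ with $1\le j\le k$, Lemma~\ref{lem:optimal_welfare} gives $W(S\setminus P,B^G,G)=0$ and $\bar v(j)=H_j$, so by Lemma~\ref{lem:same_on_price} every on-platform seller has price exactly $1/j$; since $G$ has no edges, every off-platform seller's competitive price is $0$. At $\alpha=1$, each seller's on-platform utility $(1-\alpha)/j$ and off-platform utility $0$ are both $0$, so by the tie-breaking in the pure Platform Equilibrium definition every configuration with $|P|\in\{1,\ldots,k\}$ is a valid pure equilibrium, and the platform's revenue at each is $j\cdot (1/j)=1$.

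\textbf{Identifying $\alpha^\star$ and extracting the low-welfare equilibrium.} I next need to argue that $\alpha^\star=1$ lies in the platform's $\argmax$. For any $\alpha<1$ and $j<k$, an off-platform seller's payoff $0$ is strictly dominated by joining (the on-platform payoff is the strictly positive $(1-\alpha)/(j+1)$), so the only equilibria at $\alpha<1$ have $|P|\ge k$; for those equilibria the revenue is bounded by $\alpha\cdot k\cdot (1/k)=\alpha<1$. At $\alpha=1$, revenue $1$ is attained exactly, so $\alpha^\star=1$ and the revenue-maximizing set $\mathbf{X}^\star$ contains the equilibrium with $|P|=1$, in which the lone on-platform seller transacts with buyer $b_1$ for welfare $v_1=1$. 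Therefore
$$
\mathrm{PoA}\;\ge\;\frac{W^\star}{W}\;=\;\frac{H_k}{1}\;=\;H_{\min\{n,m\}}\;=\;\Omega(\log(\min\{n,m\})).
$$

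\textbf{Main obstacle.} The subtle step is handling the boundary case $\alpha=1$: all payoffs collapse to $0$, and one must invoke the tie-breaking convention of the equilibrium definition to certify $|P|=1$ as a best-response profile, while simultaneously ruling out any $\alpha<1$ attaining revenue $\geq 1$ through some alternative equilibrium structure. This is what makes the supremum revenue $1$ attainable only at the extreme fee, where the platform's freedom to select among tied equilibria lets it pick the adversarial single-seller configuration serving only $b_1$. Small perturbations of $\{v_i\}$ would eliminate the tie at the cost of messier algebra, but the cleaner construction above is enough to nail down the asymptotic bound.
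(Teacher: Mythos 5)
Your proposal is correct and follows essentially the same construction as the paper: a market with no off-platform edges and harmonically decreasing buyer values, where at $\alpha=1$ the platform's revenue is the same for any number $1\le j\le \min\{n,m\}$ of joining sellers, so the single-seller, welfare-$1$ equilibrium is revenue-optimal while the optimum is $H_{\min\{n,m\}}$. The only cosmetic difference is that the paper perturbs the top buyer's value by $\epsilon$ so that the single-seller equilibrium is \emph{strictly} revenue-dominant, whereas you keep all revenues tied and invoke the worst-case (lowest-welfare) selection built into the definitions of $\mathbf{X}^\star$ and the PoA; both are valid.
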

\begin{proof}
        Figure~\ref{fig:logn_poa} depicts a $n$-buyer $n$-seller market where no transaction can take place without the platform. Buyer $b_1$ has value $n+\epsilon$ for any seller,
and buyer $b_i$, $i=2\ldots n$, has value $\frac{n}{i}$ for any seller.
 Since there are no possible  transactions without joining the platform, each seller weakly prefers to join the platform, so $\alpha^{\star}=1$ and there are multiple equilibria. 
As each buyer values the sellers equally, the platform is in effect choosing, through its choice of $\alpha$, how many sellers join the platform, and in turn the number of buyers that transact. 
    
    If only one seller joins the platform, the competitive price is $n+\epsilon$, and any other number $1<k\leq n$ of sellers joining the platform results in the  competitive price of $n/k$ for each seller
 and total platform revenue $n$. 
For this reason, the platform  prefers one seller to join the platform, in order to maximize its revenue. This results in PoA $\frac{(1+\frac{1}{2}+\frac{1}{3}+\dots+\frac{1}{n})n+\epsilon}{n+\epsilon} \approx H_n$ when $\epsilon$ is small.

If we increase the number of sellers $m$ to be larger than $n$, then still the revenue optimal equilibrium is the one where only one seller joins, whereas the welfare-optimal allocation matches $n$ sellers to $n$ buyers, obtaining a  the $H_n= H_{\min\{n,m\}}$ lower bound. If $m$ is smaller than $n$, we can have $m$ buyers with values $1+\epsilon$, $1/2$, $1/3$,\ldots,$1/m$ and then $n-m$ buyers with value 0, and get a $H_m =H_{\min\{n,m\}}$ lower bound.
\end{proof}

\begin{figure} 
    \centering
    \begin{tikzpicture}[scale=1.5]
        \foreach \i/\label in {1/$b_1$, 2.5/$b_2$, 4/$b_3$, 7/$b_i$,10/$b_n$}
            \node[draw, shape=rectangle, minimum size=0.6cm] (\label) at (\i, 2) {\label};
        
        % Circle vertices
        \foreach \i/\label in {1/$s_1$, 2.5/$s_2$, 4/$s_3$, 7/$s_i$,10/$s_n$}
            \node[draw, shape=circle, minimum size=0.6cm] (\label) at (\i, 0) {\label};
            
        % value vertices
        \foreach \i/\j in {1/$n+\epsilon$, 2.5/$\frac{n}{2}$, 4/$\frac{n}{3}$, 7/$\frac{n}{i}$, 10/$1$}
            \node at (\i, 2.5) {\j};
        
        \foreach \i in {$b_1$,$b_2$,$b_3$,$b_i$,$b_n$}
            \foreach \j in {$s_1$,$s_2$,$s_3$,$s_i$,$s_n$}
                \draw[line width=1.1pt, blue, dashed] (\i) -- (\j);
     
        % Ellipsis
        \node at (5.5, 2) {$\ldots$};
        \node at (8.5, 2) {$\ldots$};
        \node at (5.5, 0) {$\ldots$};
        \node at (8.5, 0) {$\ldots$};
        \node at (5.5, 2.5) {$\ldots$};
        \node at (8.5, 2.5) {$\ldots$};

        % captions
        \node[left] at (0.5, 2)  {Buyers};
        \node[left] at (0.5, 0)  {Sellers};
        \node[left] at (0.5, 2.5)  {Values};
    
    \end{tikzpicture}
        \caption{A homogeneous goods market with price of anarchy $H_n$. There are no off-platform edges. Buyer $b_1$ has  value $n+\epsilon$ for any seller, and buyer $b_i$, $i=2,\ldots,n$, has value $n/i$ for any seller.} \label{fig:logn_poa}
\end{figure}
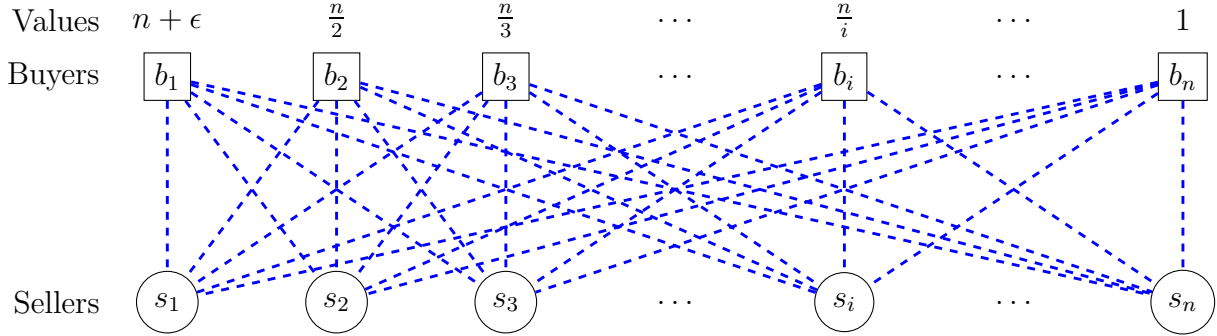

\section{Main Positive Result: Market Efficiency with a Regulated Platform} \label{sec:poa-regulated}

The previous section shows PoA can scale linearly in unregulated platforms. In this section, we show even light regulation can substantially increase welfare in markets with general unit-demand valuations. We prove this by providing a lower bound on the worst possible welfare at a given fee cap $\alpha$. Since the welfare guarantee is monotonically decreasing in $\alpha$, we assume the platform posts the cap $\alpha$ as its fee. It might be the case that in order to maximize its revenue, the revenue sets a lower transaction fee, but this only gives a higher welfare guarantee. Interestingly, the results in this section do not assume the platform can choose a specific Platform Equilibrium, as we prove the guarantee for every fee smaller than the cap, and for every equilibrium at a given fee.

This section is structured as follow. In Section~\ref{sec:poa_cap_pure}, we give a tight bound on the Price of Anarchy of pure Platform equilibria for a fixed transaction fee $\alpha$. As Proposition~\ref{prop:no_pure} shows a a pure equilibrium need not exist, in Section~\ref{sec:poa_mixed} we extend the analysis to the set of mixed Platform Equilibria. In Section~\ref{sec:poa_tight}, we show our analysis is tight for any fee $\alpha$.

\subsection{Pure Equilibrium Bound} \label{sec:poa_cap_pure}

Bounding the PoA of a Platform Equilibrium requires buyers' valuations for the allocation at the equilibrium while the strategic actors in the model are the sellers. In order to relate the two we use the characterization of maximum competitive prices given by Eq.~\eqref{eq:max_price}, as they directly relate to the sellers' utility as well as to buyers' welfare. 
\begin{theorem}
    The Price of Anarchy of pure Platform Equilibrium, when the platform sets a transaction-fee $\alpha\in [0,1)$, is at most $\frac{2-\alpha}{1-\alpha}$. \label{thm:pure_poa}
\end{theorem}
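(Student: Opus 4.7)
I charge each value $v_{ij}$ in a welfare-maximizing matching $\sigma^\star$ on the complete bipartite graph (so $W^\star=\sum_{(i,j)\in\sigma^\star}v_{ij}$) to buyer utilities and seller prices at the equilibrium $P$. Let $(u^P,p^P)$ denote the maximum-price competitive equilibrium on $G(P)$, so $W_P=\sum_i u_i^P+\sum_{j\in P}p_j^{\on}(P)+\sum_{j\notin P}p_j^{\off}(P)$. Partition the matched pairs of $\sigma^\star$ into $A=\{(i,j):j\in P\}$, $B_1=\{(i,j):j\notin P,\ g_{ij}=1\}$, and $L=\{(i,j):j\notin P,\ g_{ij}=0\}$. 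For pairs in $A\cup B_1$ the edge $(i,j)$ is present in $G(P)$, so buyer $i$'s demand inequality at $(u^P,p^P)$ gives the clean bound $v_{ij}\le u_i^P+p_j^{\on}(P)$ for $A$ and $v_{ij}\le u_i^P+p_j^{\off}(P)$ for $B_1$.

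\textbf{The $L$ case (main obstacle).} For $(i,j)\in L$ the edge is missing from $G(P)$ and buyer demand says nothing. I invoke instead seller $j$'s off-platform equilibrium condition $(1-\alpha)\,p_j^{\on}(P\cup\{j\})\le p_j^{\off}(P)$, together with the maximum competitive price characterization $p_j^{\on}(P\cup\{j\})=W(S,B,G(P\cup\{j\}))-W(S\setminus\{j\},B,G(P))$. Lower-bounding $W(S,B,G(P\cup\{j\}))$ by the weight of the matching that uses $(i,j)$ (feasible because $j$ is on-platform in $G(P\cup\{j\})$) and is otherwise optimal on the remaining agents yields
\[
    p_j^{\on}(P\cup\{j\})\ \geq\ v_{ij}-q_i^{-j},\qquad q_i^{-j}\,:=\,W(S\setminus\{j\},B,G(P))-W(S\setminus\{j\},B\setminus\{i\},G(P)),
\]
where $q_i^{-j}$ is the buyer analogue of the same characterization: buyer $i$'s maximum competitive utility in the market with seller $j$ removed. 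Combining gives $v_{ij}\le q_i^{-j}+p_j^{\off}(P)/(1-\alpha)$. A standard monotonicity fact for matching markets---removing a seller weakly decreases every buyer's maximum competitive utility, a consequence of submodularity of matching welfare in the seller set---then bounds $q_i^{-j}\le u_i^{\max}(P)$, where $u_i^{\max}(P)$ is buyer $i$'s maximum competitive utility in the unperturbed $G(P)$. This last step is the key trick: it converts a quantity living in a perturbed market into a quantity we can budget against the equilibrium at $P$.

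\textbf{Summation.} Since $\sigma^\star$ is a matching, each buyer and each seller appears at most once in the sums. The buyer contributions total $\sum_{A\cup B_1}u_i^P+\sum_L q_i^{-j}\le\sum_i u_i^{\max}(P)\le W_P$, where the last inequality uses the minimum-price decomposition $W_P=\sum_i u_i^{\max}(P)+\sum_j p_j^{\min}(P)$ and $p_j^{\min}\ge 0$. The seller contributions total $\sum_A p_j^{\on}+\sum_{B_1}p_j^{\off}+\tfrac{1}{1-\alpha}\sum_L p_j^{\off}\le W_P+\tfrac{\alpha}{1-\alpha}W_P$, using $\sum_{j\in P}p_j^{\on}+\sum_{j\notin P}p_j^{\off}\le W_P$ for the first chunk and $\sum_L p_j^{\off}\le W_P$ for the $\tfrac{\alpha}{1-\alpha}$ piece. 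Adding the two parts gives $W^\star\le 2W_P+\tfrac{\alpha}{1-\alpha}W_P=\tfrac{2-\alpha}{1-\alpha}W_P$, which is the claimed PoA bound. The entire argument is anchored at the equilibrium on $G(P)$: the only nontrivial move is the two-sided application of the max-price characterization in the $L$ case plus the monotonicity lemma for $q_i^{-j}$.
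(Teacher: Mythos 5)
Your proof is correct, but it takes a genuinely different route from the paper's. The paper first reduces to the case $P=\emptyset$ (by replacing the initial network with $G(P)$ and checking that ``nobody joins'' remains an equilibrium), and then charges each optimal edge $v_{i^\star(j)j}$ to $\frac{1}{1-\alpha}v_{i^G(j)j}+v_{i^\star(j)t(j)}$ via a ``twin'' construction, manipulating only welfare differences of the form $W(\cdot)$; summing over sellers, the matched-edge terms contribute $\frac{1}{1-\alpha}W_P$ and the distinct twins another $W_P$. You instead work directly at $G(P)$ and run a smoothness-style charging argument: edges of $\sigma^\star$ that are present in $G(P)$ are handled by buyer envy-freeness ($v_{ij}\le u_i^P+p_j$), and missing edges by combining the seller's no-join condition with the max-price characterization and a buyer-side marginal-contribution bound. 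The two proofs are morally dual---your $u_i$-terms play the role of the paper's twin values and your $p_j$-terms the role of its matched-edge values---and both land on $2W_P+\frac{\alpha}{1-\alpha}W_P$. Your route avoids the reduction step but pays for it by importing two classical assignment-game facts the paper does not need here: (i) that at minimum competitive prices each buyer's utility equals its marginal contribution, so $\sum_i u_i^{\max}(P)\le W_P$; and (ii) the cross-side complementarity $W(S,B)-W(S,B\setminus\{i\})\ge W(S\setminus\{j\},B)-W(S\setminus\{j\},B\setminus\{i\})$, which yields $q_i^{-j}\le u_i^{\max}(P)$. Both are true and standard (Shapley's complements/substitutes theorem for the optimal assignment problem, and the Demange--Gale/Leonard characterization of the buyer-optimal equilibrium), but note that (ii) is \emph{not} the same statement as, nor an immediate consequence of, submodularity of $W$ in the seller set (Lemma~\ref{lem:submodularity}): agents on the same side are substitutes while agents on opposite sides are complements, and it is the latter you are invoking. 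With that justification corrected, the argument is complete and gives the same tight bound.
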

\begin{proof}
    Let $v_1,\ldots, v_n$ be buyers' valuations,
 and let $G$ be a buyer-seller network. We first show that when no seller chooses to join the platform, we already get the welfare guarantee: 
$$\frac{W^\star}{W(S,B,G)} \leq \frac{2-\alpha}{1-\alpha}.$$ 

    Consider this case of $P=\emptyset$, and denote $G(j)$ as the network when only seller $j$ joins the platform. Let 
    \begin{eqnarray}
        p_{j}^{\on} = W(S,B,G(j)) - W(S\setminus\{j\},B,G(j)) = W(S,B,G(j)) - W(S\setminus\{j\},B,G)\label{eq:p_on}
    \end{eqnarray} 
    denote $j$'s price when $j$ joins the platform. Let $p_{j}^{\off}$ denote seller $j$'s price when not joining the platform. We can bound this price by $j$'s maximum competitive price,
    \begin{eqnarray}
        p_{j}^{\off} \leq W(S,B,G) - W(S\setminus\{j\},B,G). \label{eq:p_off}
    \end{eqnarray} 

    As seller $j$ does not join the platform, we have  $$p_{j}^{\off} \geq (1-\alpha)p_{j}^{\on}.$$ By Equations~\eqref{eq:p_on} and~\eqref{eq:p_off}, we have
    \begin{eqnarray*}
        W(S,B,G(j)) - W(S,B,G) & \leq & p^{\on}_j-p^\off_j \\ 
        &\leq & \alpha\cdot p^\on_j \\ 
        & \leq & \frac{\alpha}{1-\alpha}\cdot p^\off_j \\
        & \leq & \frac{\alpha}{1-\alpha}\left(W(S,B,G) - W(S\setminus\{j\},B,G)\right).
    \end{eqnarray*}

    Rearranging gives 
    \begin{eqnarray}
    W(S,B,G(j)) \leq
        \frac{1}{1-\alpha}W(S,B,G) - \frac{\alpha}{1-\alpha}W(S\setminus\{j\},B,G). \label{eq:bound1}
    \end{eqnarray}

    Let $i^\star(j)$ be the buyer matched to $j$ in the ideal matching $W^\star$. Since this is also one of the options to match $j$ in $G(j)$ is to $i^\star(j)$, we have
    \begin{eqnarray}
        W(S,B, G(j)) & \geq& v_{i^\star(j)j} + W(S\setminus\{ j\}, B\setminus\{i^\star(j)\}, G). \label{eq:bound2}
    \end{eqnarray}

    Combining Equations~\eqref{eq:bound1} and~\eqref{eq:bound2}, we get
    \begin{eqnarray}
        v_{i^\star(j)j} \leq \frac{1}{1-\alpha}W(S,B,G)-\frac{\alpha}{1-\alpha}W(S\setminus\{j\},B,G)-W(S\setminus\{j\},B\setminus\{i^\star(j)\},G). \label{eq:bound3}
    \end{eqnarray}

    We wish to relate the right-hand side of Eq.~\eqref{eq:bound3} to terms that relate to $W(S,B,G)$ and $W^\star$.
 Let $i^G(j)$ be the buyer matched to $j$ in $W(S,B,G)$. First, consider $W(S\setminus\{j\},B,G)$. By 
definition, we have
    \begin{eqnarray*}
        W(S,B,G) = W(S\setminus\{j\}, B\setminus\{i^G(j)\}, G) + v_{i^G(j)j}. 
    \end{eqnarray*}
    Thus, we have,
    \begin{eqnarray}
        W(S\setminus\{j\},B,G) \geq W(S\setminus\{j\}, B\setminus\{i^G(j)\}, G) = W(S,B,G)-v_{i^G(j)j}.\label{eq:bound4}
    \end{eqnarray}

    As for $W(S\setminus\{j\},B\setminus\{i^\star(j)\},G)$, 
seller  $j$ is matched to $i^G(j)$ in $G$, while $i^\star(j)$ is matched to some potentially different vertex in $G$, which we call \textit{the twin of $j$}
 and denote by $t(j)$. We  have the following inequality,
    \begin{eqnarray*}
        W(S,B,G) &\leq & W(S\setminus\{j,t(j)\},B\setminus\{i^G(j),i^\star(j)\},G) + v_{i^G(j)j} + v_{i^\star(j)t(j)}\nonumber \\
        &\leq & W(S\setminus\{j\},B\setminus\{i^\star(j)\},G) + v_{i^G(j)j} + v_{i^\star(j)t(j)}, 
    \end{eqnarray*}
    where the first inequality is an equality if $j\neq t(j)$. Rearranging gives
    \begin{eqnarray}
        W(S\setminus\{j\},B\setminus\{i^\star(j)\},G) \geq W(S,B,G) - (v_{i^G(j)j} + v_{i^\star(j)t(j)}).\label{eq:bound5}
    \end{eqnarray}

    Combining Equations~\eqref{eq:bound3},~\eqref{eq:bound4}, and~\eqref{eq:bound5}, we get
    \begin{eqnarray}
        v_{{i^\star}(j)j} &\leq& \frac{1}{1-\alpha}W(S,B,G)-\frac{\alpha}{1-\alpha}\left(W(S,B,G)-v_{i^G(j)j}\right) - \left(W(S,B,G) - (v_{i^G(j)j} + v_{i^\star(j)t(j)})\right) \nonumber \\
        & = & \frac{1}{1-\alpha}\cdot v_{i^G(j)j} + v_{i^\star(j)t(j)}. \label{eq:combine_three_terms_right}
    \end{eqnarray}

    Summing over all sellers $j$, we have
    \begin{eqnarray*}
        W^\star = \sum_j v_{{i^\star}(j)j} & \leq & \sum_j \left(\frac{1}{1-\alpha}\cdot v_{i^G(j)j} + v_{i^\star(j)t(j)}\right)\\ & = &  \frac{1}{1-\alpha}\cdot\sum_j v_{i^G(j)j} + \sum_j v_{i^\star(j)t(j)}\\
        & \leq & \frac{1}{1-\alpha} W(S,B,G) + \sum_j v_{i^G(j)j}\\
        & = & \frac{2-\alpha}{1-\alpha} W(S,B,G),
    \end{eqnarray*}
    where the last  inequality follows because each seller $j$ has a distinct twin $t(j)$, so we sum over distinct edges of $G$.   

    To finish the proof, 
we consider a case where a non-empty
set $P$ of sellers join in a Nash equilibrium. Let $G(P)$ denote the network where sellers in $P$ are connected to all buyers,
 and sellers not in $P$ are connected only via links in the original network $G$.
 Now consider another market where the buyers have the same valuations as the original market, but the  initial network is $G'=G(P)$. For the same transaction fee $\alpha$, it is
 an equilibrium  for no seller to join the platform  in $(S,B,G')$.
 Indeed, sellers in $P$ are already linked to all buyers in this modified market,
 so their utility can only decrease in joining,
 and if there was a seller $j\notin P$ that would strictly prefer to join at $\alpha$, then
 $P$ is not an equilibrium in the original market.
    By the above, we have $\frac{W^\star(S,B,G')}{W(S,B,G')}\leq \frac{2-\alpha}{1-\alpha}$. Clearly, $W^\star(S,B,G')=W^\star(S,B,G)$ and  $W(S,B,G')=W(S,B,G(P))$, which concludes the proof.
\end{proof}

Theorem~\ref{thm:pure_poa} bounds the PoA of pure equilibrium at $\alpha$, should it exist. However,
we know from Proposition~\ref{prop:no_pure} that  there exists markets and transaction fees $\alpha$ for which there is no pure Platform Equilibrium.
 This motivates us to extend our analysis to mixed equilibria.

\subsection{Mixed Equilibrium Bound} \label{sec:poa_mixed}

In this section, we prove the same bound works for mixed equilibria. For mixed equilibria, sellers randomize between joining the platform and not joining (and thus only being able to transact with buyers who whom they have an existing connection). This possibly results in $2^{m}$ graph realizations and a competitive price for each of them. To reason about the probability of each realization through sellers' incentives, and taking expectation over the max welfare of all graphs is quite hard.

We thus reduce the argument for PoA of mixed equilibria to the case of pure equilibria. One way to do this is to reduce randomization of sellers' mixed strategy to randomization of nature's choice, in a \textit{Bayesian game}. Take a mixed Nash equilibrium, $\x$, and define a Bayesian game where sellers have pure strategies (joining vs.~not joining), but where in the event
that a seller $j$ does not join they are still connected to all buyers with 
 probability $x_j$.
We show that one of the \textit{Bayes-Nash Equilibrium} in this Bayesian game has no seller choosing to join, which corresponds to the smallest-welfare equilibrium. We then show this smallest-welfare equilibrium has the same expected social welfare as the mixed Nash equilibrium 
in the corresponding complete information game. The rest of the proof follows the same arguments as the first part of Theorem~\ref{thm:pure_poa}, but where the quantities for each seller $j$ are in expectation over the links formed. In the following, we give a sketch that gives the idea of the construction of the Bayesian game, and defer the full proof to  Appendix~\ref{app:mixed_proof}. 
\begin{restatable}{theorem}{mixedPoA}
\label{thm:mixed_poa}
    The Price of Anarchy of mixed Platform Equilibrium,  when the platform sets a transaction-fee $\alpha\in [0,1)$, is at most $\frac{2-\alpha}{1-\alpha}$.
\end{restatable}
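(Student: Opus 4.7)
Given a mixed Platform Equilibrium $\x=(x_1,\ldots,x_m)$, my plan is to construct an auxiliary Bayesian game whose randomization sits with nature rather than the sellers, and then mimic the derivation of Theorem~\ref{thm:pure_poa} in expectation. In the Bayesian game, nature first independently ``augments'' each seller $j$ with probability $x_j$---augmentation means adding all edges from $j$ to every buyer---and then each seller privately observes their own augmentation status and chooses whether to join the platform. A seller who joins pays the fee $\alpha$ as usual, while a non-joining seller who happens to be augmented is still connected to all buyers but pays nothing. By construction the augmented set $A$ is distributed exactly as $P\sim\x$ in the original game.

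The first step is to show that the strategy profile ``no seller ever joins, regardless of augmentation type'' is a Bayes--Nash equilibrium of this Bayesian game. When $j$ is augmented, declining to join strictly dominates joining because both actions lead to the same competitive network, but joining pays the $\alpha$ fraction unnecessarily. When $j$ is non-augmented and $x_j<1$, the best-response condition becomes $\E_{A_{-j}}[\tilde p_j^{\off}(A_{-j})] \ge (1-\alpha)\,\E_{A_{-j}}[\tilde p_j^{\on}(A_{-j})]$, which---by independence of the augmentation across sellers---coincides with the ``do not deviate to always-joining'' condition for $j$ in the original mixed equilibrium $\x$. Sellers with $x_j=1$ are always augmented, so the ``non-augmented'' type has probability zero and requires no check; equivalently one absorbs $\{j:x_j=1\}$ into the base network up front and runs the argument on the residual equilibrium. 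Since nobody joins in the BNE, the realized graph is exactly $G(A)$ and so the expected welfare of the BNE matches $\E_{P\sim\x}[W(S,B,G(P))]$ of the original mixed equilibrium.

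From here the rest is an expectation-based copy of the pure-case derivation. For each seller $j$ I start from the BNE inequality and the maximum-competitive-price characterizations to obtain the analogue of Eq.~\eqref{eq:bound1}: $\E_A[W(S,B,G(A\cup\{j\}))] \le \tfrac{1}{1-\alpha}\,\E_A[W(S,B,G(A))] - \tfrac{\alpha}{1-\alpha}\,\E_A[W(S\setminus\{j\},B,G(A))]$. Combining this with the realization-wise lower bound $W(S,B,G(A\cup\{j\})) \ge v_{i^\star(j)j} + W(S\setminus\{j\},B\setminus\{i^\star(j)\},G(A))$ and the expectation analogues of Equations~\eqref{eq:bound4} and~\eqref{eq:bound5} yields
\begin{equation*}
v_{i^\star(j)j} \;\le\; \tfrac{1}{1-\alpha}\,\E_A[v_{i^{G(A)}(j)\,j}] \;+\; \E_A[v_{i^\star(j)\,t(j,A)}],
\end{equation*}
where $i^{G(A)}(j)$ and $t(j,A)$ are the realization-dependent analogues of $i^G(j)$ and $t(j)$. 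Summing over $j$, the first term contributes $\tfrac{1}{1-\alpha}\,\E_A[W(S,B,G(A))]$ because $\{(i^{G(A)}(j),j)\}_j$ is the welfare-maximizing matching in $G(A)$, while the twin term is bounded by $\E_A[W(S,B,G(A))]$ because for every realization $A$ the edges $\{(i^\star(j),t(j,A))\}_j$ are distinct edges of $G(A)$. Together these give $W^\star \le \tfrac{2-\alpha}{1-\alpha}\,\E_A[W(S,B,G(A))]$ as required.

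The main obstacle is getting the Bayesian construction right: one must pick the augmentation model so that ``nobody joins'' is a BNE whose incentive condition translates the original mixed-equilibrium condition into precisely the form $\E[p^{\off}]\ge (1-\alpha)\,\E[p^{\on}]$ used by the pure-case proof, and so that the induced graph distribution matches $\x$ in welfare. Once this alignment (together with the minor bookkeeping for $x_j=1$ sellers) is in place, the arithmetic of Theorem~\ref{thm:pure_poa} carries over essentially verbatim, with every single-graph quantity replaced by its expectation over the random augmentation $A$.
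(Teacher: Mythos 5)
Your proposal is correct and follows essentially the same route as the paper: a Bayesian game in which nature forms each seller's links to all buyers with probability $x_j$, a proof that ``no seller joins'' is an equilibrium of that game with the same expected welfare as $\x$, and then the pure-strategy computation of Theorem~\ref{thm:pure_poa} repeated in expectation over the realized graph. The one structural difference is that the paper has sellers commit before learning their type whereas you let them observe their augmentation first; this makes your incentive constraint for the non-augmented type conditional on $j\notin A$, so one additional (easy) step is needed --- split $\E_A$ according to $j$'s augmentation status and note that when $j\in A$ the per-realization inequality is trivial since $G(A\cup\{j\})=G(A)$ --- in order to reach the unconditional bound $\E_A[W(S,B,G(A\cup\{j\}))]\le\frac{1}{1-\alpha}\E_A[W(S,B,G(A))]-\frac{\alpha}{1-\alpha}\E_A[W(S\setminus\{j\},B,G(A))]$ that your summation over sellers requires.
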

\begin{proof}[Proof Sketch]
    Consider a mixed Platform Equilibrium, $\x = (x_1,\ldots, x_n)$, for a buyer-seller network $(S,B,G)$, where $x_j$ is the probability that seller $j$ joins the platform. We define the following Bayesian game:
    \begin{itemize}
      %  \item $G$ is the base graph.
        \item For each seller $j$, with probability $x_j$, $j$ can transact with all buyers ("type 1"), and with probability $1-x_j$, it can only transact with the buyers linked to $j$ in $G$ ("type 2").
        \item The platform posts a transaction fee $\alpha$.
        \item Before knowing the realization of its type,
 a seller can choose whether or not to join the platform, in which case the seller 
can with certainty transact with all buyers.
        \item Given the realized graph $G'$,which depends on the realized types as well as actions of each seller, a competitive equilibrium with the maximum price is formed. %Market clears with the maximum competitive prices.
        \item A seller joining the platform pays an $\alpha$-fraction of their revenue to the platform, regardless of its type.% \dcp{also those with type 1?}
    \end{itemize}

    We  now show that there exists a smallest-welfare equilibrium in the Bayesian game where no seller joins. %\dcp{`worst-case equilibrium`, not `bad`?}
For this, consider a seller $j$ that adopts probability $x_j$ in the mixed Nash Platform Equilibrium
of the original, complete information platform game. There are three cases to consider:

\noindent\textbf{Case 1:} $x_j=0$ In this case, $j$'s expected utility from not joining the platform \textit{in the original, complete information game} is at least as much as $j$'s expected utility from joining given $\x_{-j}$. \textit{In the Bayesian game}, given that no other seller joins the platform, $j$'s utility from either joining or not joining the platform is exactly $j$'s utility for joining or not joining in the complete information game,
 as other links are formed according to $\x_{-j}$.

\noindent\textbf{Case 2:} $x_j=1$: In this case, in the Bayesian game, $j$ is linked to all buyers with probability $1$. Thus,  joining the platform in the Bayesian game does not increase $j$'s utility.

\noindent\textbf{Case 3:} $x_j\in (0,1)$: In this case, in the original, complete information game, $j$'s expected utility is the same for joining and not joining given $\x_{-j}$; otherwise, $j$ would deviate and $\x$ would not be an equilibrium.
 Let $u_j^\x$ denote $j$'s utility at $\x$ in the original game, 
which is also $j$'s utility from either joining or not joining the platform in the original game. 
        
\textit{In the Bayesian game}, consider the case where $j$ does not join the platform, and when no other seller joins the platform. If $j$'s links stay as they were in $G$, which happens with probability $1-x_j$, then $j$'s expected utility is exactly $u_j^\x$, as this is $j$'s utility from not joining the platform in the original, complete information game when all other sellers' links are formed according to $\x_{-j}$.
 If $j$'s links are formed, which happens with probability $x_j$,  then  $j$'s expected utility is at least as much as when it joins in the original game given all other sellers' links are formed according to $\x_{-j}$. This is because the  distribution on link formation is the same, while the seller does not have to pay the $\alpha$ fee to the platform. 
Therefore, $j$'s expected utility from not joining is at least $u_j^\x$. 

If $j$ does join the platform in the Bayesian game, then $j$'s utility is exactly the utility of $j$ for joining the platform in the original game, as the  distribution on link formation is the same as if they join in the original setup, and the fee $j$ pays is the same. Therefore, $j$'s utility is exactly $u_j^\x$. We get that $j$'s utility for not joining in the Bayesian game is at least as $j$'s utility for joining.

\vspace{0.3cm}

In all the above cases, for each seller $j$,  their expected utility for not joining the platform in the Bayesian game, given all other sellers do not join, is at least as much as  their 
expected utility for joining the platform in the Bayesian game.
Thus,  there is an equilibrium  
 in the Bayesian game where no seller joins.
The expected welfare of this equilibrium is exactly the same as the expected welfare of the original complete information game in mixed Platform Equilibrium $\x$, as we have the same distribution over the formation of links, and the competitive equilibrium formed 
always maximizes the welfare given  the links. 
    
    To conclude the proof, we
 show that if no sellers join in the Bayesian game,
 then the PoA 
with respect to pure strategies in the Bayesian game
is at most $\frac{2-\alpha}{1-\alpha}$. 
The proof follows the same arguments of the proof of this case in Theorem~\ref{thm:pure_poa}, but the quantities we reason about for seller $j$ are in expectation over $\x_{-j}$. For completeness, we give the full analysis in Appendix~\ref{app:mixed_proof}.      
\end{proof}

In markets without a platform, social welfare is defined as the sum of buyers' and sellers' utilities. We notice that the above result also implies welfare guarantees if we consider the benchmark used in settings without a platform. 

\begin{corollary}
    For any Platform Equilibria with $\alpha$ transaction fee, the sum of buyers and sellers' utility is at least $\frac{(1-\alpha)^2}{2-\alpha}$ fraction of the optimal welfare $W^\star$.
\end{corollary}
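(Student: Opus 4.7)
The plan is to derive this corollary directly from Theorem~\ref{thm:mixed_poa}, which gives $W(S,B,G(P)) \geq \frac{1-\alpha}{2-\alpha}\,W^\star$ in any Platform Equilibrium. The key observation is that for each realized set $P$ of on-platform sellers, the total welfare $W(S,B,G(P))$ decomposes as the sum of buyers' utilities, sellers' utilities, and the platform's revenue, namely
\begin{equation*}
    W(S,B,G(P)) \;=\; \sum_{i} u_i(\hat{\p},\hat{\alloc}) \;+\; \sum_{j\notin P} \hat{p}_j \;+\; (1-\alpha)\sum_{j\in P}\hat{p}_j \;+\; \alpha\sum_{j\in P}\hat{p}_j .
\end{equation*}
So the sum of buyers' and sellers' utilities equals $W(S,B,G(P))$ minus the platform revenue $\alpha\sum_{j\in P}\hat{p}_j$.

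Next, I would upper bound the platform revenue by the total welfare. Since competitive prices are supported by welfare-maximizing trades and buyers must have non-negative utility, for each transacting seller $j\in P$ the price $\hat{p}_j$ is at most the value of its matched buyer, so $\sum_{j\in P}\hat{p}_j \leq W(S,B,G(P))$. Thus the platform revenue is at most $\alpha\cdot W(S,B,G(P))$, which gives
\begin{equation*}
    \text{(buyers' + sellers' utility)} \;\geq\; (1-\alpha)\,W(S,B,G(P)) .
\end{equation*}

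Combining this with the PoA guarantee from Theorem~\ref{thm:mixed_poa} and taking expectation over $P\sim\x$ yields
\begin{equation*}
    \E_{P\sim\x}\bigl[\text{buyers' + sellers' utility}\bigr] \;\geq\; (1-\alpha)\cdot\frac{1-\alpha}{2-\alpha}\,W^\star \;=\; \frac{(1-\alpha)^2}{2-\alpha}\,W^\star,
\end{equation*}
which is the desired bound. There is no real obstacle here: the only non-trivial step is the price bound $\sum_{j\in P}\hat{p}_j \leq W(S,B,G(P))$, which follows from the standard fact that in a competitive equilibrium each buyer's utility is non-negative, so summing $v_{i\,j}\geq \hat{p}_j$ over matched pairs gives the inequality.
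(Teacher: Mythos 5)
Your proof is correct and follows essentially the same route as the paper: bound the platform's revenue by $\alpha\cdot W(S,B,G(P))$ using non-negativity of buyer utilities, subtract it from the realized welfare to get $(1-\alpha)W(S,B,G(P))$ for the buyers' and sellers' share, and then apply the $\frac{1-\alpha}{2-\alpha}$ welfare guarantee of Theorem~\ref{thm:mixed_poa}. The only difference is that you spell out the welfare decomposition and the price bound explicitly, which the paper leaves implicit.
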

\begin{proof}
    When the set of sellers $P$ join the platform in equilibrium, platform's revenue is at most $\alpha W(S,B,G(P))$. The sum of buyers and sellers' utility is 
    $$ (1-\alpha)W(S,B,G(P))\geq (1-\alpha)\frac{1-\alpha}{2-\alpha}W^\star$$
\end{proof}

\subsection{Tight Instance} \label{sec:poa_tight}

We now show that our PoA analysis is tight: for any transaction fee $\alpha$, there exists an instance of the platform game in which the price of anarchy of Platform Equilibria with pure strategies is exactly $\frac{2-\alpha}{1-\alpha}$.
\begin{theorem}\label{thm:poa_tight}
    For every transaction fee $\alpha\in[0,1)$,  
    there exists a market for which the ratio of optimal welfare to social welfare of a Platform Equilibrium is indeed $\frac{2-\alpha}{1-\alpha}$, and in that Platform Equilibrium sellers use pure strategies.
\end{theorem}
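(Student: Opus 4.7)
The plan is to exhibit an explicit small market in which $P=\emptyset$ is a pure Platform Equilibrium and the ratio $W^\star/W(S,B,G(P))$ equals $\frac{2-\alpha}{1-\alpha}$ exactly. Guided by which inequalities must be tight in the proof of Theorem~\ref{thm:pure_poa}, I would take the ``crossing'' market on $B=\{b_1,b_2\}$ and $S=\{s_1,s_2\}$ with valuations
\[
v_{11}=v_{22}=2-\alpha,\qquad v_{12}=v_{21}=1-\alpha,
\]
and off-platform graph $G$ consisting of just the two crossing edges $s_1\text{-}b_2$ and $s_2\text{-}b_1$; all other valuations are zero.

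I would first compute the welfares. In $k_{B,S}$ the ``direct'' matching $\{s_1\to b_1,\,s_2\to b_2\}$ has value $2(2-\alpha)$ and the crossing matching $\{s_1\to b_2,\,s_2\to b_1\}$ has value $2(1-\alpha)$, so $W^\star=2(2-\alpha)$; in $G$ only the crossing matching is feasible, giving $W(S,B,G)=2(1-\alpha)$ and ratio exactly $\frac{2-\alpha}{1-\alpha}$. I would then verify that $P=\emptyset$ is a pure Platform Equilibrium at fee $\alpha$. By Eq.~\eqref{eq:min_price}, each seller's maximum off-platform price is
\[
p^{\off}_j = W(S,B,G)-W(S\setminus\{j\},B,G) = 2(1-\alpha)-(1-\alpha) = 1-\alpha.
\]
For a unilateral deviation of, say, $s_1$ to the platform, the feasible matchings in $G(\{s_1\})$ are $\{s_1\to b_1\}$ of value $2-\alpha$ and $\{s_1\to b_2,\,s_2\to b_1\}$ of value $2(1-\alpha)$. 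Since their difference is exactly $\alpha\ge 0$, the former is welfare-maximal, so $W(S,B,G(\{s_1\}))=2-\alpha$ and $p^{\on}_{s_1}=(2-\alpha)-(1-\alpha)=1$. Seller $s_1$'s utility from joining is $(1-\alpha)\cdot 1 = 1-\alpha=p^{\off}_{s_1}$, so staying off is a best response; the symmetric computation handles $s_2$, confirming that $P=\emptyset$ is a pure Platform Equilibrium with welfare $2(1-\alpha)$ and PoA $\frac{2-\alpha}{1-\alpha}$.

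The main obstacle is pinning down the welfare-maximal allocation after a unilateral deviation: one needs $s_1$'s marginal on-platform welfare to be exactly $1$, so that $(1-\alpha)\cdot 1$ precisely equals $p^{\off}_{s_1}$. The slack $(2-\alpha)-2(1-\alpha)=\alpha$ makes the direct matching $s_1\to b_1$ with $s_2$ unmatched the unique optimum for $\alpha>0$, and at the boundary $\alpha=0$ both candidate matchings yield the same welfare and the same on-platform price $p^{\on}_{s_1}=1$. With this step in hand, each seller's indifference $(1-\alpha)p^{\on}_j=p^{\off}_j$ is tight by construction, so the construction attains the upper bound of $\frac{2-\alpha}{1-\alpha}$ from Theorems~\ref{thm:pure_poa} and~\ref{thm:mixed_poa} for every $\alpha\in[0,1)$.
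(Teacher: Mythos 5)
Your proposal is correct and follows essentially the same approach as the paper: an explicit small market in which no seller joining is a pure Platform Equilibrium and the ratio of optimal welfare to realized welfare is $\frac{2-\alpha}{1-\alpha}$. The only cosmetic difference is that the paper uses a $3\times 3$ cyclic instance with an $\epsilon$-perturbation so that staying off-platform is a strict best response (attaining the bound in the limit $\epsilon\to 0$), whereas your $2\times 2$ crossing instance attains the bound exactly by making each seller indifferent, which the model's tie-breaking rule permits.
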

\begin{proof}
For any given transaction fee $\alpha\in [0,1)$, %\dcp{or $[0,1)$?}, 
Figure~\ref{fig:tight_bound_poa_for_alpha_transaction_fee} gives a three-buyer three-seller market with a tight PoA. At transaction fee $\alpha$, no seller joining the platform is a Nash equilibrium, 
since the off-platform price of $1$ is larger than the seller's on-platform utility,
which is $(1-\alpha)p^{on}=(1-\alpha)\left[\frac{2-\alpha}{1-\alpha}-\epsilon-1\right]=1-\epsilon+\alpha\epsilon$. As $\epsilon$ goes to zero in the parameterized values of buyers in the market,
 the price of anarchy goes to $\frac{2-\alpha}{1-\alpha}$. Note all sellers joining the platform is another Nash equilibrium. But the price of anarchy depends 
on the worst possible Nash equilibrium.
\end{proof}

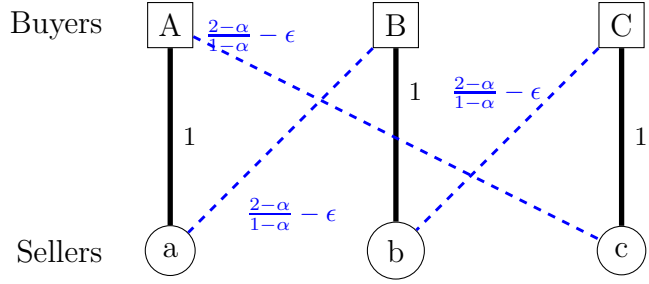
\begin{figure} 
    \centering
    \begin{tikzpicture}[scale=1.5]
        % Square vertices
        \foreach \i/\label in {1/A, 3/B, 5/C}
            \node[draw, shape=rectangle, minimum size=0.6cm] (\label) at (\i, 2) {\label};
        
        % Circle vertices
        \foreach \i/\label in {1/a, 3/b, 5/c}
            \node[draw, shape=circle, minimum size=0.6cm] (\label) at (\i, 0) {\label};
        
        % Solid edges with weights
        \foreach \x/\y/\w/\pos in {A/a/1/midway, B/b/1/near start, C/c/1/midway}
            \draw[line width=2pt] (\x) -- node[\pos, right, font=\footnotesize] {\w} (\y);
        
        % Dashed blue edges with weights
        \foreach \x/\y/\w/\pos/\loc in {B/a/$\frac{2-\alpha}{1-\alpha}-\epsilon$/near end/below right, A/c/$\frac{2-\alpha}{1-\alpha}-\epsilon$/at start/right, C/b/$\frac{2-\alpha}{1-\alpha}-\epsilon$/near start/left}
            \draw[line width=1.1pt, blue, dashed] (\x) -- node[\pos, \loc, font=\footnotesize] {\w} (\y);

        % Buyers caption
        \node[left] at (0.5, 2)  {Buyers};

        \node[left] at (0.5, 0)  {Sellers};
    
    \end{tikzpicture}
    \caption{An example with a tight Bound for PoA for an $\alpha$ transaction fee, for any $\alpha \in [0,1)$. Black solid lines are direct links, as captured by $N(i)$ for buyer $i$, and blue dotted lines indicate missing links. Buyer values are annotated adjacent to each edge, and all any value that is omitted is zero.} \label{fig:tight_bound_poa_for_alpha_transaction_fee}
\end{figure}

\section{Generalizations}
In this section, we extend our previous results to more general settings. We analyze markets with multiple platforms in Section~\ref{sec:multi_platform}, sellers with production costs in Section~\ref{sec:seller_with_cost}, and show all of our results hold when buyers have additive-over partition valuation in Section~\ref{sec:beyond-ud}. Most proofs are delayed until Appendix~\ref{app:generalizations}.

\subsection{Market with Multiple Platforms}\label{sec:multi_platform}

For this section, we prove that our results in Section~\ref{sec:poa-regulated} hold in markets with multiple competing platforms. Consider the buyer-seller network on graph $G$, as defined in Section~\ref{sec:prelims}. There is a set of $R$ platforms $F=\{f_1,f_2,...,f_R\}$, where each platform $f_r$ has a subset of buyers $B_r\subset B$ and transaction fee $\alpha_i \leq \alpha$. A buyer can be on multiple platforms.
The regulator imposes an $\alpha$ cap on transaction-fee for all platforms. A seller $j$ can join multiple platforms $F_j\subseteq F$ and transact with all buyers on the platforms it joins, $\cup _{f_r\in F_j}B_r$, as well as the buyers it knows on $G$. The market clears according to competitive equilibrium. If $j$ transacts through platform edges with some buyer $i$, it pays $\alpha_r p_j^{\on}$ to the platform $f_r$ corresponding to buyer $i$. If $i$ is on multiple platforms that $j$ joins, then $j$ pays to the platform with the lowest fee. Everything else remains the same with Section~\ref{sec:poa-regulated}.

\begin{restatable}{theorem}{MultiPlatform}
    For a market with multiple platforms and transaction-fee cap $\alpha \in [0,1)$, the Price of Anarchy of Platform Equilibrium is at most $\frac{2-\alpha}{1-\alpha}$. \label{thm:pure_poa_multiple_platform}
\end{restatable}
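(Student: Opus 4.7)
The plan is to reduce the multi-platform setting to the single-platform arguments of Theorems~\ref{thm:pure_poa} and~\ref{thm:mixed_poa} via a ``graph-freezing'' reduction. Given any Platform Equilibrium with sellers' joined-platform sets $\{F_j\}_{j\in S}$, let $G'$ denote the realized bipartite graph in which seller $j$'s neighborhood is $N_G(j)\cup\bigcup_{f_r\in F_j}B_r$. Consider the modified market with the same platforms and fees but initial graph $G'$, so that platform edges contributed in the original equilibrium are now treated as fee-free direct edges. The induced welfare coincides with the welfare of the original equilibrium, and I claim that the strategy profile in which no seller joins any platform is itself a Platform Equilibrium of this modified market.

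To verify the claim, fix a seller $j$ and any candidate deviation to $F_j'\neq\emptyset$ in the modified market. If this deviation strictly increases the welfare-maximizing matching, then $j$'s match must change to a buyer $i_1$ reachable only through some $f_r\in F_j'\setminus F_j$, since only $j$'s edges were modified. The modified per-transaction fee $\gamma=\min\{\alpha_{r'}:f_{r'}\in F_j',\,i_1\in B_{r'}\}$ is weakly larger than the fee $\beta'=\min\{\alpha_{r'}:f_{r'}\in F_j\cup F_j',\,i_1\in B_{r'}\}$ that $j$ would pay in the original market after deviating to $F_j\cup F_j'$. Combining $\gamma\geq\beta'$ with the original equilibrium condition $u_j^{\mathrm{orig}}(F_j)\geq u_j^{\mathrm{orig}}(F_j\cup F_j')$ and the observation $u_j^{\mathrm{mod}}(\emptyset)=p_j^{G'}\geq u_j^{\mathrm{orig}}(F_j)$ (since the modified empty profile pays no fees), we obtain the chain $u_j^{\mathrm{mod}}(\emptyset)\geq u_j^{\mathrm{orig}}(F_j\cup F_j')\geq u_j^{\mathrm{mod}}(F_j')$, as required. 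The case where the deviation does not change welfare is immediate because $j$'s match, price and fee are all unchanged.

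Having reduced to a market on $G'$ where no seller joins any platform, the single-platform argument of Theorem~\ref{thm:pure_poa} then applies almost verbatim: for each seller $j$ with optimal-match buyer $i^\star(j)$, pick the platform $f_{r^\star}$ with $i^\star(j)\in B_{r^\star}$ and smallest fee $\alpha_{r^\star}\leq\alpha$ (or use a direct edge if $i^\star(j)$ already lies in $j$'s $G'$-neighborhood). The non-deviation condition from Step~1 then gives $p_j^{G'}\geq(1-\alpha)p_j^{\mathrm{on},r^\star}$, playing the role of Equations~\eqref{eq:p_on}--\eqref{eq:p_off}; the twin-pair accounting of Theorem~\ref{thm:pure_poa} carries over with $G$ replaced by $G'$ to yield $W^\star\leq\frac{2-\alpha}{1-\alpha}W(S,B,G')$. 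Mixed equilibria are handled by combining this reduction with the Bayesian-game construction of Theorem~\ref{thm:mixed_poa}, drawing each seller's type from its mixed strategy over platform subsets; the expected versions of the same inequalities then yield the bound. The main obstacle is Step~1: one must rule out that waiving fees on pre-existing platform edges creates a fresh profitable deviation, and the crucial observation is that any welfare-increasing deviation forces $j$ to match a buyer reachable only through $F_j'\setminus F_j$, so the modified fee $\gamma$ remains at least $\beta'$ and the original equilibrium condition closes the chain.
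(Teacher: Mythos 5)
Your proof follows essentially the same route as the paper's: freeze the realized graph $G'$, argue that no seller joining any platform is an equilibrium of the frozen market, and then rerun the single-platform twin argument of Theorems~\ref{thm:pure_poa} and~\ref{thm:mixed_poa} with each seller's deviation being to join the (cheapest) platform containing its optimal-match buyer. Your fee-accounting verification of the freezing step is in fact more careful than the paper's one-sentence justification, correctly identifying and ruling out the fresh deviations that could arise from waiving fees on pre-existing platform edges.
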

The proof is similar to that of Theorem~\ref{thm:pure_poa} and \ref{thm:mixed_poa}, and is deferred to Appendix~\ref{app:multiple_platforms}.

\subsection{Sellers with Production Costs}\label{sec:seller_with_cost}
In this section, we extend our results to sellers $j\in S=\{1,2,...,m\}$ with production cost $c_j\geq 0$. We only present the main result and defer the readers to Appendix~\ref{app:seller_with_cost} for the full section.

\begin{restatable}{theorem}{purePoACost}
\label{thm:pure_poa_cost}
    In a market with cost, let $\beta=\sum_{j\in S}c_j / W^\star(S,B,\mathbf{v},\mathbf{c},G)$ denote the fraction of cost to optimal welfare. The Price of Anarchy of the Platform Equilibrium,  when the platform sets a transaction-fee $\alpha\in [0,1)$, is at most $\frac{2-\alpha}{1-\alpha-\alpha\beta}$.
    \label{thm:pure_poa_cost}
\end{restatable}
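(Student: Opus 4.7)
The plan is to port the proof of Theorem~\ref{thm:pure_poa} (and its mixed-strategy extension Theorem~\ref{thm:mixed_poa}) to the cost-aware setting. Throughout, let $W$ denote the cost-aware welfare, so that the Walrasian characterization translates into the following: for any seller $j$ that transacts at maximum competitive prices,
\[
p^{\off}_j - c_j = W(S,B,G) - W(S\setminus\{j\},B,G), \qquad p^{\on}_j - c_j = W(S,B,G(j)) - W(S\setminus\{j\},B,G),
\]
while a non-transacting seller has price $0$ and utility $0$. As in Theorem~\ref{thm:pure_poa}, I would first handle the equilibrium in which no seller joins the platform, and reduce an arbitrary pure Platform Equilibrium to this case by passing from $G$ to $G(P)$. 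Mixed equilibria reduce to the pure case via the Bayesian-game construction of Theorem~\ref{thm:mixed_poa}; both reductions are cost-blind and carry over verbatim.

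The core new step is deriving the cost-aware analogue of the no-deviation inequality $(1-\alpha)p^{\on}_j \leq p^{\off}_j$. Seller $j$'s no-deviation condition reads $\max\{(1-\alpha)p^{\on}_j - c_j,\,0\} \leq \max\{p^{\off}_j - c_j,\,0\}$, and I would verify by a short case analysis (whether $j$ transacts off, and whether $(1-\alpha)p^{\on}_j > c_j$) that this always implies the unified inequality
\[
(1-\alpha)\left[W(S,B,G(j)) - W(S\setminus\{j\},B,G)\right] \leq \left[W(S,B,G) - W(S\setminus\{j\},B,G)\right] + \alpha c_j.
\]
Combining this with the ideal-matching lower bound $W(S,B,G(j)) \geq v_{i^\star(j)j} - c_j + W(S\setminus\{j\}, B\setminus\{i^\star(j)\}, G)$ and the cost-aware analogues of the twin inequalities~\eqref{eq:bound4} and~\eqref{eq:bound5} (replacing every $v_{ij}$ there by the net welfare contribution $v_{ij} - c_j$, which remains valid since each such term is one matched edge's contribution to cost-aware welfare), the $W(S,B,G)$ terms cancel exactly as in the proof of Theorem~\ref{thm:pure_poa} and I obtain, for every seller $j$,
\[
v_{i^\star(j)j} - c_j \leq \frac{1}{1-\alpha}(v_{i^G(j)j} - c_j) + (v_{i^\star(j)t(j)} - c_{t(j)}) + \frac{\alpha c_j}{1-\alpha}.
\]

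Summing over $j$, the first term contributes $\frac{1}{1-\alpha}W(S,B,G)$, the second is at most $W(S,B,G)$ because the twins $t(j)$ traverse distinct edges of the $G$-matching, and the new third term contributes $\frac{\alpha\sum_j c_j}{1-\alpha} = \frac{\alpha\beta}{1-\alpha}W^\star$ by definition of $\beta$. Collecting the $W^\star$ terms on the left and rearranging yields the stated bound $W^\star \leq \frac{2-\alpha}{1-\alpha-\alpha\beta}W(S,B,G(P))$. The main (and essentially only) obstacle is checking the unified inequality across all sub-cases: in particular, when $j$ does not transact off-platform the transacting-case characterization of $p^{\off}_j$ no longer holds literally, but since $W(S,B,G)=W(S\setminus\{j\},B,G)$ there, the right-hand side collapses to $\alpha c_j$, and the non-joining condition reduces to $(1-\alpha)p^{\on}_j \leq c_j$, which gives exactly this bound on $(1-\alpha)[W(S,B,G(j))-W(S\setminus\{j\},B,G)]$. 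Obtaining the $\alpha c_j$ term (rather than a looser $c_j$) is precisely what produces the $\alpha\beta$ (rather than $\beta$) in the denominator, and distinguishes the proof from a naive plug-in of costs.
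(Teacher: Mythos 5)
Your proposal is correct and takes essentially the same route as the paper: the paper formalizes your ``cost-aware welfare'' by explicitly reducing to a shifted zero-cost market with valuations $v_{ij}-c_j$ (plus dummy buyers so every seller transacts), proves that maximum competitive prices shift by exactly $c_j$, and then derives the same per-seller inequality with the extra $\frac{\alpha}{1-\alpha}c_j$ slack coming from $(1-\alpha)p_j^{\on}-c_j\le p_j^{\off}-c_j \Leftrightarrow (1-\alpha)(p_j^{\on}-c_j)\le (p_j^{\off}-c_j)+\alpha c_j$. The summation over sellers, the substitution $\sum_j c_j=\beta W^\star$, and the reductions from general pure equilibria (via $G(P)$) and from mixed equilibria (via the Bayesian game) are identical to the paper's.
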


The above theorem implies that at the same transaction fee, the social welfare guarantee deteriorates as costs get higher. This was specially relevant during the COVID-19 crisis, where demand and supply surged on digital platforms \citep{raj2020covid} and the cost of production grew \citep{felix2020us}.
At $30\%$ of transaction fee (Table~\ref{table:platforms and their coommission rate}) and $30\%$ of food cost \footnote{Most profitable restaurants aim for a food cost percentage between 28 and 35\%. This does not include other cost factors such as labor, rentals, etc. Figure is taken in 2023 from Doordash website \url{https://get.doordash.com/en-us/blog/food-cost-percentage}}, the resulting welfare at a  Platform Equilibrium  is at least $23.5\%$ of the ideal welfare. This is smaller than the $41\%$ without costs, further demonstration the need to regulate transaction fee during periods where production costs increase.

\subsection{Additive-over-Partition Buyer Valuation} \label{sec:beyond-ud}
Our results also extend beyond unit demand valuation. For example, consider the case where buyers have additive valuation. In this case, we can use $m$ vertices to represent each buyer in the bipartite graph, where each vertex only values one seller. We show how to generalize our results to more complex valuations, such as buyers valuing at most $c_{\ell}$ items  each group of sellers $S_\ell$ who sell similar items. Thus, the valuation of a buyer is additive constrained by the feasible sets defined by a partition matroid $\mathcal{M}$.

We formally define the class of buyers' valuations, which we term \textit{additive-over partition valuations}, and show results in Section~\ref{sec:poa-regulated} extend.

For each buyer $i\in\{1,2,...,n\}$, there is a partition of sellers $S_i = S_{i,1}\cup S_{i,2}\cup...\cup S_{i,T_i}$ and a  capacity $c_{i,\ell}$ associated with each seller group $S_{i,\ell}$.  Buyer $i$ values seller $j$'s item at $v_{ij}$, and the bundle of items $\alloc$ at
\begin{eqnarray}
    v_i(\alloc) = \sum_{\ell\in [T_i]} \max_{T\subset S_{i,\ell} \ : \ |T|\le c_{i,\ell}} \sum_{j\in T}v_{ij}. 
\end{eqnarray}

We now show how to encode such valuations into our  bipartite graph model. For each buyer $i$ and for each seller group $\ell$, we introduce $c_{i,\ell}$ unit-demand buyers. Each such buyer $i'\in [c_{i,\ell}]$ will have the following unit-demand valuation 
\begin{eqnarray*}
    v'_{i',j} = \begin{cases} v_{i,j} \quad & j\in S_{i,\ell}\\
                           0 \quad & j\notin S_{i,\ell}
            \end{cases},
\end{eqnarray*}
which ensures such buyer transacts with only sellers in $S_{i,\ell}$. For each such buyer $i'$, we also keep the same links to sellers such as the original buyer $i$. Via this reduction, our results in Section~\ref{sec:poa-regulated} trivially hold for additive-over-partition buyers as well.

\section{Conclusion and Future Directions}
In this paper, we initiate the rigorous study of the effect of platforms on the welfare of a constrained market, modeled by a buyer-seller network with missing links. We study the network structure of the market, and give an algorithm that always finds a pure equilibrium in homogeneous-goods markets. We show when the platform is unregulated, and can set fees to selfishly maximize its revenue, then the Price of Anarchy can be as bad as $\min\{m,n\}$ in general markets, and $\Theta(\log(\min\{n,m\} ) )$ even in homogeneous-goods markets. However, if regulation requires the platform to charge no more than $\alpha$-fraction of sellers' revenue, then we get a tight $\frac{2-\alpha}{1-\alpha}$-fraction of the optimal \textit{unconstrained} welfare in any equilibrium.
Moreover, even when discounting the platform's revenue, the surplus left for the buyers and sellers in any equilibrium is  at least $\frac{(1-\alpha)^2}{2-\alpha}$-fraction of the optimal unconstrained welfare. 
Finally, we extend our analysis to multiple platforms, buyer valuations beyond unit demand, and sellers with non-zero production cost.

While this work already provides strong insights into understanding a platform's effect on a market's welfare,  there are many exciting directions to take:
\begin{enumerate}
    \item Although the $\Omega(\log (\min\{n,m\}))$ lower bound on the Price of Anarchy is discouraging, this is a  contrived instance. We believe that for constrained and natural market structures, we can avoid the logarithmic dependence in the number of buyers/sellers. 
    \item While we extended the analysis to additive-over-partition valuations in Appendix~\ref{app:beyond-ud}, there remains the need to incorporate more general classes of valuation functions, such as the commonly studied like gross-substitutes valuations. 
    \item Extending our model to accommodate multi-supply sellers is not straightforward. A seller with multiple items can limit supply to raise revenue. And her price becomes zero if some of her items are left unsold. This requires further extension of the model to explain how and why such sellers join the platform.
    \item While seller-side transaction-fees are common, other fees are used by platforms, such as buyer and seller monthly/annual registrations fees. 
    This changes buyers and sellers' strategic behaviors to join the platform, and will required further extensions of the model.
\end{enumerate}

\bibliographystyle{ACM-Reference-Format}
\bibliography{main}

%%% -*-BibTeX-*-
%%% Do NOT edit. File created by BibTeX with style
%%% ACM-Reference-Format-Journals [18-Jan-2012].

\begin{thebibliography}{22}

%%% ====================================================================
%%% NOTE TO THE USER: you can override these defaults by providing
%%% customized versions of any of these macros before the \bibliography
%%% command.  Each of them MUST provide its own final punctuation,
%%% except for \shownote{}, \showDOI{}, and \showURL{}.  The latter two
%%% do not use final punctuation, in order to avoid confusing it with
%%% the Web address.
%%%
%%% To suppress output of a particular field, define its macro to expand
%%% to an empty string, or better, \unskip, like this:
%%%
%%% \newcommand{\showDOI}[1]{\unskip}   % LaTeX syntax
%%%
%%% \def \showDOI #1{\unskip}           % plain TeX syntax
%%%
%%% ====================================================================

\ifx \showCODEN    \undefined \def \showCODEN     #1{\unskip}     \fi
\ifx \showDOI      \undefined \def \showDOI       #1{#1}\fi
\ifx \showISBNx    \undefined \def \showISBNx     #1{\unskip}     \fi
\ifx \showISBNxiii \undefined \def \showISBNxiii  #1{\unskip}     \fi
\ifx \showISSN     \undefined \def \showISSN      #1{\unskip}     \fi
\ifx \showLCCN     \undefined \def \showLCCN      #1{\unskip}     \fi
\ifx \shownote     \undefined \def \shownote      #1{#1}          \fi
\ifx \showarticletitle \undefined \def \showarticletitle #1{#1}   \fi
\ifx \showURL      \undefined \def \showURL       {\relax}        \fi
% The following commands are used for tagged output and should be
% invisible to TeX
\providecommand\bibfield[2]{#2}
\providecommand\bibinfo[2]{#2}
\providecommand\natexlab[1]{#1}
\providecommand\showeprint[2][]{arXiv:#2}

\bibitem[Ahuja et~al\mbox{.}(2021)]%
        {mckinsey2021}
\bibfield{author}{\bibinfo{person}{Kabir Ahuja}, \bibinfo{person}{Vishwa
  Chandra}, \bibinfo{person}{Victoria Lord}, {and} \bibinfo{person}{Curtis
  Peens}.} \bibinfo{year}{2021}\natexlab{}.
\newblock \showarticletitle{Ordering in: The rapid evolution of food delivery}.
\newblock \bibinfo{journal}{\emph{McKinsey \& Company}} (\bibinfo{year}{2021}).
\newblock


\bibitem[Armstrong(2006)]%
        {armstrong2006competition}
\bibfield{author}{\bibinfo{person}{Mark Armstrong}.}
  \bibinfo{year}{2006}\natexlab{}.
\newblock \showarticletitle{Competition in two-sided markets}.
\newblock \bibinfo{journal}{\emph{The RAND journal of economics}}
  \bibinfo{volume}{37}, \bibinfo{number}{3} (\bibinfo{year}{2006}),
  \bibinfo{pages}{668--691}.
\newblock


\bibitem[Banerjee et~al\mbox{.}(2017)]%
        {banerjee2017segmenting}
\bibfield{author}{\bibinfo{person}{Siddhartha Banerjee},
  \bibinfo{person}{Sreenivas Gollapudi}, \bibinfo{person}{Kostas Kollias},
  {and} \bibinfo{person}{Kamesh Munagala}.} \bibinfo{year}{2017}\natexlab{}.
\newblock \showarticletitle{Segmenting two-sided markets}. In
  \bibinfo{booktitle}{\emph{Proceedings of the 26th International Conference on
  World Wide Web}}. \bibinfo{pages}{63--72}.
\newblock


\bibitem[Berger et~al\mbox{.}(2020)]%
        {BergerEF20}
\bibfield{author}{\bibinfo{person}{Ben Berger}, \bibinfo{person}{Alon Eden},
  {and} \bibinfo{person}{Michal Feldman}.} \bibinfo{year}{2020}\natexlab{}.
\newblock \showarticletitle{On the Power and Limits of Dynamic Pricing in
  Combinatorial Markets}. In \bibinfo{booktitle}{\emph{Web and Internet
  Economics - 16th International Conference, {WINE} 2020, Beijing, China,
  December 7-11, 2020, Proceedings}} \emph{(\bibinfo{series}{Lecture Notes in
  Computer Science}, Vol.~\bibinfo{volume}{12495})},
  \bibfield{editor}{\bibinfo{person}{Xujin Chen}, \bibinfo{person}{Nikolai
  Gravin}, \bibinfo{person}{Martin Hoefer}, {and} \bibinfo{person}{Ruta Mehta}}
  (Eds.). \bibinfo{publisher}{Springer}, \bibinfo{pages}{206--219}.
\newblock
\urldef\tempurl%
\url{https://doi.org/10.1007/978-3-030-64946-3\_15}
\showDOI{\tempurl}


\bibitem[Birge et~al\mbox{.}(2021)]%
        {birge2021optimal}
\bibfield{author}{\bibinfo{person}{John Birge}, \bibinfo{person}{Ozan
  Candogan}, \bibinfo{person}{Hongfan Chen}, {and} \bibinfo{person}{Daniela
  Saban}.} \bibinfo{year}{2021}\natexlab{}.
\newblock \showarticletitle{Optimal commissions and subscriptions in networked
  markets}.
\newblock \bibinfo{journal}{\emph{Manufacturing \& Service Operations
  Management}} \bibinfo{volume}{23}, \bibinfo{number}{3}
  (\bibinfo{year}{2021}), \bibinfo{pages}{569--588}.
\newblock


\bibitem[Bloom et~al\mbox{.}(2021)]%
        {bloom2021impact}
\bibfield{author}{\bibinfo{person}{Nicholas Bloom}, \bibinfo{person}{Robert~S
  Fletcher}, {and} \bibinfo{person}{Ethan Yeh}.}
  \bibinfo{year}{2021}\natexlab{}.
\newblock \bibinfo{booktitle}{\emph{The impact of COVID-19 on US firms}}.
\newblock \bibinfo{type}{{T}echnical {R}eport}. \bibinfo{institution}{National
  Bureau of Economic Research}.
\newblock


\bibitem[Caillaud and Jullien(2003)]%
        {caillaud2003chicken}
\bibfield{author}{\bibinfo{person}{Bernard Caillaud} {and}
  \bibinfo{person}{Bruno Jullien}.} \bibinfo{year}{2003}\natexlab{}.
\newblock \showarticletitle{Chicken \& egg: Competition among intermediation
  service providers}.
\newblock \bibinfo{journal}{\emph{RAND journal of Economics}}
  (\bibinfo{year}{2003}), \bibinfo{pages}{309--328}.
\newblock


\bibitem[Calder(2021)]%
        {FeeCapNews}
\bibfield{author}{\bibinfo{person}{Rich Calder}.}
  \bibinfo{year}{2021}\natexlab{}.
\newblock \showarticletitle{Permanent Cap on Delivery-App Fees Proposed for New
  York City}.
\newblock \bibinfo{journal}{\emph{The Wall Street Journal}}
  (\bibinfo{year}{2021}).
\newblock
\urldef\tempurl%
\url{https://www.wsj.com/articles/permanent-cap-on-delivery-app-fees-proposed-for-new-york-city-11624549165}
\showURL{%
\tempurl}


\bibitem[D'Amico-Wong et~al\mbox{.}(2024)]%
        {d2024disrupting}
\bibfield{author}{\bibinfo{person}{Luca D'Amico-Wong},
  \bibinfo{person}{Yannai~A Gonczarowski}, \bibinfo{person}{Gary Qiurui~Ma},
  {and} \bibinfo{person}{David~C Parkes}.} \bibinfo{year}{2024}\natexlab{}.
\newblock \showarticletitle{Disrupting Bipartite Trading Networks: Matching for
  Revenue Maximization}.
\newblock \bibinfo{journal}{\emph{arXiv e-prints}} (\bibinfo{year}{2024}),
  \bibinfo{pages}{arXiv--2406}.
\newblock


\bibitem[Elliott(2015)]%
        {elliott2015inefficiencies}
\bibfield{author}{\bibinfo{person}{Matthew Elliott}.}
  \bibinfo{year}{2015}\natexlab{}.
\newblock \showarticletitle{Inefficiencies in networked markets}.
\newblock \bibinfo{journal}{\emph{American Economic Journal: Microeconomics}}
  \bibinfo{volume}{7}, \bibinfo{number}{4} (\bibinfo{year}{2015}),
  \bibinfo{pages}{43--82}.
\newblock


\bibitem[Even-Dar et~al\mbox{.}(2007)]%
        {even2007network}
\bibfield{author}{\bibinfo{person}{Eyal Even-Dar}, \bibinfo{person}{Michael~J
  Kearns}, {and} \bibinfo{person}{Siddharth Suri}.}
  \bibinfo{year}{2007}\natexlab{}.
\newblock \showarticletitle{A network formation game for bipartite exchange
  economies}. In \bibinfo{booktitle}{\emph{SODA}}. \bibinfo{pages}{697--706}.
\newblock


\bibitem[Felix et~al\mbox{.}(2020)]%
        {felix2020us}
\bibfield{author}{\bibinfo{person}{Ignacio Felix}, \bibinfo{person}{Adrian
  Martin}, \bibinfo{person}{Vivek Mehta}, {and} \bibinfo{person}{Curt
  Mueller}.} \bibinfo{year}{2020}\natexlab{}.
\newblock \showarticletitle{US food supply chain: Disruptions and implications
  from COVID-19}.
\newblock \bibinfo{journal}{\emph{McKinsey \& Company, July}}
  (\bibinfo{year}{2020}).
\newblock


\bibitem[Gul and Stacchetti(1999)]%
        {gul1999walrasian}
\bibfield{author}{\bibinfo{person}{Faruk Gul} {and} \bibinfo{person}{Ennio
  Stacchetti}.} \bibinfo{year}{1999}\natexlab{}.
\newblock \showarticletitle{Walrasian equilibrium with gross substitutes}.
\newblock \bibinfo{journal}{\emph{Journal of Economic theory}}
  \bibinfo{volume}{87}, \bibinfo{number}{1} (\bibinfo{year}{1999}),
  \bibinfo{pages}{95--124}.
\newblock


\bibitem[Kakade et~al\mbox{.}(2004)]%
        {kakade2004economic}
\bibfield{author}{\bibinfo{person}{Sham~M Kakade}, \bibinfo{person}{Michael
  Kearns}, \bibinfo{person}{Luis~E Ortiz}, \bibinfo{person}{Robin Pemantle},
  {and} \bibinfo{person}{Siddharth Suri}.} \bibinfo{year}{2004}\natexlab{}.
\newblock \showarticletitle{Economic properties of social networks}.
\newblock \bibinfo{journal}{\emph{Advances in Neural Information Processing
  Systems}}  \bibinfo{volume}{17} (\bibinfo{year}{2004}).
\newblock


\bibitem[Kelso and Crawford(1982)]%
        {kelso1982job}
\bibfield{author}{\bibinfo{person}{Alexander~S Kelso} {and}
  \bibinfo{person}{Vincent~P Crawford}.} \bibinfo{year}{1982}\natexlab{}.
\newblock \showarticletitle{Job matching, coalition formation, and gross
  substitutes}.
\newblock \bibinfo{journal}{\emph{Econometrica: Journal of the Econometric
  Society}} (\bibinfo{year}{1982}), \bibinfo{pages}{1483--1504}.
\newblock


\bibitem[Kranton and Minehart(2000)]%
        {kranton2000competition}
\bibfield{author}{\bibinfo{person}{Rachel~E Kranton} {and}
  \bibinfo{person}{Deborah~F Minehart}.} \bibinfo{year}{2000}\natexlab{}.
\newblock \showarticletitle{Competition for goods in buyer-seller networks}.
\newblock \bibinfo{journal}{\emph{Review of Economic Design}}
  \bibinfo{volume}{5}, \bibinfo{number}{3} (\bibinfo{year}{2000}),
  \bibinfo{pages}{301--331}.
\newblock


\bibitem[Kranton and Minehart(2001)]%
        {kranton2001theory}
\bibfield{author}{\bibinfo{person}{Rachel~E Kranton} {and}
  \bibinfo{person}{Deborah~F Minehart}.} \bibinfo{year}{2001}\natexlab{}.
\newblock \showarticletitle{A theory of buyer-seller networks}.
\newblock \bibinfo{journal}{\emph{American economic review}}
  \bibinfo{volume}{91}, \bibinfo{number}{3} (\bibinfo{year}{2001}),
  \bibinfo{pages}{485--508}.
\newblock


\bibitem[Rabanca(2016)]%
        {maxweightsubmod}
\bibfield{author}{\bibinfo{person}{George~Octavian Rabanca}.}
  \bibinfo{year}{2016}\natexlab{}.
\newblock \bibinfo{title}{Maximum Weight Matching and submodular functions}.
\newblock
  \bibinfo{howpublished}{\url{https://cstheory.stackexchange.com/questions/34654/maximum-weight-matching-and-submodular-functions}}.
\newblock
\newblock
\shownote{Accessed: 2023-08-01}.


\bibitem[Raj et~al\mbox{.}(2020)]%
        {raj2020covid}
\bibfield{author}{\bibinfo{person}{Manav Raj}, \bibinfo{person}{Arun
  Sundararajan}, {and} \bibinfo{person}{Calum You}.}
  \bibinfo{year}{2020}\natexlab{}.
\newblock \showarticletitle{COVID-19 and digital resilience: Evidence from Uber
  Eats}.
\newblock \bibinfo{journal}{\emph{arXiv preprint arXiv:2006.07204}}
  (\bibinfo{year}{2020}).
\newblock


\bibitem[Rochet and Tirole(2003)]%
        {rochet2003platform}
\bibfield{author}{\bibinfo{person}{Jean-Charles Rochet} {and}
  \bibinfo{person}{Jean Tirole}.} \bibinfo{year}{2003}\natexlab{}.
\newblock \showarticletitle{Platform competition in two-sided markets}.
\newblock \bibinfo{journal}{\emph{Journal of the european economic
  association}} \bibinfo{volume}{1}, \bibinfo{number}{4}
  (\bibinfo{year}{2003}), \bibinfo{pages}{990--1029}.
\newblock


\bibitem[Wang et~al\mbox{.}(2023)]%
        {WangMELTZP23}
\bibfield{author}{\bibinfo{person}{Xintong Wang}, \bibinfo{person}{Gary~Qiurui
  Ma}, \bibinfo{person}{Alon Eden}, \bibinfo{person}{Clara Li},
  \bibinfo{person}{Alexander Trott}, \bibinfo{person}{Stephan Zheng}, {and}
  \bibinfo{person}{David~C. Parkes}.} \bibinfo{year}{2023}\natexlab{}.
\newblock \showarticletitle{Platform Behavior under Market Shocks: {A}
  Simulation Framework and Reinforcement-Learning Based Study}. In
  \bibinfo{booktitle}{\emph{Proceedings of the {ACM} Web Conference 2023, {WWW}
  2023, Austin, TX, USA, 30 April 2023 - 4 May 2023}},
  \bibfield{editor}{\bibinfo{person}{Ying Ding}, \bibinfo{person}{Jie Tang},
  \bibinfo{person}{Juan~F. Sequeda}, \bibinfo{person}{Lora Aroyo},
  \bibinfo{person}{Carlos Castillo}, {and} \bibinfo{person}{Geert{-}Jan
  Houben}} (Eds.). \bibinfo{publisher}{{ACM}}, \bibinfo{pages}{3592--3602}.
\newblock
\urldef\tempurl%
\url{https://doi.org/10.1145/3543507.3583523}
\showDOI{\tempurl}


\bibitem[Zhang and Conitzer(2020)]%
        {ZhangC20}
\bibfield{author}{\bibinfo{person}{Hanrui Zhang} {and} \bibinfo{person}{Vincent
  Conitzer}.} \bibinfo{year}{2020}\natexlab{}.
\newblock \showarticletitle{Learning the Valuations of a k-demand Agent}. In
  \bibinfo{booktitle}{\emph{Proceedings of the 37th International Conference on
  Machine Learning, {ICML} 2020, 13-18 July 2020, Virtual Event}}
  \emph{(\bibinfo{series}{Proceedings of Machine Learning Research},
  Vol.~\bibinfo{volume}{119})}. \bibinfo{publisher}{{PMLR}},
  \bibinfo{pages}{11066--11075}.
\newblock
\urldef\tempurl%
\url{http://proceedings.mlr.press/v119/zhang20f.html}
\showURL{%
\tempurl}


\end{thebibliography}

\appendix

\pagebreak

\section{Using Maximum Competitive Price}\label{app:prices-min}
Theorem~\ref{thm: price_lattice} shows competitive prices form a lattice. Among all competitive equilibrium prices, the maximum price $$\overline{p}_j = W(S, B, G) - W(S\setminus \{j\},B, G) $$
is the most natural when considering sellers' decision to join the platform: when a benevolent social planner charges $\alpha=0$, all sellers joining is a Platform Equilibrium. This is because for any other set of sellers $P$ joining the platform, seller $j$'s gain from joining is $p^{\on}_j-p^{\off}_j=W(S,B,G(P)\cup\{i\})-W(S,B,G(P))\geq 0$. This is not true for any other competitive prices, making it inappropriate to analyze the welfare loss due to the platform. We formalize it into the following

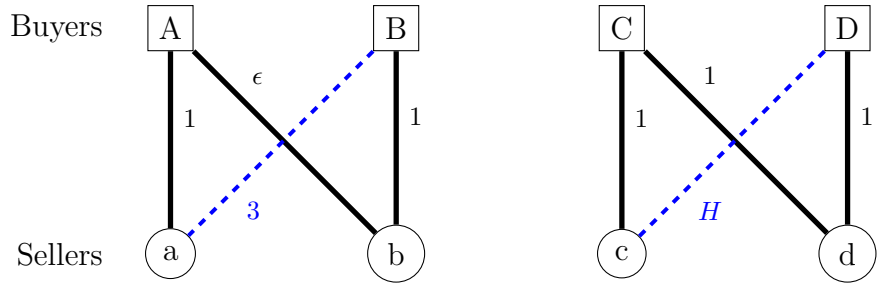
\begin{figure}[h!]
    \centering
    \begin{tikzpicture}[scale=1.5]
        % Square vertices
        \foreach \i/\label in {1/A, 3/B, 5/C, 7/D}
            \node[draw, shape=rectangle, minimum size=0.6cm] (\label) at (\i, 2) {\label};
        
        % Circle vertices
        \foreach \i/\label in {1/a, 3/b, 5/c, 7/d}
            \node[draw, shape=circle, minimum size=0.6cm] (\label) at (\i, 0) {\label};
        
        % Solid edges with weights
        \foreach \x/\y/\w/\pos in {A/a/1/midway, B/b/1/midway, C/c/1/midway, D/d/1/midway, A/b/$\epsilon$/near start, C/d/1/near start}
            \draw[line width=2pt] (\x) -- node[\pos, above right, font=\footnotesize] {\w} (\y);
        
        % Dashed blue edges with weights
        \foreach \x/\y/\w/\pos in {B/a/3/near end,  D/c/$H$/near end}
            \draw[line width=1.5pt, blue, dashed] (\x) -- node[\pos, below right, font=\footnotesize] {\w} (\y);

        % Buyers caption
        \node[left] at (0.5, 2)  {Buyers};

        \node[left] at (0.5, 0)  {Sellers};
    
    \end{tikzpicture}
    \caption{A market with infinite price of anarchy for any $\alpha\in[0,1]$.  Black solid lines are direct links, as captured by $N(i)$ for buyer $i$, and blue dotted lines indicate missing links. Buyer values are annotated adjacent to each edge, and all any value that is omitted is zero.}
    \label{fig:min_price_poa}
\end{figure} 

\begin{theorem}
    If the market clears with the minimum competitive prices, the price of anarchy of the Platform Equilibrium with pure strategies, when the platform sets a zero transaction-fee,
    %$\alpha\in[0,1]$, 
    can be arbitrarily large.
\end{theorem}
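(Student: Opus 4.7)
The plan is to exhibit, in the market of Figure~\ref{fig:min_price_poa}, a pure Platform Equilibrium under minimum competitive prices whose welfare stays bounded as $H$ grows, while the optimal welfare $W^\star$ grows linearly in $H$. The candidate equilibrium is $P=\{a\}$: only seller $a$ joins the platform.

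First I would compute the two welfares. On the complete bipartite graph the matching $B$-$a$, $A$-$b$, $D$-$c$, $C$-$d$ gives $W^\star=4+H+\epsilon$. When $P=\{a\}$, buyer $D$ still cannot reach $c$ (that link is missing), so the optimal matching is $B$-$a$, $A$-$b$, $C$-$c$, $D$-$d$, yielding welfare $5+\epsilon$. The ratio is thus at least $(4+H+\epsilon)/(5+\epsilon)$, which will diverge as $H\to\infty$ provided $P=\{a\}$ really is an equilibrium.

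Next I would verify $P=\{a\}$ is a Platform Equilibrium under $\underline{p}_j = W(S\cup\{j\},B,G)-W(S,B,G)$ at $\alpha=0$. For seller $a$, adding a second on-platform copy $a'$ lets $A$ and $B$ simultaneously take $a$ and $a'$, raising welfare from $5+\epsilon$ to $6$, hence $\underline{p}_a^{\on}=1-\epsilon>0=\underline{p}_a^{\off}$, so $a$ strictly prefers to join. For each of $b$, $c$, $d$, I would argue that both the off-platform min price (given $P=\{a\}$) and the hypothetical on-platform min price (given $P=\{a,j\}$) equal zero, because any duplicate of such a seller stays unmatched in every optimal allocation---the buyers valuing that item positively are already saturated. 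Thus each of $b$, $c$, $d$ is indifferent between joining and not joining, and remaining off-platform is a valid best response.

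The subtle step, and the whole point of the construction, concerns seller $c$: although buyer $D$ values $c$ at $H$, the minimum price satisfies $\underline{p}_c^{\on}=0$ since a duplicate of $c$ has no other buyer to match to ($D$ is the only buyer valuing $c$ positively, and $D$ is unit-demand). This is precisely where the min-price convention diverges from the max-price convention used in the main body: under $\overline{p}_j = W(S,B,G)-W(S\setminus\{j\},B,G)$ one computes $\overline{p}_c^{\on}\approx H$, which would force $c$ to join and the bad equilibrium would vanish. Once the equilibrium is confirmed, the PoA lower bound $(4+H+\epsilon)/(5+\epsilon)\to\infty$ yields the claim.
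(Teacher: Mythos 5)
Your proof is correct and takes essentially the same route as the paper: the same market from Figure~\ref{fig:min_price_poa}, the same equilibrium $P=\{a\}$ in which seller $c$ stays off because its minimum competitive price is zero whether or not it joins, and the same diverging ratio $(4+H+\epsilon)/(5+\epsilon)$. If anything, your verification of the minimum prices via the duplicate-seller characterization $\underline{p}_j = W(S\cup\{j\},B,G)-W(S,B,G)$ is more careful than the paper's own terse computation (which asserts $\underline{p}_b=\underline{p}_c=\underline{p}_d=1$ after $a$ and $c$ join, values that look like maximum rather than minimum prices, e.g.\ $\underline{p}_b$ cannot exceed $\epsilon$ since $b$ sells to $A$; this slip does not affect the conclusion, and your version fixes it).
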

\begin{proof}
    Figure~\ref{fig:min_price_poa} gives an infinite-PoA market. When no sellers join, the allocation of goods is seller $A$ to buyer $a$, seller $B$ to buyer $b$, seller $C$ to buyer $c$, seller $D$ to buyer $d$.
    Sellers' minimum competitive prices are all zero. If seller $a$ and seller $c$ join the platform, $a$ sells to $B$, $b$  to $A$, $c$ to $D$ and $d$ to $C$. The minimum competitive prices are now $\underline{p}_a=1-\epsilon$ and  $\underline{p}_b=\underline{p}_c=\underline{p}_d=1$. So for any $\alpha\in[0,1]$ there is an equilibrium where seller $c$ does not join the platform. When $H\rightarrow\infty$, the price of anarchy approaches infinity.
\end{proof}

\section{Missing Proofs in Section~\ref{sec:existence_of_pure}}\label{app:existence_of_pure}

\subsection{None Existence of Pure Equilibrium} \label{app:non-existence}

\label{app:app_no_pure_eq}
\propNoPure*
\begin{proof}
    A three-seller-four-buyer counterexample is given in Figure~\ref{fig:no_pure_equilibrium}. At transaction fee $\alpha=\frac{1}{2}$, there exists no pure strategy equilibrium. On a high level, this is because a seller's price considers the externality she imposes on the whole market and is easily influenced by other sellers' connections. So sellers' best response dynamic can exhibit a cyclic nature: an on-platform seller can drop off after some other sellers join.

    To explicitly show there is no pure equilibrium, go through all eight pure strategy profiles: 
    \begin{enumerate}
        \item When no one joins, seller $a$ joins the platform and sell to B because $(1-\alpha)p_{a}^{on}=\frac{1}{2}(3.05-1)>1=p_{a}^{off}$.
        \item When seller $a$ is on platform, seller $b$ will have to join the platform to sell to $C$ because $B$ is taken and staying off platform brings zero price.
        \item When both seller $a,b$ are on platform, seller $c$ does not have any demand when off-platform and joins the platform in order to sell to buyer $D$.
        \item When all sellers $a,b,c$ join, seller $a$'s price decreases to $p^{on}_{a}=(3.05+1.1+0.05)-(1.1+1.15)=1.95$. Seller $a$ drops off from platform because after paying platform fees $(1-\alpha)p^{on}_{a}<1=p^{off}_{a}$ the revenue is lower than staying off platform.
        \item When $a$ drops off and $b,c$ are on platform, seller $b$'s on platform utility is $\frac{1}{2}\times1.1$ smaller than off platform utility $1$ and hence $b$ will drop.
        \item When $a,b$ are off platform, $c$ will drop as well.
        \item When $a,c$ are on platform and $b$ not, $b$ will want to join the platform.
        \item When $a,c$ are off platform and $b$ on, $b$ will drop.
    \end{enumerate} 

\end{proof}

\begin{figure}[h!] 
    \centering
    \begin{tikzpicture}[scale=1.5]
        % Square vertices
        \foreach \i/\label in {1/A, 3/B, 5/C, 7/D}
            \node[draw, shape=rectangle, minimum size=0.6cm] (\label) at (\i, 2) {\label};
        
        % Circle vertices
        \foreach \i/\label in {1/a, 3/b, 5/c}
            \node[draw, shape=circle, minimum size=0.6cm] (\label) at (\i, 0) {\label};
        
        % Solid edges with weights
        \foreach \x/\y/\w in {A/a/1, B/b/1, C/c/1}
            \draw[line width=2pt] (\x) -- node[midway, right, font=\footnotesize] {\w} (\y);
        
        % Dashed blue edges with weights
        \foreach \x/\y/\w/\pos/\loc in {B/a/3.05/midway/below right, B/c/1.15/at start/right, C/b/1.1/near start/below right, D/c/0.05/midway/below right}
            \draw[line width=1.5pt, blue, dashed] (\x) -- node[\pos, \loc, font=\footnotesize] {\w} (\y);

        % Buyers caption
        \node[left] at (0.5, 2)  {Buyers};

        \node[left] at (0.5, 0)  {Sellers};
    
    \end{tikzpicture}
    \caption*{Fig. 1. An example with no Platform Equilibrium in pure strategies at $\alpha=\frac{1}{2}$. Black solid lines are direct links, as captured by $N(i)$ for buyer $i$, and blue dotted lines indicate missing links. Buyer values are  annotated adjacent to each edge, and all any value that is omitted is zero.}
\end{figure}
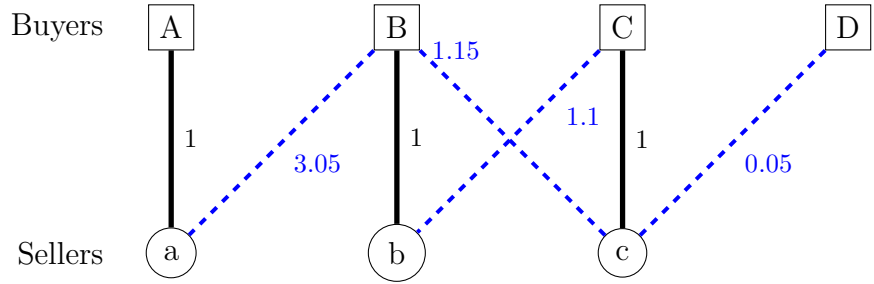

\subsection{Computing Pure Equilibrium in Homogeneous-Goods Markets} \label{app:existence}
For homogeneous-goods markets, competitive prices are directly related to a buyer's next best forgone trade opportunity, or more specifically its direct and indirect competitors. This notion is characterized by \textit{opportunity path} in \citet{kranton2000competition} as follows.
\begin{definition}[Opportunity Path \cite{kranton2000competition}]\label{def:oppo_path}
    For an allocation $\mathbf{a}$ of goods on a buyer-seller network $G$, buyer $i_1$'s %(seller $j_1$'s) 
    opportunity path linking to another buyer $i_t$ is a path $$(i_1,j_1,i_2,j_2,\ldots, j_{t-1},i_t),$$ where for every $\ell\in \{1,\ldots, t-1\}$, $$g_{i_\ell,j_\ell}=1\mbox{ and }g_{i_{\ell+1},j_\ell}=1,$$ and  $$a_{i_\ell,j_\ell}=0\mbox{ and } a_{i_{\ell+1},j_\ell}=1.$$ 
\end{definition}
\citet{kranton2000competition} show the following connection between buyers' opportunity path and sellers' maximum competitive prices, which we make use of.
\begin{theorem}[Opportunity Path Theorem \cite{kranton2000competition}]\label{thm:oppo_path}
    Consider a competitive equilibrium with maximum competitive prices $(\bar{\p},\alloc)$ where $a_{ij}=1$. Seller $j$'s price $\bar{p}_j$ is equal to the lowest valuation of any buyer linked by an opportunity path from buyer $i$. $\bar{p}_j=0$ iff buyer $i$ has an opportunity path linking to a seller who does not sell.
\end{theorem}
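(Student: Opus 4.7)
The plan is to prove both halves of the equality using the welfare characterization $\bar{p}_j = W(S,B,G) - W(S\setminus\{j\}, B, G)$ from Equation~\eqref{eq:min_price}, together with the homogeneous-goods assumption $v_{ij}=v_i$, which lets me track changes in social welfare between two matchings by simply counting which buyers remain matched.

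For the upper bound $\bar{p}_j \le v_{i_t}$, I fix any opportunity path $(i=i_1, j_1, i_2, \ldots, j_{t-1}, i_t)$ from $i$ and construct a feasible allocation $\mathbf{a}'$ on $S \setminus \{j\}$ from $\mathbf{a}$ by shifting along the path: for each $\ell = 1,\ldots,t-1$, unmatch $i_{\ell+1}$ from $j_\ell$ and match $i_\ell$ with $j_\ell$ instead; leave $i_t$ unmatched; keep every other edge of $\mathbf{a}$. Feasibility of each new edge is guaranteed by Definition~\ref{def:oppo_path}, and since values are homogeneous, every reassigned buyer contributes the same valuation as before, so the only welfare change is the loss of $v_{i_t}$. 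Hence $W(S\setminus\{j\}, B, G) \ge W(S,B,G) - v_{i_t}$, and Equation~\eqref{eq:min_price} yields $\bar{p}_j \le v_{i_t}$; taking the minimum over opportunity paths proves this direction.

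For the matching lower bound and the zero-price characterization, I would fix any welfare-optimal allocation $\mathbf{a}^\star$ on $S\setminus\{j\}$ and consider the symmetric difference $\mathbf{a}\triangle \mathbf{a}^\star$, which decomposes into edge-disjoint alternating paths and cycles with respect to the two matchings. Since $j$ is matched in $\mathbf{a}$ but carries no edge in $\mathbf{a}^\star$, exactly one component is an alternating path starting at $j$; walking along it from $j$'s partner $i$ produces a sequence $i, j_1, i_2, j_2, \ldots$ in which each $(i_\ell, j_\ell)$ edge comes from $\mathbf{a}^\star$ (so $a_{i_\ell, j_\ell}=0$ and $g_{i_\ell, j_\ell}=1$) while each $(i_{\ell+1}, j_\ell)$ edge comes from $\mathbf{a}$ (so $a_{i_{\ell+1}, j_\ell}=1$), precisely matching Definition~\ref{def:oppo_path}. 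The path terminates either at a buyer $i_t$ unmatched in $\mathbf{a}^\star$ or at a seller that was unmatched in $\mathbf{a}$. In the first case, every intermediate buyer stays matched and homogeneity forces the welfare drop between $\mathbf{a}$ and $\mathbf{a}^\star$ to equal $v_{i_t}$, giving $\bar{p}_j = v_{i_t}$, which realizes the minimum in the theorem. In the second case, all buyers along the path remain matched in $\mathbf{a}^\star$, the welfare difference is zero, and we have exhibited an opportunity path terminating at a non-selling seller, giving $\bar{p}_j = 0$ and one direction of the iff; the converse follows by applying the shifting construction of the upper bound with the non-selling seller as the new endpoint, which yields a feasible $\mathbf{a}'$ on $S\setminus\{j\}$ of equal welfare.

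The main obstacle will be the structural bookkeeping of $\mathbf{a}\triangle \mathbf{a}^\star$: one must verify that the component containing $j$ is a simple alternating path (using that $j$ has degree $1$ in $\mathbf{a}$ and $0$ in $\mathbf{a}^\star$, while every other vertex has degree at most two in the symmetric difference), that alternating cycles and other components contribute zero net weight under homogeneity and thus do not interfere with the welfare accounting, and that the induced vertex sequence respects the alternating pattern of Definition~\ref{def:oppo_path} in the correct direction. A related subtlety is ensuring the terminating endpoint of the path truly realizes the minimum valuation: this follows from the optimality of $\mathbf{a}^\star$, since if any opportunity path from $i$ ended at a strictly lower valued buyer or at a non-selling seller, the shifting construction would produce a strictly better allocation than $\mathbf{a}^\star$, contradicting its optimality.
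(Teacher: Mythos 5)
The paper does not actually prove this statement: Theorem~\ref{thm:oppo_path} is imported verbatim from \citet{kranton2000competition} and used as a black box, so there is no in-paper proof to compare against. Your reconstruction is essentially correct and is the natural self-contained route: characterize $\bar{p}_j$ as the marginal welfare $W(S,B,G)-W(S\setminus\{j\},B,G)$, obtain the upper bound by shifting the matching along any given opportunity path (losing only the terminal buyer's value under homogeneity, and noting that every buyer linked by an opportunity path is the endpoint of a prefix path, so the bound applies to the minimum), and obtain the matching lower bound from the alternating-path decomposition of $\alloc\triangle\alloc^\star$. One correction to your bookkeeping: the reason the components of $\alloc\triangle\alloc^\star$ other than the one containing $j$ contribute zero net welfare is \emph{not} homogeneity. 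Cycles contribute zero because every buyer on them has degree two and is therefore matched in both allocations; but a path component not passing through $j$ has two degree-one endpoints and could a priori contribute a nonzero amount. What kills it is the optimality of \emph{both} matchings: a positive contribution would let you augment $\alloc^\star$ along that component (it avoids $j$, so the augmented matching still lives in $S\setminus\{j\}$), and a negative one would let you augment $\alloc$, either way a contradiction. With that fix your first-case identity $\bar{p}_j=v_{i_t}$ is airtight and, combined with the upper bound, pins $\bar{p}_j$ to the minimum. A last pedantic point: the ``iff'' in the zero-price clause silently assumes strictly positive buyer values; a zero-valued buyer terminating the symmetric-difference path would give $\bar{p}_j=0$ with no opportunity path reaching a non-selling seller. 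Since the paper takes $v_i\in\mathbb{R}^+$, this is harmless, but it is worth making explicit.
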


\citet{kranton2000competition} then analyze how adding a link between a buyer $i$ and a seller $j$ to graph $G$ affects the optimal welfare. For our purpose, we consider the case where seller $j$ has no prior links or does not transact before the new link.
\begin{lemma}\cite{kranton2000competition}\label{lem:add_one_link}
    Consider a buyer-seller network where a seller $j_0$, that either does not transact or have no prior links, is added a single buyer link to buyer $i_0$. Let $\alloc$ and $\alloc'$ be the optimal allocations before and after adding the link. In $\alloc'$ at most one previously unmatched buyer is matched. Moreover, if in $\alloc$, $a_{i_0,j_1}=1$,  $a_{i_1,j_2}=1$, \ldots, $a_{i_{t-1},j_t}=1$ (and therefore, $a_{i_\ell,j_\ell}=0$ for $\ell \in \{0,\ldots, t-1\}$), and in $\alloc'$, $a_{i_\ell,j_\ell}=1$ for $\ell \in \{0,\ldots, t-1\}$, then we have the following. Before adding the link between $j_0$ and $i_0$, $(i_t,j_t,\ldots, i_1,j_1,i_0)$ is an opportunity path. After adding the link, $(i_0, j_1,i_1,\ldots,  j_t,i_t)$ is an opportunity path. Buyer $i_t$ obtains the good in $\alloc'$ but not in $\alloc$. Furthermore, $v_{i_t}\leq v_{i_\ell}$ for every $\ell< t$.
\end{lemma}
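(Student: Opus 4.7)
The approach is to analyze the symmetric difference $D := \alloc \triangle \alloc'$ via the classical augmenting-path framework for bipartite matching, and then specialize to the homogeneous-goods setting. Since each vertex has matched-degree at most one in either allocation, every vertex has degree at most two in $D$, so $D$ decomposes into vertex-disjoint alternating paths and cycles. The new edge $(i_0,j_0)$ lies in $\alloc'\setminus \alloc$, and because $j_0$ is unmatched in $\alloc$ by hypothesis, $j_0$ has degree exactly one in $D$, so it is the endpoint of a unique alternating path $P^{\ast}$ whose first edge is $(i_0,j_0)$. Walking along $P^{\ast}$ from $j_0$ with $\alloc'$-edges and $\alloc$-edges strictly alternating, the path reads $(j_0,i_0,j_1,i_1,\ldots)$ and must terminate at a vertex whose degrees in $\alloc$ and $\alloc'$ differ: either at a seller that lost its $\alloc$-match (non-augmenting case, no previously-unmatched buyer becomes matched) or at a buyer $i_t$ that is unmatched in $\alloc$ and newly matched in $\alloc'$. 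Because only $j_0$'s neighborhood changed, no other unmatched seller can gain a match, which rules out additional augmenting paths; any remaining components of $D$ are alternating cycles that merely permute matches among an already-matched set of buyers. In particular, at most one previously-unmatched buyer is matched in $\alloc'$.

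In the augmenting case, fix the indexing so that the $\alloc$-edges of $P^{\ast}$ are $(i_0,j_1),(i_1,j_2),\ldots,(i_{t-1},j_t)$ and the $\alloc'$-edges of $P^{\ast}$ are $(i_0,j_0),(i_1,j_1),\ldots,(i_t,j_t)$, all except $(i_0,j_0)$ lying in $G$. I then verify the opportunity-path conditions directly from Definition~\ref{def:oppo_path}: in $\alloc$, the sequence $(i_t,j_t,i_{t-1},j_{t-1},\ldots,j_1,i_0)$ has consecutive edges $(i_\ell,j_\ell)$ with $a_{i_\ell,j_\ell}=0$ (including $(i_t,j_t)$, which is unmatched in $\alloc$ because $i_t$ is unmatched there) alternating with edges $(i_{\ell-1},j_\ell)$ with $a_{i_{\ell-1},j_\ell}=1$, exactly as required; reversing the sequence and replacing $\alloc$ by $\alloc'$ gives the second opportunity path $(i_0,j_1,i_1,\ldots,j_t,i_t)$ with the analogous alternation under $\alloc'$.

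For $v_{i_t}\le v_{i_\ell}$ with $\ell<t$, I build an alternative allocation $\alloc^{\ell}$ on $G$ by swapping the matching along the sub-path $(i_\ell,j_{\ell+1},i_{\ell+1},\ldots,j_t,i_t)$ inside $\alloc$: remove the $\alloc$-edges $(i_{\ell+k-1},j_{\ell+k})$ for $k=1,\ldots,t-\ell$ and add $(i_{\ell+k},j_{\ell+k})$ for the same range. All added edges belong to $P^{\ast}\setminus\{(i_0,j_0)\}\subseteq G$, so $\alloc^{\ell}$ is feasible on $G$; it leaves every match outside $P^{\ast}$ intact, unmatches $i_\ell$, and matches $i_t$. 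Since welfare on homogeneous goods depends only on the set of matched buyers, the welfare of $\alloc^{\ell}$ differs from that of $\alloc$ by exactly $v_{i_t}-v_{i_\ell}$, and optimality of $\alloc$ on $G$ forces this difference to be non-positive.

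The main obstacle I anticipate is the structural step in the first paragraph: in a homogeneous-goods market, welfare depends only on which buyers are matched, so multiple optimal allocations can coexist and $D$ need not consist of a single alternating path in the most naive sense. The care needed is to show that any additional components of $D$ are cycles or non-augmenting paths that permute sellers among an already-matched set of buyers, hence do not produce spurious newly-matched buyers and do not interfere with the indexing of $P^{\ast}$ used to derive the opportunity paths and the value inequality in the subsequent paragraphs.
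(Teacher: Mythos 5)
The paper does not actually prove this lemma---it is imported verbatim from \citet{kranton2000competition} with a citation and no argument---so there is no in-paper proof to compare against. Your augmenting-path proof via the symmetric difference $D=\alloc\triangle\alloc'$ is the standard and essentially correct way to establish it: the identification of the unique component $P^{\ast}$ containing the new edge, the direct verification of Definition~\ref{def:oppo_path} for both orientations of $P^{\ast}$, and the swap construction $\alloc^{\ell}$ giving $v_{i_t}\leq v_{i_\ell}$ by optimality of $\alloc$ on $G$ are all sound (and the last argument correctly covers $\ell=0$ as well).

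The one step you assert rather than prove is exactly the one you flag as the obstacle: why no component of $D$ other than $P^{\ast}$ can newly match a buyer. Your stated reason---``only $j_0$'s neighborhood changed, so no other unmatched seller can gain a match''---is not an argument; the correct justification is that any such component $Q$ contains no edge incident to $(i_0,j_0)$, hence lies entirely in $G$, so $\alloc\triangle Q$ would be feasible in $G$ and would strictly improve $\alloc$ whenever it matches a new buyer of positive value, contradicting optimality of $\alloc$. Note also that the remaining components need not all be cycles: an even-length path with two buyer endpoints of \emph{equal} value can swap one matched buyer for one unmatched buyer without changing welfare, in which case a second ``previously unmatched'' buyer is matched in $\alloc'$. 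This does not break anything downstream---the lemma should be read (as the paper in effect does in Corollary~\ref{cor:adding_sellers_buyers_set}) as a statement about a suitably chosen optimal $\alloc'$, or up to relabeling equal-value buyers---but your write-up should either make that existential reading explicit or rule out buyer--buyer even paths by a genericity/tie-breaking convention. The same caveat handles the degenerate case where $\alloc'$ does not use the edge $(i_0,j_0)$ at all, in which $P^{\ast}$ is empty and one simply takes $\alloc'=\alloc$.
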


If we introduce a set of new sellers and their edges to a market and apply Lemma~\ref{lem:add_one_link} repeatedly on each new seller, every step at most one new buyer is introduced, and thus buyers transacting in the original competitive equilibrium allocation can be preserved. This is formalized as the following corollary.
\begin{corollary}\label{cor:adding_sellers_buyers_set}
Let $(S,B,G)$ be a buyer-seller network with a corresponding optimal allocation $\alloc$, and let $B^G$ be the set of transacting buyers. Now add a set of sellers $P$ with their edges who previously know no buyers and let $G(P)$ be resulting network. %\alon{Not clear - you add the sellers with edges to all buyers?} 
For the new market $(S\cup P,B,G(P))$, there exists an optimal allocation $\alloc'$ where all buyers in $B^G$ transact.
\end{corollary}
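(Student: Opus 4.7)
The plan is to prove Corollary~\ref{cor:adding_sellers_buyers_set} by induction on $|P|$, using an exchange argument on the symmetric difference of two optimal allocations (rather than iterating Lemma~\ref{lem:add_one_link} edge-by-edge, which runs into trouble once a newly added seller starts transacting). The base case $|P|=0$ is immediate with $\alloc'=\alloc$. For the inductive step, assume there is an optimal allocation $\alloc_k$ in $(S \cup P_k, B, G(P_k))$ matching every buyer in $B^G$, and add one more seller $j_0$; write $M(\cdot)$ for the set of matched buyers under an allocation. Among all optimal allocations in the enlarged network $(S \cup P_k \cup \{j_0\}, B, G(P_k\cup\{j_0\}))$, choose $\alloc^*$ maximizing $|M(\alloc_k) \cap M(\alloc^*)|$, and suppose toward contradiction that some $i \in M(\alloc_k)$ is not in $M(\alloc^*)$.

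Then $\alloc_k \triangle \alloc^*$ contains an alternating path $Q=(i, j_1, i_1, j_2, i_2, \ldots)$ starting with an $\alloc_k$-edge at $i$. The crucial observation is that $j_0$ cannot appear as a vertex on $Q$: every seller on $Q$ is incident to an $\alloc_k$-edge on $Q$, but $j_0$ is absent from $\alloc_k$'s network. Hence every edge of $Q$ belongs to both $G(P_k)$ and $G(P_k \cup \{j_0\})$, so swapping the $\alloc_k$- and $\alloc^*$-edges along $Q$ yields valid matchings in either network. If $Q$ ends at a seller $j_t$ unmatched in $\alloc^*$, the swap inside $\alloc^*$ newly matches $i$ without unmatching anyone, strictly increasing welfare by $v_i$ (and if $v_i=0$, it still yields an equally optimal allocation that contradicts the maximal choice of $\alloc^*$); this violates the optimality of $\alloc^*$. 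If instead $Q$ ends at a buyer $i_t$ matched in $\alloc^*$ but not in $\alloc_k$, the inductive hypothesis forces $i_t \notin B^G$; the swap inside $\alloc^*$ changes welfare by $v_i - v_{i_t}$, so $v_i \leq v_{i_t}$ by the optimality of $\alloc^*$, while the mirror swap inside $\alloc_k$ changes welfare by $v_{i_t} - v_i$, so $v_{i_t} \leq v_i$ by the optimality of $\alloc_k$. Hence $v_i = v_{i_t}$, the swapped $\alloc^*$ is also optimal, and it gains $i \in M(\alloc_k)$ at the cost of $i_t \notin M(\alloc_k)$, contradicting the maximality of $|M(\alloc_k) \cap M(\alloc^*)|$.

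Both cases contradict the existence of such an $i$, so $M(\alloc_k) \subseteq M(\alloc^*)$ and in particular $B^G \subseteq M(\alloc^*)$, closing the induction. The main obstacle is Case B: one needs the mirror swap inside $\alloc_k$ to be legal, and this is precisely what the observation $j_0 \notin Q$ delivers. Without it, some edges of $Q$ could exist only in the enlarged network, and the paired-optimality argument forcing $v_i = v_{i_t}$ would fail.
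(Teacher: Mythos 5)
Your proof is correct, but it takes a genuinely different route from the paper's. The paper proves Corollary~\ref{cor:adding_sellers_buyers_set} in one line by iterating Lemma~\ref{lem:add_one_link}: each newly added link triggers an opportunity-path shift that preserves all previously transacting buyers and matches at most one new one. As you note, that iteration is delicate, since the hypothesis of Lemma~\ref{lem:add_one_link} (the seller receiving the new link either does not transact or has no prior links) can fail once a newly added seller starts transacting after its first edge; the paper glosses over this. Your argument instead adds each new seller $j_0$ with all of its edges at once and runs a standard alternating-path exchange on $\alloc_k \triangle \alloc^*$, with $\alloc^*$ chosen among optimal allocations of the enlarged network to maximize $|M(\alloc_k)\cap M(\alloc^*)|$. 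The pivotal observation that $j_0$ cannot lie on the path $Q$ (every seller on a path that starts at a buyer with an $\alloc_k$-edge carries an $\alloc_k$-edge, and $j_0$ is absent from the network supporting $\alloc_k$) is exactly what legitimizes the mirror swap inside $\alloc_k$, and both endpoint cases are handled correctly, including the tie $v_i = v_{i_t}$, which is resolved by the maximality of the intersection rather than by strict welfare improvement. Note that your welfare accounting (welfare equals the sum of matched buyers' values, independent of the matching partner) uses the homogeneous-goods assumption; that is the standing assumption of the subsection, so this is fine, but it is worth stating explicitly. Your route is more self-contained and arguably more rigorous---it avoids the opportunity-path machinery entirely and establishes the slightly stronger conclusion $M(\alloc_k)\subseteq M(\alloc^*)$---while the paper's route, when it works, additionally yields the structural information about opportunity paths that is reused elsewhere (e.g., in the proofs of Theorem~\ref{thm:PE_pure} and Theorem~\ref{thm:equilibrium_set_enlarges_1}).
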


Using this, we can relate the optimal welfare in a network to the welfare of sellers not joining the platform.

\OptimalWelfare*
\begin{proof}
    Market $(S,B,G(P))$ can be viewed as adding sellers in $P$ one-by-one with all edges to $(S\setminus P, B,G)$. When adding each new seller, Corollary~\ref{cor:adding_sellers_buyers_set} reads the previously transacting buyer remains the same, while the largest remaining unmatched buyer is matched to the new seller.
\end{proof}

\SameOnPrice*
\begin{proof}
    By Lemma~\ref{lem:optimal_welfare} and Eq.~\eqref{eq:max_price}, expand the on platform price for each seller $j\in P$ equals to $\bar{v}(m')-\bar{v}(m'-1)$, where $|P|=m'$.
\end{proof}

\vspace{0.5cm}

Algorithm~\ref{alg:PE_pure} defines a seller's on platform gain. We show it is closely related with the buyer valuation that $j$ is matched to in some markets. We will use this quantity as an intermediate step to prove Lemma~\ref{lem:algo_select} and Theorem~\ref{alg:PE_pure}. 

\begin{lemma}\label{lem:on_platform_gain}
    Consider a market $(S,B,G(P))$ where sellers $j,\hat{j}\in S, j,\hat{j}\notin P$. Let $i_P(j),i_P(\hat{j})$ be the new transacting buyer when $j$ and $\hat{j}$ are added with their off-platform edges to market $(S\setminus P \setminus \{j,\hat{j}\},B,G)$ respectively. Then $\phi_j^P > \phi_{\hat{j}}^P \Rightarrow v_{i_P(j)} < v_{i_P(\hat{j})}$, and $v_{i_P(j)} < v_{i_P(\hat{j})} \Rightarrow \phi_j^P \geq \phi_{\hat{j}}^P$
    where $\phi_j^P=(1-\alpha)p_j^{\on}(P)-p_j^{\off}(P)$ is the on-platform gain.
\end{lemma}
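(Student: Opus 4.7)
\emph{Proof plan.} The plan is to express all four prices involved via the maximum-price formula~\eqref{eq:max_price} and Lemma~\ref{lem:same_on_price}, reduce the problem to comparing two order statistics on multisets that differ only by a one-element swap, and then derive both implications from a single inequality.

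First, on the off-platform side, let $W_0 = W(S\setminus P\setminus\{j,\hat j\},B,G)$, $W_j = W(S\setminus P\setminus\{\hat j\},B,G)$, $W_{\hat j} = W(S\setminus P\setminus\{j\},B,G)$, and $W_{j\hat j} = W(S\setminus P,B,G)$. By Corollary~\ref{cor:adding_sellers_buyers_set}, inserting $j$ into the base market introduces exactly $i_P(j)$ into the transacting set, so $W_j - W_0 = v_{i_P(j)}$, and symmetrically $W_{\hat j} - W_0 = v_{i_P(\hat j)}$. Equation~\eqref{eq:max_price} then gives
\begin{equation*}
p_j^{\off}(P) - p_{\hat j}^{\off}(P) \;=\; (W_{j\hat j}-W_{\hat j}) - (W_{j\hat j}-W_j) \;=\; v_{i_P(j)} - v_{i_P(\hat j)}.
\end{equation*}
For the on-platform side, let $\bar B^0$ be the non-transacting buyers in the base market; then by the same corollary, the non-transacting buyers in $(S\setminus P\setminus\{j\},B,G)$ and $(S\setminus P\setminus\{\hat j\},B,G)$ are $\bar B^0\setminus\{i_P(\hat j)\}$ and $\bar B^0\setminus\{i_P(j)\}$ respectively, and applying Lemma~\ref{lem:same_on_price} to the on-platform sets $P\cup\{j\}$ and $P\cup\{\hat j\}$ (both of size $|P|+1$) identifies $p_j^{\on}(P)$ and $p_{\hat j}^{\on}(P)$ with the $(|P|+1)$-th largest values of these two multisets --- multisets that differ only by a single swap.

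Setting $a = v_{i_P(j)}$ and $b = v_{i_P(\hat j)}$ and assuming without loss of generality $a \le b$, I would then establish the key bound
\begin{equation*}
0 \;\le\; p_{\hat j}^{\on}(P) - p_j^{\on}(P) \;\le\; b - a
\end{equation*}
by tracking the insertion positions of $a$ and $b$ into the decreasing sort of $\bar B^0 \setminus \{i_P(j),i_P(\hat j)\}$ relative to the index $|P|+1$ and comparing cases. The left inequality is monotonicity of the $(|P|+1)$-th order statistic under pointwise dominance; the right inequality captures that the gap between two order statistics is always clamped by the value gap $b-a$ of the swapped entries. I expect this small case check to be the main obstacle, but it is short.

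Combining the pieces with $(1-\alpha)\in[0,1]$, so that multiplying the inequality $p_j^{\on}(P) - p_{\hat j}^{\on}(P) \ge a - b$ by $(1-\alpha)$ preserves the direction,
\begin{align*}
\phi_j^P - \phi_{\hat j}^P
&= (1-\alpha)\bigl(p_j^{\on}(P) - p_{\hat j}^{\on}(P)\bigr) - \bigl(p_j^{\off}(P) - p_{\hat j}^{\off}(P)\bigr) \\
&\ge (1-\alpha)(a-b) - (a-b) \;=\; \alpha(b-a) \;\ge\; 0,
\end{align*}
which proves the second implication $v_{i_P(j)} < v_{i_P(\hat j)} \Rightarrow \phi_j^P \ge \phi_{\hat j}^P$. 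The first implication is its contrapositive: if $v_{i_P(j)} \ge v_{i_P(\hat j)}$, the same argument with $j$ and $\hat j$ swapped yields $\phi_{\hat j}^P \ge \phi_j^P$, ruling out $\phi_j^P > \phi_{\hat j}^P$; the equality case $a = b$ is immediate because both multisets coincide as multisets of values and all four prices then match pairwise.
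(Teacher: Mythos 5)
Your reduction of the on-platform prices to order statistics is correct and matches the paper: $p_j^{\on}(P)$ and $p_{\hat j}^{\on}(P)$ are the $(|P|+1)$-th largest values of $\bar B^0\setminus\{i_P(\hat j)\}$ and $\bar B^0\setminus\{i_P(j)\}$, where $\bar B^0$ denotes the non-transacting buyers of the base market $(S\setminus P\setminus\{j,\hat j\},B,G)$. The gap is on the off-platform side. You evaluate $p_j^{\off}(P)$ as the marginal contribution of $j$ to the market $(S\setminus P,B,G)$, i.e.\ $W_{j\hat j}-W_{\hat j}$, but by definition $p_j^{\off}(P)=W(S,B,G(P))-W(S\setminus\{j\},B,G(P))$ is computed in the market where $P$ is already on the platform. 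Decomposing via Lemma~\ref{lem:optimal_welfare}, the correct difference is
\begin{equation*}
p_j^{\off}(P)-p_{\hat j}^{\off}(P)=\bigl(v_{i_P(j)}-v_{i_P(\hat j)}\bigr)+\Bigl(\bar v_{S\setminus P\setminus\{\hat j\}}(|P|)-\bar v_{S\setminus P\setminus\{j\}}(|P|)\Bigr),
\end{equation*}
and the second parenthesis (the difference of the top-$|P|$ sums of $\bar B^0\setminus\{i_P(j)\}$ and $\bar B^0\setminus\{i_P(\hat j)\}$) is nonzero in general. Concretely, take $\bar B^0=\{10,5,3\}$, $|P|=1$, $a=v_{i_P(j)}=3$, $b=v_{i_P(\hat j)}=10$: then $p_j^{\off}(P)-p_{\hat j}^{\off}(P)=(3+10)-(10+5)=-2$, not $a-b=-7$, and $\phi_j^P-\phi_{\hat j}^P=(1-\alpha)(3-5)-(-2)=2\alpha$, strictly below your claimed lower bound $\alpha(b-a)=7\alpha$. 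The structural problem is that the true off-platform difference satisfies $p_j^{\off}(P)-p_{\hat j}^{\off}(P)\ge a-b$ (not $=a-b$, and in particular not $\le a-b$), so subtracting it can no longer be lower-bounded by $-(a-b)$; your two one-sided estimates cannot be combined independently.

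The conclusion of the lemma is nevertheless true, and the repair is essentially the paper's argument: express $\phi_j^P-\phi_{\hat j}^P$ entirely over the base market as the difference between $v_{i_P(\hat j)}+\bar v_{S\setminus P\setminus\{j\}}(|P|)+(1-\alpha)\bar v^{|P|+1}_{S\setminus P\setminus\{j\}}$ and the analogous expression with $j$ and $\hat j$ exchanged, and then do a case analysis on whether $a$ and $b$ lie among the top $|P|+1$ values of $\bar B^0$. In the only nontrivial case ($b$ among the top $|P|+1$, $a$ not), the difference equals $(1-\alpha)w_{|P|+2}+\alpha w_{|P|+1}-a\ge 0$ because both order statistics dominate $a$; in the other cases it is $0$ or $b-a$. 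The slack in your on-platform bound and the omitted top-$|P|$ term in the off-platform prices are correlated and must be handled jointly, which is exactly what working with the single combined expression accomplishes.
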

\begin{proof}
    Expanding the definition of prices
    \begin{eqnarray}
        \phi_j^P-\phi_{\hat{j}}^P &=& (1-\alpha)[p_j^{\on}(P)-p_{\hat{j}}^{\on}(P)]+ [p^{\off}_{\hat{j}}(P)-p^{\off}_{j}(P)]\label{eq:on_platform_gain}\\
        p_j^{\on}(P)-p_{\hat{j}}^{\on}(P) &=& W(S,B,G(P\cup \{j\}))-W(S,B,G(P\cup \{\hat{j}\}))\nonumber\\
        & + & W(S\setminus\{\hat{j}\},B,G(P)) - W(S\setminus\{j\},B,G(P)) \nonumber\\
        p^{\off}_{\hat{j}}(P)-p^{\off}_{j}(P)   &=& W(S\setminus\{j\},B,G(P)) - W(S\setminus\{\hat{j}\},B,G(P))\nonumber
    \end{eqnarray}
    View the welfare terms as adding sellers to a base market $(S\setminus P\setminus\{j,\hat{j}\}, B, G)$. For example, $W(S\setminus\{j\},B,G(P))$ is adding $\hat{j}$ to market then $P$ to platform. Let $\bar{v}_{S\setminus P\setminus \{j\}}(|P|)$ be the sum of the $|P|$ largest buyers who do not transact in $W(S\setminus P \setminus\{j\},B,G)$. (There are more than $|P|$ none-transacting buyers otherwise the algorithm would not have chosen $P$ to join.) We can thus express the welfare term as follows using Lemma~\ref{lem:optimal_welfare}
    \begin{eqnarray*}
        W(S\setminus\{j\},B,G(P)) &=& W(S\setminus P\setminus \{j,\hat{j}\},B,G)+v_{i_P(\hat{j})}+\bar{v}_{S\setminus P\setminus \{j\}}(|P|)\\
        W(S\setminus\{\hat{j}\},B,G(P)) &=& W(S\setminus P\setminus \{j,\hat{j}\},B,G)+v_{i_P(j)}+\bar{v}_{S\setminus P\setminus \{\hat{j}\}}(|P|)\\
        W(S,B,G(P\cup\{j\})) &=& W(S\setminus P\setminus \{j,\hat{j}\},B,G)+v_{i_P(\hat{j})}+\bar{v}_{S\setminus P\setminus \{j\}}(|P|+1)\\
        W(S,B,G(P\cup\{\hat{j}\})) &=& W(S\setminus P\setminus \{j,\hat{j}\},B,G)+v_{i_P(j)}+\bar{v}_{S\setminus P\setminus \{\hat{j}\}}(|P|+1)
    \end{eqnarray*}
    Let $\bar{v}_{S\setminus P\setminus\{j\}}^{|P|+1}$ be the $|P|+1$-th largest buyer who does not transact in $W(S\setminus P\setminus \{j\},B,G)$, and similarly for $\bar{v}_{S\setminus P\setminus\{\hat{j}\}}^{|P|+1}$. 
    Bring these back to Equation~\eqref{eq:on_platform_gain}
    \begin{eqnarray*}
        \phi_j^P-\phi_{\hat{j}}^P = [v_{i_P(\hat{j})}+\bar{v}_{S\setminus P\setminus \{j\}}(|P|)+(1-\alpha)\bar{v}_{S\setminus P\setminus\{j\}}^{|P|+1}]-
        [v_{i_P(j)}+\bar{v}_{S\setminus P\setminus \{\hat{j}\}}(|P|)+(1-\alpha)\bar{v}_{S\setminus P\setminus\{\hat{j}\}}^{|P|+1}]
    \end{eqnarray*}
    Starting from market $W(S\setminus P\setminus \{j,\hat{j}\},B,G)$,
    the first bracket is the value of buyer that $\hat{j}$ introduces, plus the $|P|$ largest remaining buyers, plus $(1-\alpha)$ of the $|P|+1$-th remaining largest buyer. The second bracket is similar, after $j$ introduces a buyer. To guarantee $\phi_j^P>\phi_{\hat{j}}^P$, a necessary condition is $v_{i_P(\hat{j})}>v_{i_P(j)}$. Note both inequalities are strict: if $\phi_j^P=\phi_{\hat{j}}^P$, it can be that $v_{i_P(\hat{j})}\leq v_{i_P(j)}$. On the other direction, if $v_{i_P(\hat{j})}>v_{i_P(j)}$, we can only guarantee $\phi_j^P\geq \phi_{\hat{j}}^P$.
\end{proof}

\vspace{0.1cm}

\AlgoSelect*
\begin{proof}
    Lemma~\ref{lem:on_platform_gain} shows that the necessary condition for $\phi_{\hat{j}}^P>\phi_j^P$ is $v_{i_P(j)}>v_{i_P({\hat{j}})}$. Thus for all sellers $ {\hat{j}}\in \Phi_{\max}^P$ and all sellers  $j\in S\setminus P, j\notin \Phi_{\max}^P$
    \begin{eqnarray*}
        W(S\setminus P\setminus \{{\hat{j}},j\},B,G) + v_{i_P({\hat{j}})} & < & W(S\setminus P\setminus \{{\hat{j}},j\},B,G) + v_{i_P(j)}\\
        W(S\setminus P\setminus \{j\},B,G) &< & W(S\setminus P\setminus \{{\hat{j}}\},B,G).
    \end{eqnarray*}
    Let $\bar{v}_{S\setminus P\setminus\{j\}}(|P|)$ be the sum of the $|P|$ largest buyers who do not transact in $W(S\setminus P\setminus\{j\},B,G)$, and $\bar{v}_{S\setminus P\setminus\{{\hat{j}}\}}(|P|)$ be the sum of the $|P|$ largest buyers who do not transact in $W(S\setminus P\setminus\{\hat{j}\},B,G)$.
    \begin{eqnarray*}
        W(S\setminus P\setminus \{j\},B,G) + \bar{v}_{S\setminus P\setminus\{j\}}(|P|) &\leq & W(S\setminus P\setminus \{{\hat{j}}\},B,G) + \bar{v}_{S\setminus P\setminus\{{\hat{j}}\}}(|P|) \\
        W(S\setminus \{j\},B,G(P)) &\leq & W(S\setminus \{{\hat{j}}\},B,G(P)).
    \end{eqnarray*}
    By expanding the definition of prices, we have
    \begin{eqnarray*}
        p_{\hat{j}}^{\off}(P) &=& W(S,B,G(P))-W(S\setminus\{{\hat{j}}\},B,G(P))\\
        &\leq& W(S,B,G(P))-W(S\setminus\{j\},B,G(P))=p_{j}^{\off}(P)
    \end{eqnarray*}
    For sellers ${\hat{j}},j$ within set $\Phi_{\max}^P$, $\phi_{\hat{j}}^P=\phi_{j}^P$. The algorithm in Step~\ref{alg:step} breaks ties towards ${\hat{j}}$ with the lowest off platform price.
    
    For the other direction, one can verify if $p_{\hat{j}}^{\off}(P)<p_{j}^{\off}(P)$, then $v_{i_P}(j)>v_{i_P}({\hat{j}})$. Lemma~\ref{lem:on_platform_gain} then guarantees $\phi_{\hat{j}}^P\geq \phi_{j}^P$. So ${\hat{j}}$ is in $\Phi_{\max}^P$ at Step~\ref{alg:step} of the Algorithm and selected because it has lower off platform price.
\end{proof}

\vspace{0.1in}

To prove theorem~\ref{thm:PE_pure}, we will need one more conditions. \citet{maxweightsubmod} proves max weight matching on bipartite graphs satisfies the submodularity condition: with more vertices on one side, adding one new edge brings marginally lower increase in maximum weight. We rephrase it here and apply it to the buyer-seller markets.
\begin{lemma}[Max weight matching submodularity \cite{maxweightsubmod}]\label{lem:submodularity}
    For a set of sellers $S'\subseteq S$, a single seller $j\notin S$ and buyers $B$ connection $G$   
    \begin{eqnarray}\label{eq:submodularity}
        W(S\cup\{j\},B,G)-W(S,B,G) \leq W(S'\cup\{j\},B,G)-W(S',B,G)
    \end{eqnarray}
\end{lemma}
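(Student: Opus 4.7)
}

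The plan is a standard augmenting-path exchange argument. First I would rearrange the target inequality into the equivalent symmetric form
\[
 W(S\cup\{j\},B,G) + W(S',B,G) \;\leq\; W(S,B,G) + W(S'\cup\{j\},B,G),
\]
and fix max-weight matchings $M_1$ on $(S\cup\{j\},B,G)$ and $M_2$ on $(S',B,G)$. Then I would analyze the multigraph $H = M_1 \cup M_2$ (keeping both copies of any shared edge). Each buyer and each seller has degree at most $2$ in $H$, so $H$ decomposes into vertex-disjoint alternating paths and even alternating cycles whose edges alternate between $M_1$ and $M_2$.

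Next I would isolate the component $C$ of $H$ that contains seller $j$. Since $j \notin S'$, $j$ does not appear in $M_2$, so $j$ has degree at most $1$ in $H$; this forces $C$ to be a (possibly trivial) alternating path that starts at $j$ with an $M_1$-edge (or is just the isolated vertex $j$, handled trivially by the fact that $W(S\cup\{j\})=W(S)$). Walking along $C$ from $j$ via $M_1, M_2, M_1, \ldots$, every seller of $C$ other than $j$ is entered through an $M_2$-edge, hence lies in $S'$. This is the crucial structural observation.

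I would then define the swaps $M_1' = M_1 \triangle E(C)$ and $M_2' = M_2 \triangle E(C)$. Each is still a matching because on an alternating path swapping the two colors preserves the matching property. Using the structural observation, I would verify that (i) $j$ becomes unmatched in $M_1'$ and all sellers still matched in $M_1'$ lie in $S$, so $M_1'$ is a feasible matching on $(S,B,G)$; and (ii) $j$ becomes matched in $M_2'$ via its original $M_1$-edge, all other sellers appearing in $M_2'$ lie in $S' \cup \{j\}$, so $M_2'$ is a feasible matching on $(S'\cup\{j\},B,G)$. Since the swap only moves edges between the two matchings, the total weight is conserved:
\[
 W(M_1') + W(M_2') \;=\; W(M_1) + W(M_2) \;=\; W(S\cup\{j\},B,G) + W(S',B,G).
\]
Combining with the optimality bounds $W(M_1') \le W(S,B,G)$ and $W(M_2') \le W(S'\cup\{j\},B,G)$ yields the desired inequality.

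The only subtle step is the bookkeeping that guarantees $M_1'$ lives on $S$ and $M_2'$ lives on $S'\cup\{j\}$; this is where the alternation on $C$ must be used to conclude that every non-$j$ seller of $C$ is in $S'$. Everything else—decomposing $H$ into paths and cycles, and conservation of total weight under the swap—is routine, and the edge cases ($j$ unmatched in $M_1$, $C$ consisting of a single edge, $C$ ending at a buyer versus a seller) all fall out of the same argument.
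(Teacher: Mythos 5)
Your proof is correct, and it is built on the same core technique as the paper's---decomposing the union of two matchings into alternating paths and cycles and performing an exchange---but it instantiates that technique differently. The paper proves the stronger lattice form $W(S_1)+W(S_2)\geq W(S_1\cup S_2)+W(S_1\cap S_2)$ by comparing optimal matchings $M_\cup$ and $M_\cap$ on $S_1\cup S_2$ and $S_1\cap S_2$, which forces it to repartition \emph{all} components of the union: it must classify paths by whether they touch $S_1\setminus S_2$ or $S_2\setminus S_1$, argue these classes are disjoint, and check that the resulting global reassignment yields feasible matchings on $S_1$ and $S_2$. You instead compare matchings on $S\cup\{j\}$ and $S'$ directly and exploit the fact that only one seller is being added: since $j$ has degree at most one in $M_1\cup M_2$, only the single path component through $j$ needs to be swapped, and the alternation immediately places every other seller of that path in $S'$. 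This buys a shorter and less error-prone feasibility check at the cost of proving only the single-element marginal inequality rather than full submodularity---which is all the lemma asserts anyway. The edge cases you flag ($j$ unmatched in $M_1$, a one-edge component, a path ending at a buyer) are indeed benign, and the weight-conservation step is sound because the component through $j$ contains no edge shared by both matchings ($j\notin S'$ rules out sharing at $j$, and the matching property rules it out further along the path).
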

\begin{proof}
    we prove the above through an equivalent definition of submodularity. For set of sellers $S_1,S_2\subseteq S$, denote $W(S_1):=W(S_1,B,G),W(S_2):=W(S_2,B,G)$. Prove $$W(S_1)+W(S_2)\geq W(S_1\cup S_2)+W(S_1\cap S_2)$$ 
    Using $M_{\cup},M_{\cap}$ to denote the edges in max matching with sellers $S_1\cup S_2, S_1\cap S_2$, it suffices to prove that $M_{\cup}$ and $M_{\cap}$ can be partitioned into two disjoint matchings $M_1, M_2$ for $W(S_1)$ and $W(S_2)$ respectively. Then by max weight matching $W(S_1)$ is weakly greater than the weight of matching $M_1$, $W(S_2)$ greater than that of $M_2$.
    
    Let $C$ denote the collection of edges in $M_\cap$ and $M_\cup$. Edges in $C$ form alternating paths and cycles. A path alternate between edges in $M_\cap$ and $M_\cup$, while cycles are restricted to vertices within $S_1\cap S_2$: for vertices in $S_1\setminus S_2$ and $S_2\setminus S_1$, there is at most one edge in $C$ connecting them because they are not in $M_\cap$.

    Let $L_1$ be the set of paths with at least one vertex in $S_1\setminus S_2$, $L_2$ be the set of paths with at least one vertex in $S_2\setminus S_1$. Then $L_1\cap L_2 = \emptyset$. Suppose a path $l$ contains vertex $j_1\in S_1\setminus S_2, j_2\in S_2\setminus S_1$, then $j_1,j_2$ must be two end points of $l$ because they have one edge. For the same reason, there are no other vertex of $S_1\setminus S_2, S_2\setminus S_1$ in $l$. Since $j_1, j_2$ are both in $S$, there are even number of edges in this path. This cannot be because $l$ alternates between edges in $M_\cap$ and $M_\cup$: if $l_1$ is connected through edges in $M_\cup$, by even number of edges $l_2$ must be connected through edges in $M_\cap$. Vice versa. Either $j_1$ or $j_2$ must be in $S_1\cap S_2$, forming a contradiction.

    Now we are ready to partition $M_\cap$ and $M_\cup$. Let 
    \begin{eqnarray}
        M_1  &= (L_1\cap M_\cup) \cup (L_2\cap M_\cap) \cup (C\setminus \{L_1\cup L_2\}\cap M_\cap)\\
        M_2  &= (L_2\cap M_\cup) \cup (L_1\cap M_\cap) \cup (C\setminus \{L_1\cup L_2\}\cap M_\cup)
    \end{eqnarray}
    We verify $M_1\cup M_2 = M_\cup \cup M_\cap$, $M_1\cap M_2 = M_\cup \cap M_\cap$, confirming $M_1$ and $M_2$ are indeed two partitions of $M_\cap, M_\cup$. Now prove $M_1$ is a matching for $W(S_1)$. First $M_1$ uses no vertex from $S_2\setminus S_1$: $L_1$, $M_\cap$ and $C\setminus\{L_1,L_2\}$ contains no vertex from $S_2\setminus S_1$. Second the three terms in $M_1$ being independent, each vertex in $S_1$ is matched at most once in $M_1$. Otherwise, a vertex would be matched twice in either $M_\cup$ or $M_\cap$, contradiction. Similarly, prove $M_2$ is a matching for $W(S_2)$, concluding the proof.
\end{proof}

Now we are ready to prove Theorem~\ref{thm:PE_pure}.
\PEpure*
\begin{proof}

The algorithm always terminates because there are finite sellers. To prove it terminates at a pure strategy platform equilibrium, it suffices to show seller $j\in P$ already on the platform don't have incentive to drop off when a new seller $\hat{j}$ is added to $P$. In particular, we want to prove $\forall j\in P$, its benefit of staying on platform %$\phi_j^{P\cup\{\hat{j}\}}$ 
    is larger than $\hat{j}$'s benefit of joining.
    \begin{eqnarray}
        \forall j\in P, \phi_j^{P\cup\{\hat{j}\}}\geq \phi_{\hat{j}}^{P}\geq 0 \text{ for } \hat{j}\in \argmax_{j\in S\setminus P}\{\phi_j^P\};\label{eq:pure_eq_invariant}
    \end{eqnarray}
    Then sellers in $P$ do not drop off from platform when the algorithm terminates. 
    Since $\hat{j}$ is the next seller joining, $\phi_{\hat{j}}^{P}\geq 0$. Lemma~\ref{lem:on_platform_gain} does not apply because $j\in P$. So we expand Eq.~\eqref{eq:pure_eq_invariant} to prove the following is non-negative
    \begin{eqnarray}
        \phi_j^{P\cup\{\hat{j}\}} - \phi_{\hat{j}}^{P} = (1-\alpha)[p_j^{\on}(P\cup\{\hat{j}\})-p_{\hat{j}}^{\on}(P)]+[p_{\hat{j}}^{\off}(P)-p_{j}^{\off}(P\cup\{\hat{j}\})]\label{eq:pure_eq_invariant_2}
    \end{eqnarray}
    The first term $p_j^{\on}(P\cup\{\hat{j}\})-p_{\hat{j}}^{\on}(P)$ in Eq.~\eqref{eq:pure_eq_invariant_2} is zero because according to Corollary~\ref{lem:same_on_price}, all on-platform sellers in a same market have the same price. Now prove $p_{\hat{j}}^{\off}(P) \geq p_{j}^{\off}(P\cup\{\hat{j}\})$. 
    Intuitively this is true because $j$ is selected by the algorithm before $\hat{j}$ and had a weakly lower off-platform price when it was added. For the rest of the proof, we will prove this intuition.
    
    Let $P_{-j}$ be $P\setminus \{j\}$. Expand the two prices $p_{\hat{j}}^{\off}(P)=W(S,B,G(P))-W(S\setminus \{\hat{j}\},B,G(P))$ and $p_{j}^{\off}(P\cup\{\hat{j}\}) = W(S,B,G(P_{-j}\cup \{\hat{j}\}))-W(S\setminus\{j\},B,G(P_{-j}\cup \{\hat{j}\}))$. The two minus terms are equal $$W(S\setminus \{\hat{j}\},B,G(P))=W(S\setminus\{j\},B,G(P_{-j}\cup \{\hat{j}\}))$$ because the two terms have exactly the same off platform sellers and the same number of on-platform sellers. By Lemma~\ref{lem:optimal_welfare}, they have the same welfare. Rearranging, prove the following is non-negative.
    \begin{eqnarray}
        p_{\hat{j}}^{\off}(P) - p_{j}^{\off}(P\cup\{\hat{j}\}) = W(S,B,G(P))-W(S,B,G(P_{-j}\cup \{\hat{j}\}))\label{eq:pure_eq_target_price}
    \end{eqnarray}

    Just as in the proof for Lemma~\ref{lem:on_platform_gain}, view $W(S,B,G(P))$ as adding seller $\hat{j}$ to the market, then $P$ to platform in a base market $(S\setminus P\setminus\{\hat{j}\},B,G)$. Let $i_P(\hat{j})$ be the new buyer introduced by $\hat{j}$, $\bar{v}_{S\setminus P}(|P|)$ be sum of the $|P|$ largest buyers who do not transact in $W(S\setminus P,B,G)$. Similarly, let $i_P(j)$ be the new buyer introduced by adding $j$ to the base market, $\bar{v}_{S\setminus P_{-j}\setminus\{\hat{j}\}}(|P|)$ be sum of the $|P|$ largest buyers who do not transact in $W(S\setminus P_{-j}\setminus\{\hat{j}\},B,G)$. There are more than $|P|$ buyers not transacting in $W(S\setminus P,B,G)$ or $W(S\setminus P_{-j}\setminus\{\hat{j}\},B,G)$, otherwise all buyers transact in $(S,B,G(P))$, contradicting with the algorithm picking $\hat{j}$ next. Assume for contradiction Equation~\eqref{eq:pure_eq_invariant_2} and \ref{eq:pure_eq_target_price} is negative, expanding 
    \begin{eqnarray*}
        \phi_j^{P\cup\{\hat{j}\}} - \phi_{\hat{j}}^{P} = v_{i_P(\hat{j})}+\bar{v}_{S\setminus P}(|P|)-v_{i_P(j)}-\bar{v}_{S\setminus P_{-j}\setminus\{\hat{j}\}}(|P|) < 0
    \end{eqnarray*}
    A necessary condition for it being negative is 
    \begin{eqnarray}\label{eq:pure_eq_target_price_decompose}
        v_{i_P(\hat{j})} & < v_{i_P(j)} 
    \end{eqnarray}
    Now consider right before Algorithm~\ref{alg:PE_pure} adds $j$ to the platform. Denote the sellers joining the platform then by $P^{j}\subset P$. The Algorithm chooses $j$ at Step~\ref{alg:step}. Let $i_{P^j}(j), i_{P^j}(\hat{j})$ be the new transacting buyer when $j$ and $\hat{j}$ is added with their off-platform edges to market $(S\setminus P^j\setminus \{j,\hat{j}\},B,G)$ respectively.
    
    If $\phi_j^{P^j}=\phi_{\hat{j}}^{P^j}$ and $p^{\off}_j(P^j)=p^{\off}_{\hat{j}}(P^j)$. There are two cases.
    
    \textbf{Case 1:} $j$ and $\hat{j}$'s off platform prices are determined by two different buyers of the same valuation on two different opportunity paths. Then $j$ joining the platform won't change $\hat{j}$'s off platform price. So $\hat{j}$ is the next to join. Similarly, $\hat{j}$ joining won't change $j$'s off platform price and equation~\eqref{eq:pure_eq_target_price} equals to zero.
    
    \textbf{Case 2:} $j$ and $\hat{j}$'s off platform prices equal to the the valuation of the smallest buyer on the same opportunity path. This valuation equals to $p_j^{\off}(P^j)$. By Lemma~\ref{lem:optimal_welfare},
    $W(S,B,G(P^j))=W(S\setminus P^j,B^G,G)+\bar{v}(|P^j|)$. Sellers in $P^j$ are matched to largest buyers in $B\setminus B^G$. If the smallest buyer in $j$ and $\hat{j}$'s opportunity path is matched to an on platform seller in $P^j$, the next seller joining the platform will have a price weakly lower than $p_j^{\off}(P^j)$. This contradicts with $j$ joining the platform next. If an on platform seller is on $j$ and $\hat{j}$'s opportunity path but not matched to the smallest buyer on the path, one can remove this on platform seller, shift  transactions along the path and strictly increase $W(S\setminus P^j, B^G,G)$. This contradicts with off-platform sellers being matched optimally. So no seller in $P^j$ is on $j$ and $\hat{j}$'s opportunity path. 
    %
    % $l=(j_0,i_0,j_1,...,j_t,(i_t))$.One of the two sellers $j,\hat{j}$ is at the start of the path ($j_0=\hat{j}$ or $j_0=j$), and the other one can be $j_1, j_2,...,j_t$.
    Now, if the opportunity path ends at a seller who does not transact, then $v_{i_{P^j}(\hat{j})} = v_{i_{P^j}(j)}$ equals to the lowest buyer valuation on the path. If the opportunity path ends on the buyer with the lowest valuation, then $v_{i_{P^j}(\hat{j})} = v_{i_{P^j}(j)}$ equals to the second smallest buyer valuation on the path. This is discussed together with the next case in inequality~\eqref{eq:pure_eq_cond_price_decompose}.

    Otherwise $\phi_j^{P^j}>\phi_{\hat{j}}^{P^j}$ or $p^{\off}_j(P^j)<p^{\off}_{\hat{j}}(P^j)$. Lemma~\ref{lem:on_platform_gain} reads \begin{eqnarray}\label{eq:pure_eq_cond_price_decompose}
        v_{i_{P^j}(j)} \leq v_{i_{P^j}(\hat{j})} 
    \end{eqnarray}
    Let $P^{\text{mid}}=P_{-j}\setminus P^j$. The two base markets $S\setminus P\setminus \{\hat{j}\}$ and $S\setminus P^j\setminus\{j,\hat{j}\}$ differ by $P^{\text{mid}}$. When $P^{\text{mid}}=\emptyset$, the two base markets are the same, $v_{i_P(\hat{j})}=v_{i_{P^j}(\hat{j})}, v_{i_P(j)}=v_{i_{P^j}(j)}$. Inequality~\eqref{eq:pure_eq_target_price_decompose} directly contradicts with Inequality~\eqref{eq:pure_eq_cond_price_decompose}. So the difference in on-platform gain in equation~\eqref{eq:pure_eq_invariant_2} cannot be negative.
    
    We now analyze when $P^{\text{mid}}\neq \emptyset$. The intuition is according to inequality~\eqref{eq:pure_eq_target_price_decompose}, $i_P(j)$ as a buyer of high value is matched to $j$ in $(S\setminus P\setminus\{\hat{j}\},B,G)$. However, it is matched to another seller $j'\in P^{\text{mid}}$ in market $(S\setminus P^j\setminus\{j,\hat{j}\},B,G)$ because of inequality~\eqref{eq:pure_eq_invariant_2}. Then when adding $j'$ on platform, the algorithm should have picked $\hat{j}$ instead because $\hat{j}$ introduces a low valuation buyer. We now formalize this intuition.
    
    Comparing the two base markets, $S\setminus P\setminus \{\hat{j}\} \subset S\setminus P^j\setminus\{j,\hat{j}\}$, applying the submodularity condition in Lemma~\ref{lem:submodularity}
    \begin{eqnarray}
        v_{i_P(\hat{j})} & \geq v_{i_{P^j}(\hat{j})} \label{eq:apply_submodularity_to_target}\\
        v_{i_P(j)} & > v_{i_{P^j}(j)}\label{eq:apply_strict_submodularity_to_condition}
    \end{eqnarray}
    Inequality~\eqref{eq:apply_strict_submodularity_to_condition} is strict because otherwise combining inequality~\eqref{eq:pure_eq_target_price_decompose} and \eqref{eq:pure_eq_cond_price_decompose}
    we have $v_{i_{P^j}(\hat{j})} \geq v_{i_{P^j}(j)} = v_{i_P(j)} > v_{i_P(\hat{j})}$, which contradicts with \eqref{eq:apply_submodularity_to_target}. By definition, $j$ introduces $i_{P}(j)$ in the base market $(S\setminus P\setminus\{\hat{j}\},B,G)$. Thus by Corollary~\ref{cor:adding_sellers_buyers_set}, $i_P(j)$ still transacts in the second base market containing more sellers $(S\setminus P^j\setminus\{\hat{j}\},B,G)$. But the strict inequality ~\eqref{eq:apply_strict_submodularity_to_condition} says in this market, $j$ does not introduce $i_p(j)$ anymore. So $i_P(j)$ still transacts in the market $(S\setminus P^j\setminus\{j\},B,G)$. Note by definition $i_P(j)$ does not transact in market $(S\setminus P,B,G)$.
    
    Now examine when the algorithm adds seller $j$ to platform, and before it adds seller $\hat{j}$ to platform. The former market is $(S,B,G(P^j\cup \{j\}))$, where sellers in $S\setminus P^j\setminus\{j\}$ is off-platform. The latter market is $(S,B,G(P))$, where $S\setminus P$ is off-platform. By Corollary~\ref{cor:adding_sellers_buyers_set} and our discussion in the previous paragraph, $i_P(j)$ transacts with off-platform sellers in the former market but not in the latter. Then there $\exists j'\in P\setminus P^j$ such that it joining the platform results in $i_P(j)$ not transacting with off-platform sellers. Equivalently, adding $j'$ to off-platform sellers will introduce $i_P(j)$.
    
    Consider when the algorithm adds $j'$ to platform. By Lemma~\ref{lem:add_one_link}, $i_P(j)$ is the lowest-value buyer on $j'$'s opportunity path so $p^{\off}(j')=v_{i_P(j)}$. 
    Again because of submodularity, seller $\hat{j}$ introduces a buyer with valuation smaller than $v_{i_P(\hat{j})}$ so its off-platform price is $p^{\off}_{\hat{j}}\leq v_{i_P(\hat{j})}<p^{\off}_{j'}$. By Lemma~\ref{lem:algo_select} the algorithm should not have added seller $j'$, which forms a contradiction.
\end{proof}

\vspace{0.5cm}
The algorithm can terminate any time when $\forall j\notin P, \phi_j^P\leq 0$. This gives flexibility to find multiple pure equilibrium, particularly of interest is when $\alpha=1$. 
\PEpureAlpha*
\begin{proof}
    At $\alpha=1$ sellers in $S^G$ weakly prefer not to join. After adding a first $j_1\in \bar{S}^G$ seller to $P$, other sellers in $\bar{S}^G\setminus \{j_1\}$ still does not transact because Lemma~\ref{lem:add_one_link} reads adding a new link involves no new sellers. Thus, every seller in $\bar{S}^G$ is indifferent to join at $\alpha=1$.
\end{proof}

This following lemma allows the platform to continuously decrease $\alpha$ with algorithm~\ref{alg:PE_pure}, and gradually increase the set of on-platform sellers in equilibrium to calculate revenue.
\PEpureContinuous*
\begin{proof}
At $\alpha_2$, Step~\ref{alg:step} can first add all sellers in $P_1$ before any other sellers: a lower $\alpha$ does not change the seller with largest on-platform gain. 
\begin{eqnarray*}
    (1-\alpha_1)p_j^\on(P)-p_j^\off(P)\geq (1-\alpha_1)p_{j'}^\on(P)-p_{j'}^\off(P)\\
    \Rightarrow (1-\alpha_2)p_j^\on(P)-p_j^\off(P)\geq (1-\alpha_2)p_{j'}^\on(P)-p_{j'}^\off(P)
\end{eqnarray*}
\end{proof}

\EnlargeEq*
\begin{proof}
    Lemma~\ref{lem:pe_pure_alpha1} proves a pure platform equilibrium always exists for $m_p=1,...,\bar{S}^G$ at $\alpha=1$. At $P=\bar{S}^G$, for all sellers $j\notin \bar{S}^G, \phi_j^{\bar{S}^G}=-p_j^{\off}(\bar{S}^G)\leq 0$. As $\alpha$ decreases continuously from $1$ to $0$, $\phi_j^P$ continuously increases for all $j$ as long as the equilibrium set $P$ stays the same. When a first set of sellers $\Phi_{\max}^P$ have zero on-platform gain, Algorithm~\ref{alg:PE_pure} adds a seller $j\in \Phi_{\max}^P$ to platform. Then it suffices to show after $j$ is added to $P$, $\forall \hat{j}\notin P$, either
    \begin{eqnarray}
    \phi_{\hat{j}}(P\cup \{j\}) & \leq & \phi_{\hat{j}}(P) \nonumber\\
    \Leftrightarrow	 (1-\alpha)[p_{\hat{j}}^{\on}(P\cup\{j\})-p_{\hat{j}}^{\on}(P)] &\leq 
    &  p_{\hat{j}}^{\off}(P\cup\{j\})-p_{\hat{j}}^{\off}(P)\label{eq:equilibrium_set_enlarges_1}
    \end{eqnarray}
    or $\phi_{\hat{j}}(P\cup\{j\})<0$.
    Then for $\hat{j}\notin P\cup\{j\}$, on platform gain is non-positive. So $P\cup \{j\}$ is indeed a pure equilibrium of size one larger than $P$. And the algorithm can further add sellers whose on platform gain is zero, and after that continue to decrease $\alpha$.

    We first look at the left hand side of equation~\eqref{eq:equilibrium_set_enlarges_1}, the change in on platform prices. By Lemma~\ref{lem:same_on_price}, $p_{\hat{j}}^{\on}(P\cup\{j\})$ equals to the value of the $(|P|+2)$-st largest buyer who does not transact in market $(S\setminus P\setminus \{\hat{j},j\},B,G)$; $p_{\hat{j}}^{\on}(P)$ equals to the value of the $(|P|+1)$-st largest buyer who does not transact in market $(S\setminus P\setminus \{\hat{j}\},B,G)$. There is possibly one more none-transacting buyer in the former market than the latter, so $p_{\hat{j}}^{\on}(P\cup\{j\}) \leq p_{\hat{j}}^{\on}(P)$.

    Now we examine the right hand side of equation~\eqref{eq:equilibrium_set_enlarges_1}, the change in $\hat{j}$'s off platform price after $j$ joins.
    If $p_{\hat{j}}^{\off}(P\cup\{j\}) \geq p_{\hat{j}}^{\off}(P)$, naturally equation~\eqref{eq:equilibrium_set_enlarges_1} holds. We first find and rule out the sellers with weakly larger off platform prices. We then prove sellers $\hat{j}$ with smaller off platform price will be matched to the same buyer even joining the platform, thus $\phi_{\hat{j}}(P\cup\{j\})<0$.
    %We will analyze it through changes in the opportunity path after $j$ joins platform.
    
    View seller $j$ joining the platform as first leaving the market $(S,B,G(P))$, and then transacting with the largest remaining buyer. By Lemma~\ref{lem:add_one_link} at most one buyer $i^{G(P)}(j)$ is unmatched because of $j$ leaving, and it is connected through opportunity path $$l_{G(P)}=(j_0=j,i_0,j_1,i_1,...,j_t,i_t=i^{G(P)}(j))$$ in $(S,B,G(P))$ where $a_{j_q,i_q}=1 \text{ for } q\in\{0,...,t\}$.  Furthermore, $i^{G(P)}$ is the smallest buyer on the opportunity path $p_j^{\off}(P)=v_{i^{G(P)}(j)}$.
    When $j$ joins the platform, it is connected to all buyers. Again by Corollary~\ref{lem:optimal_welfare}, it sells to the largest none-transacting buyer $i^{G(P\cup\{j\})}(j)$. By Lemma~\ref{lem:same_on_price}, $p_j^{\on}(P)=v_{i^{G(P\cup\{j\})}(j)}$. This in turn means $i^{G(P\cup\{j\})}(j)$ is the smallest value buyer in any opportunity path that extends from $i^{G(P\cup\{j\})}(j)$. We also have $v_{i^{G(P\cup\{j\})}(j)}>v_{i^{G(P)}(j)}$ otherwise seller $j$ would not have joined the platform.
    % changes in oppo path analyze.
    When $j$ joins the platform, opportunity path $l_{G(P)}$ changes and reverses into $$l_{G(P\cup\{j\})}=(j_t,i_{t-1},...,i_1,j_1,i_0,j_0=j,i^{G(P\cup\{j\})}(j))$$
    where $a_{j_0,i^{G(P\cup\{j\})}(j))}=1$ and $a_{j_q,i_{q-1}}=1$ for $q\in\{t,t-1,...,1\}$.
    Since price equals the lowest buyer valuation in opportunity path, sellers $j_1,j_2,...,j_t$'s off platform price in $(S,B,G(P))$ is $v_{i^{G(P)}(j)}$ and that in $(S,B,G(P\cup\{j\}))$ is weakly larger because $v_{i^{G(P\cup\{j\})}(j)}>v_{i^{G(P)}(j)}$, and $i^{G(P)}(j)$ is the smallest buyer in $l_{G(P)}$. So we have proved for $\hat{j}\in \{j_1,...,j_t\}$, off platform price weakly increases.

    We now show if off platform prices decreases for $\hat{j}\notin \{j_1,...,j_t\}$, they will be matched to the same buyer even joining the platform. Seller $\hat{j}$ transacts in $(S,B,G(P\cup\{j\}))$ with the same buyer as $(S,B,G(P))$ off-platform. Denote the buyer as $i(\hat{j})$. There are two possible changes to opportunity paths that begins at $i(\hat{j})$. First, if $i(\hat{j})$ is connected to $l_{G(P)}$ by an opportunity path, then this path will instead connect to $l_{G(P\cup\{j\})}$. The minimum buyer valuation on this opportunity path increases. Second, now that $j$ joins the platform, $i(\hat{j})$ knows $j$ and have an opportunity path to $i^{G(P\cup\{j\})}(j)$. If $i(\hat{j})$'s off platform price decreases, the smallest buyer connected through opportunity path must be $i^{G(P\cup\{j\})}(j)$, meaning  $$v_{i(\hat{j})}>p_{\hat{j}}^{\off}(P\cup\{j\})= v_{i^{G(P\cup\{j\})}(j)}$$ By Lemma~\ref{lem:same_on_price}, if $\hat{j}$ joins the platform, it is matched to the largest none-transacting buyer, which is of weakly smaller value than $i^{G(P\cup\{j\})}(j)$, and smaller than $i(\hat{j})$. So optimal matching requires $\hat{j}$ be matched to $i(\hat{j})$ even on platform. Then $\phi_{\hat{j}}(P\cup\{j\})<0$ and $\hat{j}$ doesn't want to join the platform after $j$ joins.
\end{proof}

\section{$\min(m,n)$ Lower Bound for the Price of Anarchy in General Markets} \label{app:poa_mn_general}

In this section, we show that without regulation, in general unit-demand markets, the price of anarchy can be as bad as $\min\{m,n\}$. This is in stark contrast to the positive results we have in Section~\ref{sec:poa-regulated} for the case the platform is regulated in the fees it can set.

We show this through an example market in Figure~\ref{fig:poa_mn_general}. There are two main reasons for large PoA in this general valuation market. First, the unique market structure allows for only two pure equilibria. The platform either posts a low transaction fee where all sellers join, or a high fee where only one joins. Second, even though the equilibrium where all join has high social welfare, each seller has low price. Mixed equilibrium requires too low a transaction fee. The platform thus selects the pure equilibrium where only one seller joins.   

\begin{figure} 
    \centering
    \begin{tikzpicture}[scale=1.5]
        % Square vertices
        \foreach \i/\label in {1/$b_1$, 3/$b_2$, 7/$b_{n-1}$, 9/$b_n$}
            \node[draw, shape=rectangle, minimum size=0.6cm] (\label) at (\i, 2) {\label};
        
        % Circle vertices
        \foreach \i/\label in {1/$s_1$, 3/$s_2$, 7/$s_{n-1}$, 9/$s_n$}
            \node[draw, shape=circle, minimum size=0.6cm] (\label) at (\i, 0) {\label};
            
        \foreach \i\j in {$b_1$/$s_1$,$b_2$/$s_2$,$b_{n-1}$/$s_{n-1}$}
            \draw[line width=1.2pt] (\i) -- (\j);

        % Ellipsis
        \node at (5, 2) {$\ldots$};
        \node at (5, 0) {$\ldots$};        

        % Dashed blue edges with weights
        \foreach \x/\y/\w/\pos/\loc in {$b_1$/$s_n$/1/at start/below right,  $b_2$/$s_n$/$x$/at start/below right, $b_{n-1}$/$s_n$/$x$/near start/below left, $b_n$/$s_n$/$x$/near start/below right}
            \draw[line width=1.1pt, blue, dashed] (\x) -- node[\pos, \loc, font=\footnotesize] {\w} (\y);

        \draw[line width=1.1pt, blue, dashed] (8.8,0.2) -- node[near end,above,font=\footnotesize] {$x$} (5.2,1.8);

        \foreach \x/\y/\w/\pos/\loc in {$s_1$/$b_2$/$\frac{n}{n-1}x$/near start/below right, $s_{n-1}$/$b_n$/$\frac{n}{n-1}x$/at start/below right}
            \draw[line width=1.1pt, blue, dashed] (\x) -- node[\pos, \loc, font=\footnotesize] {\w} (\y);

        \draw[line width=1.1pt, blue, dashed] (3.2,0.2) -- node[near start,below right,font=\footnotesize] {$\frac{n}{n-1}x$} (4.8,1.8);

        % captions
        \node[left] at (0.5, 2)  {Buyers};
        \node[left] at (0.5, 0)  {Sellers};
    \end{tikzpicture}
    \caption{An $n$-buyer-$n$-seller general valuation market where $PoA=n$. Black solid lines are direct links, as captured by $N(i)$ for buyer $i$, and blue dotted lines indicate missing links. Buyer values are annotated adjacent to each edge. $x$ is a large constant. For the first seller $v_{11}=\frac{nx}{(n-1)(x+n)}+\frac{n}{x+n}, v_{21}=\frac{nx}{n-1}$. For sellers $j=2,...,n-1$, $v_{j+1,j}=\frac{n}{n-1}x, v_{jj}=\frac{n^2x}{(n-1)(x+n)}+(i-1)\epsilon$. For the last seller $v_{1n}=1,v_{jn}=x$ for $j=2,3,...,n$   All other values are zero.}
    \label{fig:poa_mn_general}
\end{figure}
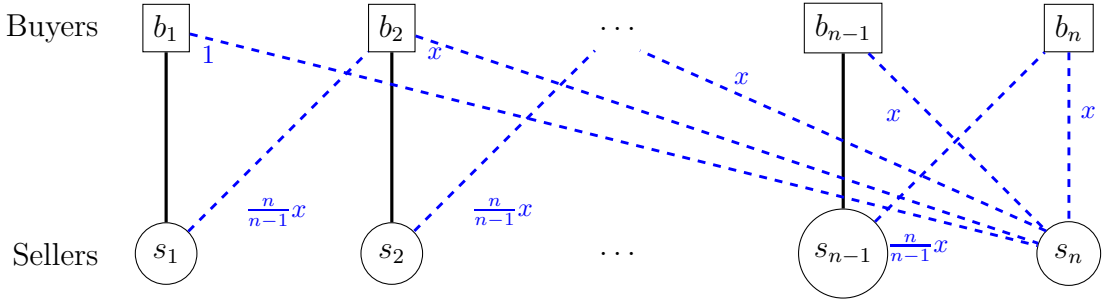 

\begin{restatable}{proposition}{propPoAMNGeneral} \label{prop:poa_mn_general}
    There exists a $n$-buyer-$m$-seller  market for which the price of anarchy is $\min(\{n,m\})$.
\end{restatable}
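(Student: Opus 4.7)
The plan is to analyze the explicit $n$-buyer, $n$-seller market depicted in Figure~\ref{fig:poa_mn_general} and show that its revenue-maximizing transaction fee $\alpha^\star=1$ induces a Platform Equilibrium in which only the seller $s_n$ (the one without off-platform links) joins, while the welfare-optimal allocation requires all $n$ sellers to be on the platform. The resulting ratio will tend to $n$ as the parameter $x\to\infty$, and padding with dummies will give the general $\min\{n,m\}$ bound.

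First I will compute the two welfare benchmarks. The value structure in Figure~\ref{fig:poa_mn_general} makes the cyclic matching $s_j\mapsto b_{j+1}$ for $j<n$ (each worth $\frac{nx}{n-1}$), combined with $s_n\mapsto b_1$ worth $1$, the unique welfare optimum, yielding $W^\star = nx+1$. Under the profile $P=\{s_n\}$ the off-platform sellers can only use their unique direct edges $s_j\mapsto b_j$, and $s_n$ is matched to the only remaining buyer $b_n$ at value $x$; a short case comparison shows this dominates any matching that would displace a direct trade for large $x$, giving $W(\{s_n\}) = x + n(n-1) + o(1)$ and $W^\star/W(\{s_n\}) \to n$.

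Next I will verify that $\{s_n\}$ is a Platform Equilibrium at $\alpha^\star = 1$ and that this $\alpha$ is revenue-optimal. Applying the max-price characterization~\eqref{eq:max_price} to $G(\{s_n\})$, every $s_j$ with $j<n$ has a strictly positive off-platform price determined by its direct match with $b_j$, so at $\alpha=1$ each such $s_j$ strictly prefers to stay off; seller $s_n$ is indifferent (both utilities are $0$) and ``join'' is an admissible best response, producing platform revenue $\alpha^\star\,p_{s_n}^{\on}(\{s_n\}) = x$. This strictly beats the only other equilibrium at $\alpha=1$ (the empty profile, with revenue $0$), so $\{s_n\}\in X^\star$. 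The main obstacle is to show $\alpha^\star=1$ is actually revenue-optimal. I will argue this by case analysis over the support of any equilibrium at $\alpha<1$: the all-join profile requires $\alpha\le 1-\frac{n}{x+n-1}$ (from the incentive constraint of $s_1$), yielding revenue at most $\frac{(x-1)(x+n)}{x+n-1}<x$; the $\{s_n\}$-only profile at $\alpha<1$ yields $\alpha x<x$; and every intermediate pure support is ruled out because, using~\eqref{eq:max_price}, any seller $s_j$ with $j<n$ and $s_n$ on the platform has off-platform price exactly $0$, so it strictly wants to join, leading to best-response cycling. For mixed equilibria, a matching bound follows by writing the expected revenue as $\alpha$ times the expected sum of on-platform prices and observing that for any seller $s_j$ ($j<n$) joining with positive probability, the indifference condition yields the same $\frac{n}{x+n-1}$-type cap on $1-\alpha$ that already bounds the all-join revenue below $x$.

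Finally, to obtain $\min\{n,m\}$ for asymmetric markets: if $m>n$, append $m-n$ dummy sellers with no off-platform edges and zero value to every buyer; they are indifferent at $\alpha^\star=1$, contribute nothing to revenue or welfare, and do not alter the equilibrium analysis. If $n>m$, append $n-m$ dummy buyers with zero valuations. In both cases the same argument applies and the PoA tends to $\min\{n,m\}$ as $x\to\infty$.
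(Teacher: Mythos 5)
Your proposal follows the paper's proof of Proposition~\ref{prop:poa_mn_general} quite closely: same instance (Figure~\ref{fig:poa_mn_general}), same benchmarks $W^\star=nx+1$ versus $W(\{s_n\})=x+\sum_{j<n}v_{jj}$, same identification of $\alpha^\star=1$ with only $s_n$ joining as the revenue-optimal outcome, and the same padding trick for asymmetric $n,m$. However, there is a genuine gap in the step that rules out intermediate pure equilibria. You assert that ``any seller $s_j$ with $j<n$ and $s_n$ on the platform has off-platform price exactly $0$, so it strictly wants to join.'' This is false, and it directly contradicts your own (correct) earlier observation that when only $s_n$ is on the platform, every $s_j$ with $j<n$ retains a strictly positive off-platform price from its direct edge to $b_j$ --- which is precisely why $\{s_n\}$ is an equilibrium at $\alpha=1$. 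The actual mechanism is a cascade along the chain: when $s_j$ joins, the high-value missing edge $(b_{j+1},s_j)$ of weight $\tfrac{n}{n-1}x$ becomes available and dominates $v_{jj}+v_{j+1,j+1}$, so the max-weight matching reassigns $b_{j+1}$ to $s_j$ and it is $s_{j+1}$ (not an arbitrary seller) whose off-platform price collapses to $0$, forcing $s_{j+1}$ on. Combined with the ordering of the joining thresholds $\alpha_1\ge\alpha_2\ge\dots\ge\alpha_{n-1}$ (engineered via the $(i-1)\epsilon$ terms in $v_{ii}$), any equilibrium containing some $s_j$ with $j<n$ must contain all of $s_j,\dots,s_n$, and then $s_{j-1}$ strictly prefers to join, so the only candidates are $\{s_n\}$ and all-join. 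Note also that ``best-response cycling'' is not by itself a reason a profile fails to be an equilibrium; you need to exhibit a profitable unilateral deviation, which the cascade argument does.

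Two smaller points. First, your mixed-equilibrium sketch is not yet an argument: a cap of the form $1-\alpha\le\tfrac{n}{x+n-1}$ forces $\alpha$ \emph{close to} $1$, which does not by itself bound revenue from above; you must additionally bound the expected sum of on-platform prices (or, as the paper does, use the indifference condition of a mixing seller $s_j$, $j<n$, whose off-platform price is a bounded constant while its on-platform price is $\Omega(x/n)$, to constrain the fee and hence the revenue). Second, the comparison between the all-join revenue and $x$ is a knife-edge computation --- both quantities equal $x$ up to $O(\epsilon)$ terms, and the direction of the inequality depends on the sign of the $\epsilon$-perturbation of $v_{11}$ --- so the claim that revenue $x$ ``strictly beats'' the all-join equilibrium needs the exact constants checked rather than asserted.
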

\begin{proof}
Figure~\ref{fig:poa_mn_general} gives a $n$-seller-$n$-buyer market where PoA approaches $n$. Black solid lines are direct links, as captured by $N(i)$ for buyer $i$, and blue dotted lines indicate missing links. Seller $n$ has no off-platform links and always join the platform, and seller $i=1,2,...,n-1$ has a link to buyer $i$. Let $x$ be a large constant.  $V_{1n}=1, V_{11}=\frac{\frac{n}{n-1}x+n}{x+n}-\epsilon, V_{in}=x, V_{ii}=\frac{n^2x}{(x+n)(n-1)}+(i-1)\epsilon, V_{i,i-1}=\frac{n}{n-1}x ,\forall i\in [2,n]$. $\epsilon$ is a very small quantity to break ties. All other valuations are zero.
    
The optimal social welfare when all sellers join platform is $W^{\star}=nx+1$. We will show platform's revenue optimal transaction fee is $\alpha^{\star}=1$ and only seller $n$ joins the platform. Price of anarchy is then $\frac{nx+1}{x+\sum_{i=1}^{n-1}v_{ii}}\rightarrow n$ as $x\rightarrow \infty$. There are two unique properties to this market:
    
\begin{enumerate}
    \item whenever seller $j$ joins, seller $j+1$ joins because its off-platform edge is taken.
    \begin{eqnarray*}
        V_{j+1,j}=\frac{n}{n-1}x\geq V_{jj}+V_{j+1,j+1}, & \forall j=1,2,\ldots n-1
    \end{eqnarray*}
    This reads if $j+1$ does not join the platform, max weight matching matches seller $j$ to buyer $i=j+1$, and seller $j+1$ won't sell. 
    \item whenever all sellers $j'\geq j+1$ joins, seller $j$ joins. To see this, when only seller $n$ is on platform, seller $n-1$ joins when transaction fee $\alpha_{n-1}$ satisfies 
    \begin{eqnarray*}
        \alpha \leq \alpha_{n-1} = 1- \frac{p_{n-1}^{\off}(P=\{n\})}{p_{n-1}^{\on}(P=\{n\})} = 1-V_{n-1,n-1}/\frac{n}{n-1}x
    \end{eqnarray*}
    Now seeing seller $n-1$ joins, seller $n-2$ reasons about threshold of joining $\alpha_{n-2}$
    \begin{eqnarray*}
        \alpha \leq \alpha_{n-2} = 1- \frac{p_{n-2}^{\off}(P=\{n-1,n\})}{p_{n-1}^{\on}(P=\{n-1,n\})} = 1- v_{n-2,n-2}/\frac{n}{n-1}x
    \end{eqnarray*}
    So when $V_{n-2,n-2}< V_{n-1,n-1}$ seller $n-2$ joins whenever $n-1$ joins. This continues with $V_{2,2}\leq V_{3,3}\leq ...\leq V_{n-1}$ until the first seller. When seller $2,3,...,n$ joins, the first seller joins when
    \begin{eqnarray*}
        \alpha \leq \alpha_1 = 1-\frac{p_1^{\off}(P=S\setminus \{1\})}{p_1^\on(P=S\setminus \{1\})} = 1- v_{11}/(\frac{x}{n-1}+1) 
    \end{eqnarray*}
    The claim is true if $\alpha_1\geq \alpha_2 \geq ...\geq \alpha_{n-1}$. Verifying the buyers' valuations $v_{jj}$ for it $j=1,2,...,n-1$ we see \begin{eqnarray}\label{eqn:general_valuation_n_buyer_n_seller}
    V_{11}/(\frac{x}{n-1}+1) < V_{22}/\frac{n}{n-1}x <\ldots < V_{n-1,n-1}/\frac{n}{n-1}x    
    \end{eqnarray}
    This guarantees whenever $j+1$ joins, $j$ joins.
\end{enumerate}

Because of these two properties, only two pure equilibria exists. The platform can either post $\alpha^{\star}=1$ and charge $x$ from seller $n$, or posting $\alpha_1$ and have all sellers join. In the latter case, $p_j^\on=\frac{x}{n-1}+1, \forall j=1,2,...,n-1$
\begin{eqnarray*}
    Rev(\alpha_1) = \alpha_1[p_n^\on(S)+\sum_{j=1}^{n-1}p_j^\on(S)] = [1- v_{11}/(\frac{x}{n-1}+1)](x+n)
\end{eqnarray*}
We can verify $Rev(\alpha =1)=x> Rev(\alpha_1)$. This proves among all pure equilibrium, only seller $n$ joining is the revenue optimal one. We now prove no mixed equilibrium have larger revenue than $x$. If seller $j=1$ mixes, her revenue from not joining equals to that of joining. Off platform price is $v_{11}$. On platform price is at least as large as $\frac{n}{n-1}x-x=\frac{x}{n-1}$ because $x$ is the max externality seller $1$ imposes on others when transacting to buyer 2. This requires
\begin{eqnarray*}
    v_{11} = \alpha p_1^{\on} \geq \alpha\frac{x}{n-1}  & \text{or } & \alpha \leq (n-1)v_{11}/x
\end{eqnarray*}
Then the platform's revenue is no larger than $\alpha n x = n(n-1)v_{11} << x$ when $x\rightarrow \infty$. The same argument is true when seller 2 to n-1 mixes: referral fee $\alpha$ is too small for platform to make any revenue. 

The above verifies the $n$ PoA for a market with $n=m$ sellers and buyers. For a market $n>m$, we can have $m$ buyers with the above valuations and the $n-m$ buyers valuing no sellers, obtaining a $m=\min\{m,n\}$ lower bound. For a market $m>n$, we can have $n$ sellers with the above valuations and the $m-n$ sellers being valued by no buyers, obtaining a $n=\min\{m,n\}$ lower bound.

\end{proof}

\section{Proof of Theorem~\ref{thm:mixed_poa}}\label{app:mixed_proof}

Section~\ref{thm:mixed_poa} explains the intuition for the poa for mixed equilibrium. Here we give a complete and formal proof.  
\mixedPoA*

\begin{proof}
Consider a mixed Platform Equilibrium $\x = (x_1,\ldots, x_m)$ for a buyer-seller network $(S,B,G)$, where $x_j$ is the probability the seller $j$ joins the platform. We define the following Bayesian game:
    \begin{itemize}
        \item For each seller $j$, with probability $x_j$, $j$ can transact with all buyers (``type 1", $t_j=t_1$), and with probability $1-x_j$, it can only transact with the buyers linked to $j$ in $G$ (``type 2", $t_j=t_2$).
        \item The platform posts a transaction-fee $\alpha$.
        \item Before knowing the realization of its type, seller $j$ chooses a pure strategy $a_j\in\{\on,\off\}$ for joining the platform and being able to transact with all buyers, or staying off platform. Let $\mathbf{a}$ denote the joint strategy and $\mathbf{a}_{-j}$ the joint strategy except $j$'s.
        \item Given the realized graph $G'$, a competitive equilibrium is formed. Market clears according to the maximum competitive prices $p$.
        \item A seller has off platform utility $u'_j(\off,\mathbf{a}_{-j};\mathbf{t})=E_{G'\sim \mathbf{a},\mathbf{t}}[p_j]$ and on platform utility $u'_j(\on,\mathbf{a}_{-j};\mathbf{t})=(1-\alpha)E_{G'\sim \mathbf{a},\mathbf{t}}[p_j]$. A seller joining the platform pays an $\alpha$-fraction of their revenue to the platform.
    \end{itemize}

    We now show no seller joining $\mathbf{a}^{\off}=\{\off,\dots,\off\}$ is a pure-strategy Bayes-Nash equilibrium: $$\forall j, E_{\mathbf{t}\sim \x}[u'_j(\off,\mathbf{a}^{\off}_{-j};\mathbf{t})]\geq E_{\mathbf{t}\sim \x}[u'_j(\on,\mathbf{a}^{\off}_{-j};\mathbf{t})]$$ For this, consider a seller j that adopts probability $x_j$ in the mixed Nash Platform Equilibrium of the original, complete information platform game. There are three cases to consider
   
    \noindent\textbf{Case 1:} $x_j=0$ In this case, $j$'s expected utility from not joining the platform \textit{in the original, complete information game} is at least as much as $j$'s expected utility from joining given $\x_{-j}$. \textit{In the Bayesian game}, given that no other seller joins the platform, $j$'s utility from either joining or not joining the platform is exactly $j$'s utility for joining or not joining in the complete information game,
as other links are formed according to $\x_{-j}$.

    \noindent\textbf{Case 2:} $x_j=1$: In this case, in the Bayesian game, $j$ is linked to all buyers with probability $1$. Thus,  joining the platform in the Bayesian game does not increase $j$'s utility.

    \noindent\textbf{Case 3:} $x_j\in (0,1)$: In this case, in the \textit{complete information game}, $j$'s expected utility is the same for joining and not joining given $\x_{-j}$; otherwise, $j$ would deviate and $\x$ would not be an equilibrium. Let $u_j^\x$ denote this quantity. \textit{In the Bayesian game}, first consider $j$'s expected utility where it does not join
    \begin{eqnarray*}
        E_{\mathbf{t}\sim \x}[u'_j(\off,a^{\off}_{-j};\mathbf{t})] &=& (1-x_j)E_{\mathbf{t}_{-j}\sim\x_{-j}}[u'_j(\off,a^{\off}_{-j};t_2,\mathbf{t}_{-j})] + x_j E_{\mathbf{t}_{-j}\sim\x_{-j}}[u'_j(\off,a^{\off}_{-j};t_1,\mathbf{t}_{-j})]\\
        & \geq & (1-x_j)u^{\x}_j + x_j u^{\x}_j = u^{\x}_j
    \end{eqnarray*}
    The first term on the right reads with probability $1-x_j$, $t_j=t_2$ and $j$'s links stay as they were in $G$. Then $j$'s expected utility is exactly $u_j^\x$, as this is $j$'s utility from not joining the platform in the original game when all other sellers' links are formed according to $\x_{-j}$. 
    The second term on the right reads with probability $x_j$, $t_j=t_1$ and all $j$'s links are formed. Then $j$'s expected utility is at least as much as when it joins in the original game given all other sellers' links are formed according to $\x_{-j}$. This is because link formation distribution is the same, while the seller does not have to pay the $\alpha$ fee to the platform in the Bayesian game. Therefore, $j$'s expected utility from not joining is at least $u_j^\x$. 

    Then consider $j$'s expected utility if it joins the platform in the Bayesian game.
$$E_{\mathbf{t}\sim \x}[u'_j(\on,a^{\off}_{-j};\mathbf{t})] = u^{\x}_j$$ It exactly equals to the utility of $j$ joining the platform in the original setup, as the link formation distribution is the same as if $j$ joins in the original setup, and the fee $j$ pays is the same. So again $j$'s expected utility for not joining in the Bayesian game is at least as that for joining.

    In all above cases, each sellers' expected utility of not joining is at least as much as joining when no others join. Thus, $\mathbf{a}^{\off}$ is indeed a pure strategy Bayes-Nash equilibrium. The expected welfare of this equilibrium is exactly the same as the mixed Platform Equilibrium $\x$ in the original complete information game, as we have the same distribution over the formation of links, and the competitive equilibrium formed always maximizes the welfare given the links.
    
    For the last step of the proof, show if no sellers join is an equilibrium in the Bayesian game, then PoA with respect to pure strategies is at most $\frac{2-\alpha}{1-\alpha}$. The proof follows the same arguments of the proof of this case in Theorem~\ref{thm:pure_poa}, but the quantities we reason about for seller $j$ are in expectation over type probability $\x_{-j}$, which is no longer fixed to the mixed equilibrium. We prove $$\frac{W^\star}{E_{G'\sim \mathbf{a}^{\off},\mathbf{t}}[W(S,B,G')]} \leq \frac{2-\alpha}{1-\alpha}$$ where $G'$ is the realized network graph where no seller joins the market in the Bayesiain game. 

    Consider a single seller $j$ joining the platform, denote $G(j)$ as the realized graph where only $j$ joins. The on and off platform utility of seller $j$ can be expressed as 
    \begin{eqnarray*}
        E_{\mathbf{t}\sim\x}[u'_j(\on,\mathbf{a}^{\off}_{-j};\mathbf{t})] & = & (1-\alpha)E_{\mathbf{t}\sim\x}E_{G(j)\sim \mathbf{a}^{\off}_{-j},\mathbf{t}} [p_j] \\
        & = & (1-\alpha) E_{G(j)\sim a^{\off}_{-j},\x}[W(S,B,G(j))-W(S\setminus\{j\},B,G(j))]\\
        E_{\mathbf{t}\sim\x}[u'_j(\off,\mathbf{a}^{\off}_{-j};\mathbf{t})] & = & E_{\mathbf{t}\sim\x}E_{G'\sim a^{\off},\mathbf{t}} [\hat{p}_j] \\
        & = & E_{G'\sim \mathbf{a}^{\off},\x}[W(S,B,G')-W(S\setminus\{j\},B,G')]
    \end{eqnarray*}
    As $j$ does not join the platform, the off platform utility is larger than on utility. Rearranging
    \begin{eqnarray}
        E_{G(j)\sim \mathbf{a}^{\off}_{-j},\x}[W(S,B,G(j))] & \leq & E_{G(j)\sim \mathbf{a}^{\off}_{-j},\x}[W(S\setminus\{j\},B,G(j))]+\frac{1}{1-\alpha}E_{G'\sim \mathbf{a}^{\off},\x}[W(S,B,G')]\nonumber\\
        & - & \frac{1}{1-\alpha}E_{G'\sim \mathbf{a}^{\off},\x}[W(S\setminus\{j\},B,G')]\nonumber\\
        & = & \frac{1}{1-\alpha}E_{G'\sim \mathbf{a}^{\off},\x}[W(S,B,G')] -  \frac{\alpha}{1-\alpha}E_{G'\sim \mathbf{a}^{\off},\x}[W(S\setminus\{j\},B,G')] \label{eq:bound1_mixed}
    \end{eqnarray}

    Let $i^\star(j)$ be the buyer matched to $j$ in the ideal matching $W^\star$. Since $j$ can be matched to $i^\star(j)$ in $G(j)$,
    \begin{eqnarray}
        W(S,B, G(j)) & \geq& v_{i^\star(j)j} + W(S\setminus\{ j\}, B\setminus\{i^\star(j)\}, G(j)) \label{eq:bound2_mixed}
    \end{eqnarray}
    But since $G(j)$ and $G'$ only differ by seller $j$ always joining the platform 
    \begin{eqnarray}
        E_{G(j)\sim \mathbf{a}^{\off}_{-j},\x}[W(S\setminus\{j\},B\setminus\{i^\star(j)\},G(j))]=E_{G'\sim \mathbf{a}^{\off},\x}[W(S\setminus\{j\},B\setminus\{i^\star(j)\},G')] \label{eq:taking_out_j_same_mixed}
    \end{eqnarray}
    
    Combining Equations~\eqref{eq:bound1_mixed}, ~\eqref{eq:bound2_mixed} and ~\eqref{eq:taking_out_j_same_mixed}
    \begin{eqnarray}
        v_{i^\star(j)j} & \leq & \frac{1}{1-\alpha}E_{G'\sim \mathbf{a}^{\off},\x}[W(S,B,G')] -  \frac{\alpha}{1-\alpha}E_{G'\sim \mathbf{a}^{\off},\x}[W(S\setminus\{j\},B,G')] \nonumber\\
        & - & E_{G'\sim \mathbf{a}^{\off},\x}[W(S\setminus\{j\},B\setminus\{i^\star(j)\},G')]\nonumber \\
        &= & E_{G'\sim \mathbf{a}^{\off},\x}[\frac{1}{1-\alpha}W(S,B,G')-\frac{\alpha}{1-\alpha}W(S\setminus\{j\},B,G')-W(S\setminus\{j\},B\setminus\{i^{\star}(j)\},G')]
        \label{eq:bound3_mixed}
    \end{eqnarray}

    For a fixed realized graph $G'$, the three terms inside the expectation in Eq.~\eqref{eq:bound3_mixed} have been analyzed in Eq.~\eqref{eq:bound3}. Directly taking the results there we have a similar form to Eq. ~\eqref{eq:combine_three_terms_right}
    \begin{eqnarray*}
        v_{{i^\star}(j)j} &\leq &  E_{G'\sim \mathbf{a}^{\off},\x}[\frac{1}{1-\alpha}\cdot v_{i^{G'}(j)j} + v_{i^\star(j)t(j)}]
    \end{eqnarray*}
    Again $t(j)$ is the seller who transacts with $i^{\star}(j)$ in $G'$, also named as \textit{ the twin of $j$ }. 

    Summing over all sellers $j$, we get
    \begin{eqnarray*}
        W^\star =  \sum_j v_{{i^\star}(j)j} & \leq & \sum_j E_{G'\sim \mathbf{a}^{\off},\x}\left(\frac{1}{1-\alpha}\cdot v_{i^{G'}(j)j} + v_{i^\star(j)t(j)}\right)\\ 
        & = &  E_{G'\sim \mathbf{a}^{\off},\x} [\frac{1}{1-\alpha}\cdot\sum_j v_{i^{G'}(j)j} + \sum_j v_{i^\star(j)t(j)}]\\
        & \leq & E_{G'\sim \mathbf{a}^{\off},\x} [\frac{1}{1-\alpha} W(S,B,G') + \sum_j v_{i^{G'}(j)j}]\\
        & = & \frac{2-\alpha}{1-\alpha} E_{G'\sim \mathbf{a}^{\off},\x}[W(S,B,G')],
    \end{eqnarray*}
    where the last inequality follows because each seller $j$ has a distinct twin $t(j)$, so we sum over distinct edges of $G'$.
\end{proof}

\section{Generalizations}\label{app:generalizations}
\subsection{Market with Multiple Platforms}\label{app:multiple_platforms}
\MultiPlatform*
\begin{proof}[proof sketch]
    The proof follows from the same argument as that of Theorem~\ref{thm:pure_poa} and Theorem~\ref{thm:mixed_poa}. For the Price of Anarchy of pure Platform Equilibrium, we again first prove if no seller chooses to join any platform, welfare is guaranteed by, $$ \frac{W^\star}{W(S,B,G)}\leq\frac{2-\alpha}{1-\alpha}.$$
   
    Let $i^{\star}(j)$ be the buyer that $j$ transacts with in ideal matching $W^\star$, $f^\star(j)$ be the platform that $j$ pays in ideal matching, and $G^{\star}(j)$ be the network when only seller $j$ joins platform $f^\star(j)$. Then inequalities~\ref{eq:p_on} to \ref{eq:bound2} still hold after changing $W(S,B,G(j))$ for $(W,S,B,G^\star(j))$, and inequalities~\ref{eq:bound3} to \ref{eq:combine_three_terms_right} remain the same. Now consider any pure equilibrium where a set of sellers $P$ join some platforms. Denote the resulting graph by $G(P)$.
    Again we can construct another market where the buyers have the same valuations as the original market but the initial network is $G'=G(P)$. No seller joining is a pure Platform Equilibrium because otherwise some sellers would have deviated in the pure equilibrium in the original market and joined some other platforms. So $\frac{W^\star}{W(S,B,G(P))}=\frac{W^\star}{W(S,B,G')}\leq \frac{2-\alpha}{1-\alpha}$. The proof for the Price of Anarchy of mixed Platform Equilibrium follows similarly.
\end{proof}

\subsection{Sellers with Production Costs} \label{app:seller_with_cost}
In this section, we extend our results to sellers $j\in S=\{1,2,...,m\}$ with production cost $c_j\geq 0$. Sellers with a cost too high barely trade, and do not contribute to social welfare. Therefore, for the discussion we assume all sellers transact. This can be modeled as each seller $j$ being connected off-platform to a dummy buyer $i_j$ that only values seller $j$ and dislikes other sellers: 
$$v_{i_j,j'}= 
\begin{cases}
    c_j,     & \text{if } j'=j\\
    -\infty,        & \text{otherwise}
\end{cases}.$$

For any market $M=(S,B,\mathbf{v},\mathbf{c},G)$ with cost $\mathbf{c}$, define a many-to-one mapping of markets $T(M)=M'$, where $M'$ is a new market with zero seller cost and buyer valuations 
$$v'_{ij}= 
\begin{cases}
    v_{ij}-c_j,     & \text{if } v_{ij}\geq c_j\\
    -\infty,        & \text{otherwise}
\end{cases}$$

\begin{lemma}
\label{lem:ce_with_cost_exist}
    If $(\mathbf{p},\mathbf{a})$ is a competitive equilibrium in market $M$, define $p_j'=\max\{p_j-c_j,0\}$, then $(\mathbf{p}',\mathbf{a})$ is a competitive equilibrium in market $M'$. If $(\mathbf{p}',\mathbf{a})$ is a competitive equilibrium in market $M'$, define $p_j=p'_j+c_j$, then $(\mathbf{p},\mathbf{a})$ is a competitive equilibrium in market $M$.
\end{lemma}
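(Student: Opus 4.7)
The plan is to verify each of the competitive-equilibrium conditions termwise in both directions of the claimed correspondence, exploiting the fact that the allocation $\mathbf{a}$ is preserved by the mapping. Since the link, unit-demand, and unit-supply constraints only involve $\mathbf{a}$ and $G$, they transfer automatically, so all the real work is in the buyer-preference and price-non-negativity conditions.

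The pivot of the argument is a single identity on transacting pairs. Whenever $a_{ij}=1$, buyer individual rationality in $M$ forces $v_{ij}\geq p_j$, and seller individual rationality forces $p_j\geq c_j$; hence $v'_{ij}=v_{ij}-c_j$ (not $-\infty$) and $p_j-c_j\geq 0$ (so the $\max$ in the forward mapping is inactive), which gives
\begin{equation*}
v'_{ij}-p'_j \;=\; (v_{ij}-c_j)-(p_j-c_j) \;=\; v_{ij}-p_j.
\end{equation*}
Consequently the buyer's non-negative-utility condition passes across the mapping for free, and the same identity holds in the reverse direction since there $p_j=p'_j+c_j$.

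For the "most preferred outcome" condition, I would split on the alternative good $j'$. If $v_{ij'}\geq c_{j'}$, so $v'_{ij'}=v_{ij'}-c_{j'}$, then a short calculation separating $p_{j'}\geq c_{j'}$ from $p_{j'}<c_{j'}$ reduces the $M'$-inequality for $(i,j')$ to the $M$-inequality, and vice versa. If instead $v_{ij'}<c_{j'}$, the forward direction is trivial because $v'_{ij'}-p'_{j'}=-\infty$; in the reverse direction one argues directly that $v_{ij'}-p_{j'} = v_{ij'}-c_{j'}-p'_{j'} < -p'_{j'} \leq 0 \leq v_{ij}-p_j$, using non-negative buyer utility in $M'$. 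The unassigned-good condition is immediate in the forward direction ($p_j=0$ gives $p'_j=\max\{-c_j,0\}=0$), and in the reverse direction it gives $p_j=c_j$, matching the natural reservation-price convention for an unassigned seller in the cost-setting CE definition.

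The main obstacle, and the place that will require the most care in writing up, is the case $v_{ij'}<c_{j'}$ in the reverse direction, where the $-\infty$ shortcut is unavailable and one must route the $M$-preferred-outcome inequality through non-negativity of buyer utility in $M'$. Once that is handled, every remaining condition collapses to the one-line identity above plus routine casework on whether prices exceed costs.
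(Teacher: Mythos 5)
Your overall route is the same as the paper's: verify the conditions of Definition~2.1 termwise, use the identity $v'_{ij}-p'_j=v_{ij}-p_j$ on transacting pairs, and do casework on whether $v_{ij'}\gtrless c_{j'}$ and $p_{j'}\gtrless c_{j'}$ for the ``most preferred outcome'' condition. Your handling of the hard case ($v_{ij'}<c_{j'}$ in the $M'\to M$ direction, routed through non-negativity of buyer utility in $M'$) is exactly the paper's chain $v_{ij'}-p_{j'}<c_{j'}-p_{j'}\leq 0\leq v_{ij}-p_j$.

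There is one gap, in two places, and it is the same gap both times: you appeal to seller-side conditions that are not part of the paper's competitive-equilibrium definition. First, you write that ``seller individual rationality forces $p_j\geq c_j$'' for transacting pairs; Definition~2.1 contains no such condition (it has only buyer-side requirements plus ``unassigned goods have price $0$''), so this step is not licensed as stated. Second, for an unassigned good in $M'$ your reverse map gives $p_j=c_j$, which you reconcile with a ``reservation-price convention'' --- but the paper's definition literally requires unassigned goods to have price $0$, so for $c_j>0$ your $(\mathbf{p},\mathbf{a})$ would violate it. The paper closes both holes with the dummy-buyer encoding introduced just before the lemma: each seller $j$ has a dummy buyer $i_j$ with $v_{i_j,j}=c_j$ and $v_{i_j,j'}=-\infty$ otherwise. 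Then (i) if a real buyer gets item $j$ at a price $p_j<c_j$, the dummy buyer $i_j$ would envy, so envy-freeness of $i_j$ yields $p_j\geq c_j$; and (ii) a seller who does not trade with a real buyer is, WLOG, assigned to its dummy buyer (who is indifferent at $p'_j=0$ in $M'$ and at $p_j=c_j$ in $M$), so no good is ever formally unassigned and the price-$0$ condition is vacuous. You also need to check the dummy buyers' own equilibrium conditions, which is immediate since their values for all other items are $-\infty$, but it should be said. With those substitutions --- deriving $p_j\geq c_j$ from $i_j$'s envy-freeness rather than asserting seller IR, and treating ``unassigned'' sellers as matched to their dummies --- your casework goes through and coincides with the paper's proof.
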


\begin{proof}
We first prove if $(\mathbf{p}',\mathbf{a})$ is a competitive equilibrium in $M'$, then $(\mathbf{p},\mathbf{a})$ is a competitive equilibrium in $M$. 
The first three conditions for competitive equilibrium in Definition~\ref{def:comp_eq} are easy to check. Without loss of generality each seller $j$ in $M'$ transacts. Because otherwise $p_j=0$ equals to the dummy buyers' valuation in $M'$: $v'_{i_j,j}=0$. So the last condition is satisfied. We now show every buyer gets their most preferred outcome with $(\mathbf{p},\mathbf{a})$. When buyer $i$ does not obtain an item, $\forall j, p'_j\geq v'_{ij}$. If $v_{ij}\geq c_j$ expanding we have $p_j\geq v_{ij}$; If $v_{ij}< c_j$ with prices being non-negative $p_j=p'_j+c_j\geq c_j > v_{ij}$. In either case buyer $i$'s best choice in $M$ is not obtaining an item. When buyer $i$ does obtain an item $j^{\star}$ in $M'$, $v_{ij^\star}\geq c_{j^\star}$, and $\forall j, v'_{ij^\star}-p'_{j^\star}=v_{ij^\star}-p_{j^\star}\geq v'_{ij}-p'_j$. If $v_{ij}\geq c_j$ then $v_{ij^\star}-p_j^\star\geq v_{ij}-p_j$ and we are done; otherwise $v_{ij}-p_j<c_j-p_j \leq 0 \leq v_{ij^\star}-p_j^\star$. So indeed buyer $i$'s favorite item in $M$ is $j^\star$.

Now prove the other direction: if $(\mathbf{p},\mathbf{a})$ is a competitive equilibrium in $M$, $(\mathbf{p}',\mathbf{a})$ is a competitive equilibrium in $M'$. Again the first three criteria for Definition~\ref{def:comp_eq} are easy to verify. With dummy buyers in $M$, all items are sold in $\mathbf{a}$ so the last criteria is satisfied.
We now verify each buyer gets their most preferred outcome in $M'$. When buyer $i$ does not obtain an item, $\forall j, v_{ij}\leq p_j$. If $v_{ij}< c_j$ then $v'_{ij}=-\infty <0 \leq p'_j$; if $v_{ij}\geq c_j$ then expanding $v'_{ij}\leq p'_j$. In either case buyer $i$'s best choice in $M'$ is not obtaining an item.
If a dummy buyer $i_j$ acquires an item $j^\star=j$ in $M$. Then $v_{i_j,j^{\star}}=c_{j^\star}\geq p_{j^\star}$. With the same allocation $v'_{ij^\star}-p'_{j^\star}=0-\max\{0,p_{j^\star}-c_{j^\star}\}=0$. But for any other seller item $j'$, $v'_{i_j,j'}=-\infty$ so the dummy buyer strictly does not want other items. If buyer $i$ that is not a dummy buyer obtains an item $j^\star$ in $M$, then $p_{j^\star}\geq c_{j^\star}$ because otherwise dummy buyer $i_{j^\star}$ does not satisfy the demand. 
By non-negative of utility and price $v_{ij^\star}\geq p_{j^\star} \geq c_{j^\star}$ so $\forall j, v'_{ij^\star}-p'_{j^\star}=v_{ij^\star}-p_{j^\star}\geq v_{ij}-p_{j}$. 
If $v_{ij}\geq c_j$ and $p_j\geq c_j$ then we have $v'_{ij} - p'_j =v_{ij}-c_j-\max\{0,p_j-c_j\}=v_{ij}-p_j$ and we are done, if $v_{ij}\geq c_j$ and $p_j< c_j$ then we have  $v'_{ij} - p'_j =v_{ij}-c_j< v_{ij}-p_j$ and we are again done; 
if $v_{ij}< c_j$ then $v'_{ij}=-\infty$ and surely in $M'$ buyer $i$ does not demand $j$.%: $v'_{ij^\star}-p'_{j^\star}=v_{ij^\star}-p_{j^\star}\geq 0 > v'_{ij}-p'_j$.
\end{proof}

The above lemma gives a one-to-one correspondence between competitive equilibrium in $M$ and $M'$. This allows us to analyze social welfare in $M$. The social welfare of an allocation $\mathbf{a}$ in $M$ is equal to valuation minus cost in $M'$, that is, $$\sum_{ij}a_{ij}(v_{ij}-c_{j}).$$  By applying the first welfare theorem for the market without costs, we get the following. 
\begin{corollary}
\label{cor:first_welfare_cost}
    In a competitive equilibrium for a market with production costs, the social welfare is maximized with respect to the set of allocations that respect the transaction constraints posed by $G$.
\end{corollary}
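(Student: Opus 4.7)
The plan is to reduce the statement to the First Welfare Theorem (Theorem~\ref{thm:first_welfare}) for the cost-free market $M'$, using the one-to-one correspondence between competitive equilibria established in Lemma~\ref{lem:ce_with_cost_exist}. The key observation is that the transformation $T(M)=M'$ converts the problem of maximizing social welfare with costs into the standard problem of maximizing total (post-subsidy) valuations, so all that must be verified is that the two welfare notions coincide on feasible allocations.

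First I would take a competitive equilibrium $(\mathbf{p},\mathbf{a})$ in $M$ and invoke Lemma~\ref{lem:ce_with_cost_exist} to obtain the associated competitive equilibrium $(\mathbf{p}',\mathbf{a})$ in $M'$, where $p'_j=\max\{p_j-c_j,0\}$ and the allocation is preserved. Since $M'$ is a unit-demand market without costs, Theorem~\ref{thm:first_welfare} directly yields that $\mathbf{a}$ maximizes social welfare in $M'$ among all allocations respecting the graph $G$ (recall that the graph in $M$ and the graph in $M'$ are identical, with the dummy buyer edges present in both).

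Next, I would translate this optimality statement back to $M$ by comparing the two social welfare functions. For any feasible allocation $\mathbf{b}$ respecting $G$, the social welfare in $M$ is $\sum_{ij} b_{ij}(v_{ij}-c_j)$, counting the dummy-buyer matches as contributing $c_j - c_j = 0$. The social welfare in $M'$ is $\sum_{ij} b_{ij} v'_{ij}$; for a feasible allocation that does not pair a buyer with a seller of $-\infty$ value (which no welfare-maximizing or CE allocation would do, since prices are non-negative and real buyers with $v_{ij}<c_j$ would strictly prefer nothing), we have $v'_{ij}=v_{ij}-c_j$ for real buyers and $v'_{i_j,j}=0$ for dummy buyers. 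Hence the two social welfares coincide on the relevant feasible allocations.

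The main step requiring care is handling the $-\infty$ valuations: one needs to argue that the social-welfare maximizer in $M$ is obtained by some allocation that never matches a real buyer $i$ to seller $j$ when $v_{ij}<c_j$, which is immediate since replacing such a pairing by leaving $i$ unmatched (or matching $j$ to its dummy) weakly increases the welfare $\sum_{ij}a_{ij}(v_{ij}-c_j)$. Once this is in hand, the conclusion follows: if some allocation $\mathbf{b}$ achieved strictly higher welfare in $M$ than $\mathbf{a}$, then (after possibly cleaning up unprofitable matches) it would achieve strictly higher welfare in $M'$ as well, contradicting the First Welfare Theorem applied to $(\mathbf{p}',\mathbf{a})$.
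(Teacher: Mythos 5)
Your proposal is correct and follows essentially the same route as the paper's proof: both pass to the cost-free market $M'$ via Lemma~\ref{lem:ce_with_cost_exist}, invoke the First Welfare Theorem there, and then observe that the $M$-welfare and $M'$-welfare coincide on allocations that never transact at a loss, with loss-making transactions handled by noting they can only decrease welfare. The paper's version of your ``cleaning up'' step is the observation that non-negative buyer utility forces $v_{ij}\geq c_j$ whenever $a_{ij}=1$, which is exactly the argument you give.
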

\begin{proof}
    By First Welfare Theorem~\ref{thm:first_welfare} in market without cost, the competitive allocation $\mathbf{a}$ in market $M'$ satisfies $\forall \mathbf{a}', \sum_{ij}a_{ij}v'_{ij}\geq \sum_{ij}a'_{ij}v'_{ij}$. If $a_{ij}=1$, by non-negative utility $v_{ij}\geq c_j$. So $\forall \mathbf{a'}, \sum_{ij}a_{ij}(v_{ij}-c_j)\geq \sum_{ij}a'_{ij}v'_{ij}$. For allocations $\mathbf{a'}$ that let buyers purchase despite $v_{ij}<c_j$, the social welfare is smaller than simply letting this transaction opportunity go. So we focus on $\mathbf{a'}$ that only transacts when $v_{ij}\geq c_j$. For these $\mathbf{a'}$,  $\sum_{ij}a_{ij}(v_{ij}-c_j)\geq \sum_{ij}a'_{ij}(v_{ij}-c_j)$. As social welfare incorporates cost in $M$, $\mathbf{a}$ maximizes social welfare.
\end{proof}

Let $W(S,B,\mathbf{v},\mathbf{c},G)$ denote optimal welfare in $M$. For any competitive equilibrium allocation $\mathbf{a}$, it maximizes social welfare in both $M$ and $M'$. By the construction of buyer valuation in $M'$, we observe the following.

\begin{corollary}
\label{cor:equal_welfare}
    Optimal welfare in $M$ and $M'$ are equal. $W(S,B,\mathbf{v},\mathbf{c},G)=W(S,B,\mathbf{v'},G)$.\label{thm:equal_welfare}
\end{corollary}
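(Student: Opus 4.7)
The plan is to reduce the claim to a single observation: the feasibility constraints are identical in $M$ and $M'$ (same buyer set, seller set, underlying network $G$, and unit-demand/unit-supply restrictions), so the two optima are taken over the same feasible set, and it suffices to compare the per-allocation welfare on a sufficiently large subclass of allocations. The approach has two steps, a restriction argument and then an identification of objectives on the restricted class.

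First, I would restrict attention to allocations that only transact on edges $(i,j)$ with $v_{ij}\geq c_j$, and argue that each market admits an optimum in this restricted class. In $M'$, any allocation that transacts on an edge with $v_{ij}<c_j$ has welfare $-\infty$ by the definition $v'_{ij}=-\infty$ on such edges, so every welfare-maximizer in $M'$ automatically lies in the restricted class. In $M$, the social welfare is $\sum_{ij}a_{ij}(v_{ij}-c_j)$; any transaction with $v_{ij}<c_j$ contributes a strictly negative term $v_{ij}-c_j$, so deleting it yields another feasible allocation with weakly larger welfare. Hence there is a welfare-maximizer in $M$ that also lies in the restricted class.

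Second, on this restricted class the two welfare functions coincide edge by edge. Whenever $a_{ij}=1$ and $v_{ij}\geq c_j$, the definition of $v'$ gives $v'_{ij}=v_{ij}-c_j$, so
\[
\sum_{ij} a_{ij}\, v'_{ij} \;=\; \sum_{ij} a_{ij}\,(v_{ij}-c_j).
\]
Taking the maximum of both sides over the common restricted set of feasible allocations yields $W(S,B,\mathbf{v}',G)=W(S,B,\mathbf{v},\mathbf{c},G)$, which is exactly the statement of the corollary.

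The only potential obstacle is a minor bookkeeping subtlety concerning the dummy buyers $i_j$ introduced just before the corollary to model the cost market via the standard competitive equilibrium definition. If one insists on including them in the comparison, I would observe that each dummy $i_j$ can transact only with seller $j$ at value $c_j$ and contributes $v_{i_j,j}-c_j = 0$ to social welfare in $M$; the natural counterpart in $M'$ has $v'_{i_j,j}=0$ and contributes zero as well. Thus dummy transactions do not affect either optimum and the edge-by-edge identification above goes through unchanged. Alternatively, one may simply invoke the price correspondence of Lemma~\ref{lem:ce_with_cost_exist} to produce a \emph{common} competitive equilibrium allocation $\mathbf{a}$, and apply Corollary~\ref{cor:first_welfare_cost} in $M$ and the First Welfare Theorem in $M'$ to conclude that $\mathbf{a}$ attains the optimum in both markets, with identical welfare value by the calculation above.
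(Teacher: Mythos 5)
Your proof is correct, but it takes a genuinely different route from the paper's. The paper establishes the corollary through the competitive-equilibrium machinery: it uses the correspondence of Lemma~\ref{lem:ce_with_cost_exist} to obtain a single allocation $\mathbf{a}$ that is a competitive equilibrium allocation in both $M$ and $M'$, applies the First Welfare Theorem (Theorem~\ref{thm:first_welfare}) in $M'$ and Corollary~\ref{cor:first_welfare_cost} in $M$ to conclude that $\mathbf{a}$ attains the optimum in both markets, and then notes that optimality forces $v'_{ij}\geq 0$, hence $v'_{ij}=v_{ij}-c_j$, on every transacting edge, so the two welfare values coincide. Your primary argument bypasses the equilibrium machinery entirely: you observe that the two maximizations range over the same feasible set, that each admits an optimum supported on edges with $v_{ij}\geq c_j$ (in $M'$ because any other allocation has value $-\infty$, in $M$ because deleting a negative-surplus edge weakly improves welfare), and that on this restricted subclass the two objectives agree term by term. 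This is more elementary and self-contained---it does not rely on the existence of competitive equilibria or on either welfare theorem---whereas the paper's version is shorter given the machinery it has already built. Your handling of the dummy buyers (zero contribution in both markets) is correct, and your closing alternative is essentially the paper's own proof.
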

\begin{proof}
    Consider a competitive equilibrium allocation $\mathbf{a}$ for both market $M$ and $M'$. Optimal social welfare in $M'$ equals to $W(S,B,\mathbf{v}',G)=\sum_{ij}a_{ij}v'_{ij}=\sum_{ij}a_{ij}(v_{ij}-c_j)=W(S,B,\mathbf{v},\mathbf{c},G)$. The second inequality is because the First Welfare Theorem~\ref{thm:first_welfare} requires $\mathbf{a}$ to maximize welfare: whenever $a_{ij}=1, v'_{ij}\geq 0$.
\end{proof}

The two corollaries above are implications of Lemma~\ref{lem:ce_with_cost_exist} on the allocation side of competitive equilibrium. On the price side, though a competitive price $p_j$ in $M$ associates with $p'_j=\max\{p_j-c,0\}$, the maximum competitive price correspondence is simpler.

\begin{corollary}
\label{cor:max_price_cost}
    Given a of price $\mathbf{p}$, define $p'_j=p_j-c_j$. $\mathbf{p}$ is the max competitive price for $M$ if and only if $\mathbf{p}'$ is the max competitive price for $M'$.
\end{corollary}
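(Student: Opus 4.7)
The plan is to leverage Lemma~\ref{lem:ce_with_cost_exist}, which establishes a correspondence between competitive equilibrium prices in $M$ and $M'$: any CE price $\mathbf{p}$ in $M$ yields a CE price $\hat{\mathbf{p}}$ in $M'$ via $\hat{p}_j = \max\{p_j - c_j, 0\}$, and any CE price $\mathbf{p}'$ in $M'$ yields a CE price $\mathbf{p}$ in $M$ via $p_j = p'_j + c_j$. The task therefore reduces to two subgoals: (i) show that for the maximum CE price in $M$ the truncation in the forward map is inactive, i.e., $\overline{p}_j \geq c_j$, so that the lemma's map simplifies to $\hat{p}_j = p_j - c_j$, matching the corollary's $p'_j$; and (ii) show that the resulting bijection preserves maximality in both directions.

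First I would establish subgoal (i). By Theorem~\ref{thm: price_lattice}, the maximum CE price vector $\overline{\mathbf{p}}'$ in $M'$ exists and satisfies $\overline{p}'_j \geq 0$. Applying the reverse direction of Lemma~\ref{lem:ce_with_cost_exist} to $\overline{\mathbf{p}}'$ yields a CE price vector $\mathbf{q}$ in $M$ with $q_j = \overline{p}'_j + c_j \geq c_j$. Since the maximum CE price vector $\overline{\mathbf{p}}$ in $M$ coordinate-wise dominates every CE price vector in $M$, I conclude $\overline{p}_j \geq q_j \geq c_j$ for every $j$. Hence the forward map of the lemma sends $\overline{\mathbf{p}}$ to the vector with coordinates $\overline{p}_j - c_j$, without any truncation.

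Next I would establish subgoal (ii) by a short contradiction argument in each direction. Forward: assume $\mathbf{p}$ is the max CE in $M$ and set $p'_j := p_j - c_j$, which by (i) is non-negative; Lemma~\ref{lem:ce_with_cost_exist} then certifies that $\mathbf{p}'$ is a CE in $M'$. If some other CE $\mathbf{r}'$ in $M'$ satisfied $r'_j > p'_j$ for some $j$, applying the reverse direction of the lemma would yield a CE price vector $\mathbf{r}$ in $M$ with $r_j = r'_j + c_j > p_j$, contradicting the maximality of $\mathbf{p}$; so $\mathbf{p}'$ is the max CE in $M'$. Backward: starting from the max CE $\mathbf{p}'$ in $M'$ and defining $p_j := p'_j + c_j$, the lemma gives a CE in $M$, and maximality follows by the symmetric contradiction, using the forward direction of the lemma to transfer any hypothetical strictly-larger CE in $M$ back to $M'$.

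The main obstacle is subgoal (i): without establishing $\overline{p}_j \geq c_j$, the clean linear correspondence $p'_j = p_j - c_j$ stated in Corollary~\ref{cor:max_price_cost} would have to be replaced by the truncated form $\max\{p_j - c_j, 0\}$ inherited from Lemma~\ref{lem:ce_with_cost_exist}. Once this non-negativity is in place, the transfer of maximality in both directions is routine, relying only on Lemma~\ref{lem:ce_with_cost_exist} and the lattice structure of competitive equilibrium prices.
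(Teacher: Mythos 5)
Your proposal is correct and follows the same overall architecture as the paper's proof: first show that the truncation in the map $p_j \mapsto \max\{p_j-c_j,0\}$ is inactive at the maximum competitive price, then transfer maximality in both directions by the contradiction argument through Lemma~\ref{lem:ce_with_cost_exist}. The one place you genuinely diverge is in establishing $\overline{p}_j \ge c_j$. The paper argues this directly from the market structure: each seller $j$ has a dummy buyer $i_j$ with $v_{i_j,j}=c_j$, so if $j$ is matched to $i_j$ the price can be raised to $c_j$, and if $j$ is matched to any other buyer then $p_j \ge c_j$ is forced because otherwise $i_j$ would envy. You instead pull the maximum competitive price $\overline{\mathbf{p}}'$ of $M'$ back through the reverse map of Lemma~\ref{lem:ce_with_cost_exist} to obtain a competitive price $\mathbf{q}$ in $M$ with $q_j = \overline{p}'_j + c_j \ge c_j$, and then invoke the coordinate-wise domination of $\overline{\mathbf{p}}$ guaranteed by Theorem~\ref{thm: price_lattice}. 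Both are valid; the paper's argument is self-contained at the level of the equilibrium conditions and makes explicit \emph{why} costs act as a price floor (the dummy buyers), while yours is shorter and trades that insight for a dependence on the lattice structure and on the reverse direction of the lemma already being in hand. Your maximality-transfer step, including the observation that $r_j > p'_j + c_j$ forces the truncation to be inactive on that coordinate in the backward direction, matches the paper's closing argument.
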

\begin{proof}
    If $\mathbf{p}$ is the maximum competitive price in $M$, $p_j\geq c_j$: if $j$ is matched to a dummy buyer $i_j$ in equilibrium, the price can always increase until $p_j=c_j$; if $j$ is matched to other buyers, then $p_j>c_j$ because otherwise $i_j$ envies. By Lemma~\ref{lem:ce_with_cost_exist}, $p'_j=p_j-c_j$ is a competitive price in $M'$. If in $M'$ there is a higher competitive price $\mathbf{p}''$, by Lemma~\ref{lem:ce_with_cost_exist} in market $M$ there must be a competitive price $\mathbf{p}''+\mathbf{c}$ higher than $\mathbf{p}$, contradict. So $\mathbf{p}'$ is indeed the maximum competitive price. The other direction works similarly.
\end{proof}

We are now ready to inspect the effect of cost on sellers' decisions to join platform with maximum competitive price. If platform would charge a transaction fee as a percentage of sellers' utility, $p_j-c_j$, all our previous results without cost immediately hold. As the platform cannot directly observer sellers' costs, the transaction fee is a percentage sellers' price, which distorts sellers' incentives to join. At fee $\alpha$, seller $j$ in $M$ joins the platform if $(1-\alpha)p_j^{\on}-c_j \geq p_j^{\off}-c_j\Leftrightarrow (1-\alpha)(p_j^{'\on}+c_j) \geq p_j^{'\off}+c_j \Leftrightarrow (1-\alpha)p_j^{'\on}\geq p_j^{'\off}+\alpha c_j$. The higher the cost, the less likely seller $j$ joins the platform. We generalize our result in Section~\ref{sec:poa-regulated} to settings with costs.

\purePoACost*
\begin{proof}
    The proof logic is similar to that in section~\ref{sec:poa-regulated}, where we first prove for pure equilibrium then extend it to mixed equilibrium. If no seller joins the market at $\alpha$, then $\forall j\in P$
    $$p_{j}^{'\off} +\alpha c_j \geq (1-\alpha)p_{j}^{'\on}$$ Expanding the price 
    \begin{eqnarray*}
        W(S,B,\mathbf{v}',G(j)) - W(S,B,\mathbf{v}',G) & \leq & p^{'\on}_j-p^{'\off}_j \leq  \alpha[p^{'\on}_j+c_j] \\
        & \leq & \frac{\alpha}{1-\alpha}\cdot p^{'\off}_j + \frac{\alpha}{1-\alpha}c_j\\
        & \leq & \frac{\alpha}{1-\alpha}\left(W(S,B,\mathbf{v}',G) - W(S\setminus\{j\},B,\mathbf{v}',G)\right)+\frac{\alpha}{1-\alpha}c_j
    \end{eqnarray*}

    Rearranging gives 
    \begin{eqnarray*}
    W(S,B,\mathbf{v}',G(j)) \leq
        \frac{1}{1-\alpha}W(S,B,\mathbf{v}',G) - \frac{\alpha}{1-\alpha}W(S\setminus\{j\},B,\mathbf{v}',G)+\frac{\alpha}{1-\alpha}c_j
    \end{eqnarray*}

    Let $i^\star(j)$ be the buyer matched to $j$ in the ideal matching $W^\star(S,B,\mathbf{v},\mathbf{c},G)$. By Lemma~\ref{lem:ce_with_cost_exist}, $i^\star(j)$ is the buyer matched to $j$ in $W^\star(S,B,\mathbf{v}',G)$. Since matching $i^\star(j)$ to $j$ is also one of the options in $(S,B,\mathbf{v}',G(j))$
    \begin{eqnarray*}
        W(S,B,\mathbf{v}',G(j)) & \geq& v'_{i^\star(j)j} + W(S\setminus\{ j\}, B\setminus\{i^\star(j)\}, \mathbf{v}',G)
    \end{eqnarray*}

    Combining the two above inequalities
    \begin{eqnarray}
        v'_{i^\star(j)j} \leq \frac{1}{1-\alpha}W(S,B,\mathbf{v}',G)-\frac{\alpha}{1-\alpha}W(S\setminus\{j\},B,\mathbf{v}',G)-W(S\setminus\{j\},B\setminus\{i^\star(j)\},\mathbf{v}',G)+\frac{\alpha}{1-\alpha}c_j \label{eq:bound3_cost}
    \end{eqnarray}

    We wish to relate the right-hand side of Eq.~\eqref{eq:bound3_cost} to terms that relate to $W(S,B, \mathbf{v}',G)$ and $W^\star$.
 Let $i^G(j)$ be the buyer matched to $j$ in $W(S,B, \mathbf{v}',G)$ and $W(S,B,\mathbf{v},\mathbf{c},G)$. First, consider $W(S\setminus\{j\},B,\mathbf{v}',G)$. By 
definition,
    \begin{eqnarray*}
        W(S,B,
        \mathbf{v}',G) = W(S\setminus\{j\}, B\setminus\{i^G(j)\},\mathbf{v}', G) + v'_{i^G(j)j}. 
    \end{eqnarray*}
    Thus, we have,
    \begin{eqnarray}
        W(S\setminus\{j\},B, \mathbf{v}',G) \geq W(S\setminus\{j\}, B\setminus\{i^G(j)\}, \mathbf{v}', G) = W(S,B,\mathbf{v}',G)-v'_{i^G(j)j}.\label{eq:bound4_cost}
    \end{eqnarray}

    As for $W(S\setminus\{j\},B\setminus\{i^\star(j)\}, \mathbf{v}',G)$, seller  $j$ is matched to $i^G(j)$ in $G$, while $i^\star(j)$ is matched to some potentially different vertex in $G$, which we call \textit{the twin of $j$}
 and denote by $t(j)$. We  have the following inequality,
    \begin{eqnarray*}
        W(S,B,\mathbf{v}',G) &\leq & W(S\setminus\{j,t(j)\},B\setminus\{i^G(j),i^\star(j)\},\mathbf{v}',G) + v'_{i^G(j)j} + v'_{i^\star(j)t(j)}\nonumber \\
        &\leq & W(S\setminus\{j\},B\setminus\{i^\star(j)\},\mathbf{v}',G) + v'_{i^G(j)j} + v'_{i^\star(j)t(j)}, 
    \end{eqnarray*}
    where the first inequality is an equality if $j\neq t(j)$. Rearranging gives
    \begin{eqnarray}
        W(S\setminus\{j\},B\setminus\{i^\star(j)\},\mathbf{v}',G) \geq W(S,B,\mathbf{v}',G) - (v'_{i^G(j)j} + v'_{i^\star(j)t(j)}).\label{eq:bound5_cost}
    \end{eqnarray}

    Combining Equations~\eqref{eq:bound3_cost},~\eqref{eq:bound4_cost}, and~\eqref{eq:bound5_cost}, we get
    \begin{eqnarray}
        v'_{{i^\star}(j)j} &\leq& \frac{1}{1-\alpha}W(S,B,\mathbf{v}',G)-\frac{\alpha}{1-\alpha}\left(W(S,B,\mathbf{v}',G)-v'_{i^G(j)j}\right)\nonumber\\  &-& \left(W(S,B,\mathbf{v}',G) - (v'_{i^G(j)j} + v'_{i^\star(j)t(j)})\right)+\frac{\alpha}{1-\alpha}c_j \nonumber \\
        & = & \frac{1}{1-\alpha}\cdot v'_{i^G(j)j} + v'_{i^\star(j)t(j)} + \frac{\alpha}{1-\alpha}c_j \nonumber%\label{eq:combine_three_terms_right_cost}
    \end{eqnarray}

    Summing over all sellers $j$, we have
    \begin{eqnarray}
        W^\star(S,B,\mathbf{v},\mathbf{c},G)=W^\star(S,B,\mathbf{v}',G)   
        = \sum_j v'_{{i^\star}(j)j} & \leq & \sum_j \left(\frac{1}{1-\alpha}\cdot v'_{i^G(j)j} + v'_{i^\star(j)t(j)}+\frac{\alpha}{1-\alpha}c_j \right)\nonumber\\ & = &  \frac{1}{1-\alpha}\cdot\sum_j v'_{i^G(j)j} + \sum_j v'_{i^\star(j)t(j)}+\frac{\alpha}{1-\alpha}\sum_j c_j\nonumber\\
        & \leq & \frac{2-\alpha}{1-\alpha} W(S,B,\mathbf{v}',G)+\frac{\alpha}{1-\alpha}\sum_j c_j\nonumber\\
        & = & \frac{2-\alpha}{1-\alpha} W(S,B,\mathbf{v},\mathbf{c},G)+\frac{\alpha}{1-\alpha}\sum_j c_j \label{eq:poa_cost}
    \end{eqnarray}
    where the first and the last equality follows from Corollary~\ref{cor:equal_welfare}, and the last inequality is because each seller $j$ has a distinct twin $t(j)$, so we sum over distinct edges of $G$. Bring in $\sum_j c_j =\beta W^\star(S,B,\mathbf{v},\mathbf{c},G)$ 
    \begin{eqnarray}
        W(S,B,\mathbf{v},\mathbf{c},G) \geq \frac{1-\alpha-\alpha\beta}{2-\alpha} W^\star(S,B,\mathbf{v},\mathbf{c},G)
    \end{eqnarray}

    Now if a set $P$ sellers join the platform in a pure equilibrium, just like that in the proof for Theorem~\ref{thm:pure_poa}, we can construct another market where the initial network is $G'=G(P)$. For the same transaction fee, no seller joining the platform in $(S,B,\mathbf{v},\mathbf{c},G')$ is an equilibrium. And we can apply inequality~\ref{eq:poa_cost} to $G'$ and achieve the same result.
    
    The extension to mixed equilibrium works similarly to that in the proof for Theorem~\ref{thm:mixed_poa} and is omitted.
\end{proof}

The above theorem implies that at the same transaction fee, the social welfare guarantee (naturally) deteriorates as costs get higher. This was specially relevant during the COVID-19 crisis, where demand and supply surged on digital platforms \citep{raj2020covid} and the cost of production grew \citep{felix2020us}.
At $30\%$ of transaction fee (Table~\ref{table:platforms and their coommission rate}) and $30\%$ of food cost \footnote{Most profitable restaurants aim for a food cost percentage between 28 and 35\%. This does not include other cost factors such as labor, rentals, etc. Figure is taken in 2023 from Doordash website \url{https://get.doordash.com/en-us/blog/food-cost-percentage}}, the resulting welfare at a  Platform Equilibrium  is at least $23.5\%$ of the ideal welfare. This is smaller than the $41\%$ without costs, further demonstration the need to regulate transaction fee during periods where production costs increase.

\end{document}